\definecolor{blue}{named}{blue}
\definecolor{brown}{named}{brown}
\newcommand{\bitem}{\item[$\bullet$]}
\newcommand{\Start}[1]{\text{\small\sf\em Start$_{#1}$}}
\newcommand{\End}[1]{\text{\small\sf\em End$_{#1}$}}
\newcommand{\GEN}[1]{\text{\sf\em gGen}}
\newcommand{\KILL}[1]{\text{\sf\em gKill}}
\newcommand{\Name}[1]{\text{\sf\em Name}}
\newcommand{\Definition}[1]{\text{\sf\em Definition}}
\newcommand{\adf}{\text{\sf\em ad}\xspace}
\newcommand{\stf}{\text{\sf\em st}\xspace}
\newcommand{\ldf}{\text{\sf\em ld}\xspace}
\newcommand{\cpf}{\text{\sf\em cp}\xspace}
\newcommand{\ffspace}{\text{\sf\bfseries F}\xspace}
\newcommand{\vars}{\text{\sf\em L\/}\xspace}
\newcommand{\ptrs}{\text{\sf\em P\/}\xspace}
\newcommand{\lpred}{\text{\small\sf\em ppred\/}\xspace}
\newcommand{\lsucc}{\text{\small\sf\em psucc\/}\xspace}
\newcommand{\gsucc}{\text{\small\sf\em succ\/}\xspace}
\newcommand{\gpred}{\text{\small\sf\em pred\/}\xspace}
\newcommand{\gsuccsubscript}{\text{\sf\em succ\/}\xspace}
\newcommand{\gpredsubscript}{\text{\sf\em pred\/}\xspace}
\newcommand{\stmt}{\text{\small\sf\em stmt\/}\xspace}
\newcommand{\flow}{\text{$\delta$}\xspace}
\newcommand{\de}[3]{\text{${#1}\!\xrightarrow{#2}\!{#3}$}}
\newcommand{\deh}[4]{\text{$#1\!\xrightarrow{#2,#3}\!{#4}$}}      	
\newcommand{\Out}[1]{\text{\sf\em Out$_{#1}$}}
\newcommand{\In}[1]{\text{\sf\em In$_{#1}$}}
\newcounter{linectr}
\DeclareRobustCommand{\printlinectr}{%
\darkgray\footnotesize\sf\padzeroes[2]{\decimal{linectr}} 
}
\newcommand{\lcb}{{\ttfamily {\char '173}}}
\newcommand{\rcb}{{\ttfamily {\char '175}}}
\newcommand{\OB}{\lcb\hspace*{\BRACEL}}
\newcommand{\CB}{\rcb}
\newcommand{\NL}[1]{\hspace*{#1\TAL}}
\newlength{\codeLineLength}
\DeclareRobustCommand{\codeLine}[3]{\refstepcounter{linectr}\printlinectr
&%
\psframebox[framesep=0,fillstyle=solid,fillcolor=#3,
	linestyle=none]{\makebox[\codeLineLength][l]{%
	\rule[-.3em]{0em}{1.em}{\ttfamily%
	 \NL{#1}{\text{#2}}}}}
\\ }
\newcommand{\codeLineOne}[4]{\setcounter{linectr}{#1}\printlinectr&%
\psframebox[framesep=0,fillstyle=solid,fillcolor=#4,
	linestyle=none]{\makebox[\codeLineLength][l]{%
	\rule[-.3em]{0em}{1.em}{\ttfamily%
	 \NL{#2}{\text{#3}}}}}
\\ }
\newcommand{\codeLineNoNumber}[3]{&%
\psframebox[framesep=0,fillstyle=solid,fillcolor=#3,
	linestyle=none]{\makebox[\codeLineLength][l]{%
	\rule[-.3em]{-2.5em}{1.em}{\ttfamily%
	 \NL{#1}{\text{#2}}}}}
\\ }
\newcommand{\ptsname}{\text{\sf\em PTG\/}\xspace}
\newcommand{\pti}{\text{\sf\em PT\/}\xspace}
\newcommand{\ptv}{\text{$\pti_{\!\slab}$}\xspace}
\newcommand{\ptu}{\text{$\pti_{\!\flab}$}\xspace}
\newcommand{\slc}{\text{\sf\em slc}\xspace}
\newcommand{\slces}{\text{\small\sf\em slces}\xspace}
\newcommand{\mlc}{\text{\small\sf\em mlc}\xspace}
\newcommand{\overbar}[1]{\protect\mkern 2.5mu\protect\overline{\protect\mkern-2.5mu\protect#1\protect\mkern-1.5mu}\protect\mkern 2.5mu}
\newcommand{\amem}{\text{$M$}\xspace}
\newcommand{\cmem}{\text{\protect$\overbar{\protect\protect\rule{0em}{.7em}M}$}\xspace}
\newcommand{\mem}{\text{$M$}\xspace}
\newcommand{\summflow}{\text{$\Delta$}\xspace}
\newcommand{\asummflow}{\text{$\Delta$}\xspace}
\newcommand{\csummflow}{\text{$\overline{\Delta}$}\xspace}
\newcommand{\paths}{\text{{\sf\em Paths}}\xspace}
\newcommand{\hrg}{\text{HRG}\xspace}
\newcommand{\gpg}{\text{GPG}\xspace}
\newcommand{\gpgs}{\text{GPGs}\xspace}
\newcommand{\tscomp}{\text{\small\sf\em TS\/}\xspace}
\newcommand{\sscomp}{\text{\small\sf\em SS\/}\xspace}
\newcommand{\ttcomp}{\text{\small\sf\em TT\/}\xspace}
\newcommand{\stcomp}{\text{\small\sf\em ST\/}\xspace}
\newcommand{\mysection}[1]{%
	\section{#1}
	\setcounter{exmpctr}{0}
}
\newcommand{\boundary}{\text{\sf\em BI\/}\xspace}
\newcommand{\mustedge}[2]{\text{\small\sf\em mustdef$\,({#1},{#2})$}}
\newcommand{\singledef}[2]{\text{{\sf\em singledef}$\,({#1},\; {#2})$}}
\newcommand{\match}{\text{\small\sf\em match}\xspace}
\newcommand{\conskill}{\text{\small\sf\em conskill}\xspace}
\newcommand{\memkill}{\text{\sf\em memkill}\xspace}
\newcommand{\newkill}{\text{\sf\em kill}\xspace}
\newcommand{\Def}{\text{\small\sf\em def}\xspace}
\newcommand{\must}{\text{\small\sf\em must}\xspace}
\newcommand{\may}{\text{\small\sf\em may}\xspace}
\newcommand{\memsup}{\text{\sf\em memsup}\xspace}
\newcommand{\locp}{\text{$\ell_{\prevedge}$}\xspace}
\newcommand{\locn}{\text{$\ell_{\newedge}$}\xspace}
\newcommand{\eval}{\text{\sf\em eval}\xspace}
\newcommand{\dpath}{\text{$\ell_{\prevedge} \twoheadrightarrow \ell_{\newedge}$}\xspace}
\newcommand{\upath}{\text{$\ell_{\newedge} \twoheadrightarrow \ell_{\prevedge}$}\xspace}
\newcommand{\nind}{\text{\sf\em c}\xspace}
\newcommand{\nsrcwithnind}{\text{\sf\em\scriptsize S}\xspace}
\newcommand{\ntgtwithnind}{\text{\sf\em\scriptsize T}\xspace}
\newcommand{\nsrc}{\text{\small\sf\em S}\xspace}
\newcommand{\ntgt}{\text{\small\sf\em T}\xspace}
\newcommand{\newedge}{\text{$\mathsfbfit{n}$}\xspace}
\newcommand{\prevedge}{\text{$\mathsfbfit{p}$}\xspace}
\newcommand{\rededge}{\text{$\mathsfbfit{r}$}\xspace}
\newcommand{\epath}{\text{$\pi$}\xspace}
\newcommand{\order}{\text{\small\sf \em order}\xspace}
\newcommand{\Stmt}{\text{\small\sf \em Stmt}\xspace}
 \newcommand{\labsrcn}{\text{$\nsrcwithnind_{\newedge}^{\,\nind}$}\xspace}
 \newcommand{\labtgtn}{\text{$\ntgtwithnind_{\newedge}^{\,\nind}$}\xspace}
\newcommand{\labsrcp}{\text{$\nsrcwithnind_{\prevedge}^{\,\nind}$}\xspace}
\newcommand{\labtgtp}{\text{$\ntgtwithnind_{\prevedge}^{\,\nind}$}\xspace}
 \newcommand{\labsrcr}{\text{$\nsrcwithnind_{\rededge}^{\,\nind}$}\xspace}
 \newcommand{\labtgtr}{\text{$\ntgtwithnind_{\rededge}^{\,\nind}$}\xspace}
 \newcommand{\labsrce}[1]{\text{$\nsrcwithnind_{#1}^{\,\nind}$}\xspace}
\newcommand{\srcn}{\text{$\nsrc_{\newedge}$}\xspace}
\newcommand{\tgtn}{\text{$\ntgt_{\!\newedge}$}\xspace}
\newcommand{\srcp}{\text{$\nsrc_{\prevedge}$}\xspace}
\newcommand{\tgtp}{\text{$\ntgt_{\!\prevedge}$}\xspace}
\newcommand{\srcr}{\text{$\nsrc_{\rededge}$}\xspace}
\newcommand{\tgtr}{\text{$\ntgt_{\!\rededge}$}\xspace}
\newcommand{\pivot}{\text{$\mathbb{P}$}\xspace}
\newcommand{\pivotwithnind}{\text{\small $\mathbb{P}$}\xspace}
\newcommand{\labpivotp}{\text{$\pivotwithnind_{\prevedge}^{\,\nind}$}\xspace}
\newcommand{\labpivotn}{\text{$\pivotwithnind_{\newedge}^{\,\nind}$}\xspace}
\newcommand{\cop}{\text{$\newedge \circ \prevedge$}\xspace}
\newcommand{\ncop}{\text{$\newedge \!\circ \!\prevedge$}\xspace}
\newcommand{\fe}[4]{\text{$\left(#1,\!\left(#2,\!#3\right)\!,\!#4\right)$}\xspace}
\newcommand{\flab}{\text{$\mathsfbfit{u}$}\xspace}
\newcommand{\slab}{\text{$\mathsfbfit{v}$}\xspace}
\newcommand{\tlab}{\text{$\mathsfbfit{t}$}\xspace}
\newcommand{\summn}{\text{\sf \em s}\xspace}
\newcommand{\nat}{\text{$\mathbb{N}$}\xspace}
\newcommand{\sscomposition}[2]{\text{\small\sf\em SS$^{\,#1}_{#2}$}}
\newcommand{\tscomposition}[2]{\text{\small\sf\em TS$^{\,#1}_{#2}$}}
\newcommand{\compsup}{\text{\small\sf\em callsup}\xspace}
\newcommand{\compkill}{\text{\small\sf\em callkill}\xspace}
\newcounter{exmpctr}
\newcommand{\exmpbeg}{\smallskip\noindent\text{\sc Example \refstepcounter{exmpctr}\thesection.\theexmpctr. }}
\newcommand{\exmpend}{\text{$\Box$}\smallskip}
\newcommand{\WHERE}{\text{$\mathsfbfit{where}$}\xspace}
\newcommand{\ASSUME}{\text{$\mathsfbfit{Assumption:}$}\xspace}
\newcommand{\cmemflab}{\text{$\cmem_{\!\flab,\,\epath}$}\xspace}
\newcommand{\cmemslab}{\text{$\cmem_{\!\slab,\,\epath}$}\xspace}
\newcommand{\amemflab}{\text{$\mem_{\!\flab}$}\xspace}
\newcommand{\amemslab}{\text{$\mem_{\!\slab}$}\xspace}
\newcommand{\coloneq}{\text{$\mathrel{\mathop:}=$}}
\newcommand{\indlev}{\text{\small\sf\em indlev}\xspace}
\newcommand{\sindlev}{\text{\sf\em indlev}\xspace}
\newcommand{\indlist}{\text{\small\sf\em indlist}\xspace}
\newcommand{\sindlist}{\text{\sf\em indlist}\xspace}
\newcommand{\rem}{\text{\small\sf\em remainder}\xspace}
\newcommand{\ssa}{\text{\sf\em ss1}\xspace}
\newcommand{\ssb}{\text{\sf\em ss2}\xspace}
\newcommand{\ssc}{\text{\sf\em ss3}\xspace}
\newcommand{\sta}{\text{\sf\em st1}\xspace}
\newcommand{\stb}{\text{\sf\em st2}\xspace}
\newcommand{\stc}{\text{\sf\em st3}\xspace}
\newcommand{\tsa}{\text{\sf\em ts1}\xspace}
\newcommand{\tsb}{\text{\sf\em ts2}\xspace}
\newcommand{\tsc}{\text{\sf\em ts3}\xspace}
\newcommand{\tta}{\text{\sf\em tt1}\xspace}
\newcommand{\ttb}{\text{\sf\em tt2}\xspace}
\newcommand{\ttc}{\text{\sf\em tt3}\xspace}
\newcounter{pqr}
\newcommand{\smallblock}[1]{%
	\begin{pspicture}(0,0)(#1,6)
	\putnode{plla}{origin}{0}{0}{}
	\putnode{prla}{origin}{#1}{0}{}
	\putnode{prta}{origin}{#1}{6}{}
	\putnode{plta}{origin}{0}{6}{}

	\psframe[linecolor=white,fillstyle=solid,fillcolor=lightgray](0,0)(#1,6)

	\setcounter{pqr}{#1}
	\addtocounter{pqr}{7}
	
	\putnode{pllb}{plla}{10}{5}{}
	\putnode{prlb}{prla}{10}{5}{}
	\putnode{prtb}{prta}{10}{5}{}
	\putnode{pltb}{plta}{10}{5}{}

        %
	\psline[linecolor=white,fillstyle=solid,fillcolor=lightgray]%
		(\x{prta},\y{prta})
		(\x{prtb},\y{prtb})
		(\x{prlb},\y{prlb})
		(\x{prla},\y{prla})
	\psline[linecolor=white,fillstyle=solid,fillcolor=lightgray]%
		(\x{plta},\y{plta})
		(\x{pltb},\y{pltb})
		(\x{prtb},\y{prtb})
		(\x{prta},\y{prta})
	\end{pspicture}
	}
\newcommand{\bigblock}[1]{%
	\begin{pspicture}(0,0)(#1,9)
	\putnode{plla}{origin}{0}{0}{}
	\putnode{prla}{origin}{#1}{0}{}
	\putnode{prta}{origin}{#1}{9}{}
	\putnode{plta}{origin}{0}{9}{}

	\psframe[linecolor=white,fillstyle=solid,fillcolor=lightgray](0,0)(#1,9)

	\putnode{pllb}{plla}{10}{5}{}
	\putnode{prlb}{prla}{10}{5}{}
	\putnode{prtb}{prta}{10}{5}{}
	\putnode{pltb}{plta}{10}{5}{}

        %
	\psline[linecolor=white,fillstyle=solid,fillcolor=lightgray]%
		(\x{prta},\y{prta})
		(\x{prtb},\y{prtb})
		(\x{prlb},\y{prlb})
		(\x{prla},\y{prla})
	\psline[linecolor=white,fillstyle=solid,fillcolor=lightgray]%
		(\x{plta},\y{plta})
		(\x{pltb},\y{pltb})
		(\x{prtb},\y{prtb})
		(\x{prta},\y{prta})
	\end{pspicture}
	}
\newcommand{\N}{\text{$N$}\xspace}
\newcommand{\E}{\text{$E$}\xspace}
\newcommand{\srelevant}{\text{\sf\em relevant}\xspace}
\newcommand{\sdesirable}{\text{\sf\em desirable}\xspace}
\newcommand{\suseful}{\text{\sf\em useful}\xspace}
\newcommand{\sconclusive}{\text{\sf\em conclusive}\xspace}
\newcommand{\stscomp}{\text{\sf\em TS\/}\xspace}
\newcommand{\ssscomp}{\text{\sf\em SS\/}\xspace}
\newcommand{\sttcomp}{\text{\sf\em TT\/}\xspace}
\newcommand{\sstcomp}{\text{\sf\em ST\/}\xspace}
\newcommand{\desirable}{\text{\small\sf\em desirable}\xspace}
\newcommand{\undesirable}{\text{\small\sf\em undesirable}\xspace}
\newcommand{\relevant}{\text{\small\sf\em relevant}\xspace}
\newcommand{\irrelevant}{\text{\small\sf\em irrelevant}\xspace}
\newcommand{\useful}{\text{\small\sf\em useful}\xspace}
\newcommand{\usefulness}{\text{\small\sf\em usefulness}\xspace}
\newcommand{\conclusive}{\text{\small\sf\em conclusive}\xspace}
\newcommand{\inconclusive}{\text{\small\sf\em inconclusive}\xspace}
\newcommand{\conclusiveness}{\text{\small\sf\em conclusiveness}\xspace}
\newcommand{\relevance}{\text{\small\sf\em relevance}\xspace}
\newcommand{\revTwo}{}
\title{Flow- and Context-Sensitive Points-to Analysis using Generalized Points-to Graphs}
\author{
Pritam M. Gharat and Uday P. Khedker
\affil{Indian Institute of Technology Bombay}
Alan Mycroft
\affil{University of Cambridge}
}
\begin{abstract}

Computing precise (fully flow-sensitive and context-sensitive)
and exhaustive (as against demand driven)
points-to information is known to be computationally expensive.
Therefore many practical tools approximate the points-to information trading
precision for efficiency.
This often has adverse impact on computationally intensive analyses 
such as model checking. 
Past explorations in top-down approaches of fully flow- and context-sensitive points-to analysis (FCPA) 
have not scaled. 
We explore the alternative of bottom-up interprocedural approach
which constructs summary flow functions for procedures 
to represent the effect of their calls.
This approach has been effectively used for many analyses. However, this approach seems computationally expensive for FCPA
which requires modelling unknown locations accessed
indirectly through pointers. 
Such accesses are commonly
handled by using placeholders to explicate unknown locations
or by using multiple call-specific summary flow functions.

We generalize the concept of points-to relations by using the counts of indirection levels
leaving the unknown locations implicit. This allows us to create summary flow functions
in the form of \emph{generalized points-to graphs} (\gpgs)
without the need of placeholders. 
By design, \gpgs represent both memory 
(in terms of classical points-to facts) and memory transformers (in terms of
generalized points-to facts).
We perform FCPA
by progressively reducing generalized points-to facts to 
classical points-to facts.
\gpgs distinguish between \may and \must pointer updates thereby facilitating
strong updates within calling contexts.

The size of \gpg for a procedure is linearly bounded by the
number of variables and is independent of the number of statements in the procedure.
Empirical measurements on SPEC benchmarks show that \gpgs are indeed compact
in spite of large procedure sizes. This allows us to
scale FCPA to 158 kLoC using \gpgs (compared to 35
kLoC reported by liveness-based FCPA).
At a practical level, \gpgs hold a promise of efficiency and scalability for FCPA
without compromising precision. At a more general level, \gpgs provide a convenient
abstraction of memory in presence of pointers. Static analyses that are influenced by pointers may be able to use \gpgs by
combining them with their original abstractions.
\end{abstract}
\begin{document}
 \maketitle

\newcommand{\Base}{\text{\small\sf\em B\/}\xspace}
\newcounter{fmdefinition}

\newlength{\boxwidth}
\setlength{\boxwidth}{45.5mm}

\newcommand{\defbeg}[2]{\noindent\text{{\em\bfseries Definition }\refstepcounter{fmdefinition}\thefmdefinition: {\small\sf\em #1}\label{#2}.}}

\newcommand{\BEGDEF}[2]{%
\refstepcounter{fmdefinition}%
\noindent
\begin{tabular}{|l@{}r@{\ }c@{\ }l@{\ }|}
	\hline%
	\multicolumn{4}{|@{}l@{\ }|}{%
	\label{#2}
	\rule[-1.mm]{0mm}{3.5mm}%
	\makebox{\text{\small\bfseries\em Definition \thefmdefinition:\sf\em\protect\ #1}}}%
	 \\	
        \hline
}

\newcommand{\CONTDEF}[1]{%
\noindent
\begin{tabular}{|l@{}r@{\ }c@{\ }l@{\ }|}
	\hline%
	\multicolumn{4}{|l@{\ }|}{%
	\rule[-1.mm]{0mm}{3.5mm}%
	\makebox[\boxwidth]{\text{\sf\em\protect#1 (Continued)}}}%
	 \\	
        \hline
}

\newcommand{\ENDDEF}{%
\hline
\end{tabular}
}

\newcommand{\oneColumn}[1]{
	\multicolumn{4}{|l@{\ }|}{%
	$\protect#1$}%
        \\
	}
\newcommand{\threeColumns}[4]{
	\hspace*{#1mm}
	& $\protect{#2}$
	& $\protect{#3}$
	& $\protect{#4}$
	\\
	}
\newcommand{\twoColumns}[2]{
        \hspace*{#1mm}
	& 
	\multicolumn{3}{@{}l@{\ }|}{%
	$\protect#2$}%
        \\
	}

\newcommand{\DEFRule}{%
\arrayrulecolor{lightgray}\hline\arrayrulecolor{black}
}

\newcommand{\orderComputationRevised}{\hspace*{-2mm}\text{$
\begin{array}{|c|c|c|c|c|}
\hline
& \text{Desirable} & \text{Undesirable} & \begin{tabular}{c} \text{Semantically} \\ \text{Invalid} \end{tabular} & \text{No Composition}
\\ \hline \hline
	\ttcomp 
	& \labsrcp \leq \labtgtp \leq \labtgtn
	& \labtgtp > \labtgtn
	& -
	& -
\\ \hline
	\stcomp
	& \labsrcp \leq \labtgtp < \labsrcn
	& -
	& \labtgtp > \labsrcn
	& \labtgtp = \labsrcn
\\ \hline
	\tscomp
	& \labtgtp \leq \labsrcp \leq \labtgtn
	& \labsrcp > \labtgtn
	& -
	& -
\\ \hline
	\sscomp
	& \labtgtp \leq \labsrcp \leq \labsrcn
	& -
	& \labsrcp > \labsrcn
	& \labsrcp = \labsrcn
\\ \hline
\end{array}\!\!\!$}}

\newcommand{\orderComputation}{\hspace*{-2mm}\text{$
\begin{array}{c|c|c|c}
\multicolumn{2}{c|}{\text{\tscomp Composition} \; (\tgtn=\srcp)\rule{0em}{1em}} 
	& \multicolumn{2}{c}{\text{\sscomp Composition} \; (\srcn=\srcp)}
	\\ \hline
\text{Cond.} 
	& \langle\labsrcr \;,\; \labtgtr \rangle
	&  \text{Cond.} 
	& \langle\labsrcr \;,\; \labtgtr \rangle
	\rule[-.5em]{0em}{1.5em}
	\\ \hline
\rule[-.6em]{0em}{1.7em}
\labtgtn<\labsrcp
	& \langle
	 \labsrcn + \labsrcp - \labtgtn \;,\; \labtgtp
	 \rangle
	 & \labsrcn < \labsrcp
	& \langle
	 \labtgtp \;,\; \labtgtn + \labsrcp - \labsrcn
	 \rangle
	\\ \hline
\rule[-.6em]{0em}{1.7em}
\labtgtn>\labsrcp
	& \langle
         \labsrcn \;,\; \labtgtp+\labtgtn - \labsrcp
	  \rangle
	& \labsrcn > \labsrcp
	& \langle
          \labtgtp + \labsrcn -  \labsrcp \;,\; \labtgtn
	   \rangle
	\\ \hline
\rule[-.6em]{0em}{1.6em}
\labtgtn=\labsrcp 
	& \langle
	\labsrcn \;,\; \labtgtp
	\rangle
	& \labsrcn = \labsrcp & \text{
		No composition
		}
\end{array}\!\!\!$}}

\newcommand{\orderDef}{
\BEGDEF{Computing \indlev $\left(\labsrcr,\labtgtr\right)$
		(Section~\ref{sec:edge.composition})
	}{def:order.computation}
\oneColumn{\orderComputation}
\ENDDEF
}

\newcommand{\OneDef}{
\BEGDEF{\small Summary flow functions 
	}{def:basic.concepts}
\DEFRule
\oneColumn{\rule{0em}{.6em}\text{\small\sf\em Concrete Memory and Summary Flow Function}}
\threeColumns{0}%
	{\rule{0em}{1.9em}
		\csummflow(\epath,\flab,\slab)}%
	{\coloneq}%
	{\begin{cases}
			\flow(\flab,\slab) 
				& \slab \!=\! \lsucc(\epath,\flab)
					\\
			\flow(\tlab,\slab) \circ \csummflow(\epath,\flab,\tlab) 
				& \renewcommand{\arraystretch}{.8} \rule{0em}{1.5em}
				\begin{array}{@{}l} \tlab \in \lsucc^+(\epath,\flab),
				 \\ 
				\slab \!=\! \lsucc(\epath,\tlab) \end{array}
			  \end{cases}	
	}
\threeColumns{0}%
	{\rule[-.4em]{0em}{1.3em}
		\cmemslab}%
	{\coloneq}%
	{\left( \csummflow(\epath,\flab,{\slab}) \right) \left(\cmemflab\right)
	}
\DEFRule
\oneColumn{\rule{0em}{.6em}\text{\small\sf\em Abstract Memory and Summary Flow Function}}
\threeColumns{0}%
	{\asummflow(\flab,\slab)}%
	{\coloneq}%
	{\begin{cases}
			\flow(\flab,\slab) 
				& \slab \in \gsucc(\flab)
					\\
			\displaystyle \bigcup \; \flow(\tlab,\slab) \circ \asummflow(\flab,\tlab) 
				& \renewcommand{\arraystretch}{.8} \rule{0em}{1.5em}
				\begin{array}{@{}l} \tlab \in \gsucc^+(\flab),
				\\ 
				\slab \in \gsucc(\tlab) \end{array}
			  \end{cases}	
	}
\threeColumns{0}%
	{\rule[-.4em]{0em}{1.5em}
		\amemslab}%
	{\coloneq}%
	{\left( \asummflow(\flab,\slab) \right) \left(\amemflab\right)
	}
\ENDDEF
}

\newcommand{\oneDef}{
\BEGDEF{\small Memory Transformer \asummflow
	}{def:basic.concepts}
\DEFRule
\threeColumns{0}%
	{\asummflow(\flab,\slab)}%
	{\coloneq}%
	{ \Base(\flab,\slab) 
		\;\;\sqcap
			\displaystyle \bigsqcap_{%
				\scriptsize
				\begin{array}{@{}l} \tlab \in \gsuccsubscript^*\!(\flab)
				\\ 
				\slab \in \gsuccsubscript(\tlab) \end{array}
				}
		 \!\! \flow(\tlab,\slab) \circ \asummflow(\flab,\tlab) 
	}
\threeColumns{0}%
	{\Base(\flab,\slab)}%
	{\coloneq}%
	{\begin{cases}
	 \asummflow_{id} & \slab = \flab
			\\
	\flow(\flab,\slab) & \slab \in \gsucc(\flab)
			\\
	\emptyset & \text{otherwise}
			  \end{cases}	
	}
\ENDDEF
}

\newcommand{\ONEDef}{
\BEGDEF{\small \rule{0em}{1.2em} Memory Transformer \csummflow
	}{def:basic.concepts.cmem}
\DEFRule
\threeColumns{0}%
	{\rule{0em}{1.1em} \csummflow(\epath,\flab,\slab)}%
	{\coloneq}%
	{ \Base(\epath,\flab,\slab) 
		\;\sqcap
		\;\flow(\tlab,\slab) \circ \csummflow(\epath,\flab,\tlab) 
	}
\threeColumns{0}%
	{\rule{0em}{2.6em} \Base(\epath,\flab,\slab)}%
	{\coloneq}%
	{\begin{cases}
	 \csummflow_{id} & \slab = \flab
			\\
	\flow(\flab,\slab) & \slab \in \lsucc(\flab)
			\\
	\emptyset & \text{otherwise}
			  \end{cases}	
	}
\ENDDEF
}

\newcommand{\twoDef}{
\defbeg{Generalized Points-to Graph (\gpg)}{def:gpg}
Given locations \text{$x,y \in \vars$},
a \emph{generalized points-to fact} \de{x}{i,j}{y} in a given memory \amem asserts that every location
reached by $i-1$ dereferences from $x$ can hold the address of every location reached
by $j$ dereferences from $y$. Thus, \text{$\amem^i\{x\} \supseteq \amem^j\{y\}$}. 
A \emph{generalized points-to graph} (\gpg) is 
a set of edges representing generalized points-to facts.
For a \gpg edge \de{x}{i,j}{y},
the pair \text{$(i,j)$} represents indirection levels and is called the \indlev of the edge
($i$ is the \indlev of $x$, and $j$ is the \indlev  of $y$).
}

\newcommand{\threeDef}{
\BEGDEF{Edge composition \cop
	}{def.edge.composition}
\oneColumn{
		\rule[-.0em]{0em}{1.2em}
		\fe{\srcn}{\labsrcn}{\labtgtn}{\tgtn} \circ 
		\fe{\srcp}{\labsrcp}{\labtgtp}{\tgtp} \coloneq
		\fe{\srcr}{\labsrcr}{\labtgtr}{\tgtr}
	}
\oneColumn{\WHERE}
\twoColumns{6}%
	{
		\left(\srcr,\tgtr\right)  \coloneq 
		\begin{cases}
		\left( \tgtp \;,\; \tgtn \right)
			& \srcn=\srcp \;(\sscomp \text{ composition})
		\\
		\left( \srcp \;,\; \tgtn \right)
			& \srcn=\tgtp \;(\stcomp \text{ composition})
		\\
		\left( \srcn \;,\; \tgtp \right)
		       & \tgtn=\srcp \;(\tscomp \text{ composition})
		       \\
		\left( \srcn \;,\; \srcp \right)
			& \tgtn=\tgtp \;(\ttcomp \text{ composition})
		\end{cases}
	}
\twoColumns{0}%
	{\rule[-.6em]{0em}{1.6em} 
		\left(\labsrcr,\labtgtr\right) 
		\text{ are computed by balancing the \indlev of \pivot} 
	}
\ENDDEF
}

\newcommand{\fourDef}{
\BEGDEF{\small Edge reduction in \summflow
		}{def:edge.reduction}
\oneColumn{%
	\rule{0em}{1.15em}%
	\newedge\circ\summflow  \; \coloneq \; \mlc\left(\{\newedge\},\summflow\right)}
\oneColumn{\WHERE \rule{10mm}{0mm} 
}
\DEFRule
\threeColumns{1}%
	{\mlc\left(X,\summflow\right)}%
	{\coloneq}%
	{\begin{cases}
                       X 
				& \slces\left(X, \summflow\right) \!= \!X 
		       \\ 
                       \mlc\left( \slces\left(X, \summflow\right), \summflow\right) 
			      	& \text{Otherwise}
                      \end{cases}
	}
\DEFRule

\threeColumns{1}%
	{\rule[-1.5em]{0em}{2.7em}%
		\slces\left(X, \summflow\right)}%
	{\coloneq}%
	{\displaystyle \bigcup\limits_{e\in X}  \slc\left(e,\summflow\right) 
	}
\DEFRule

\threeColumns{1}%
	{\rule[-2.em]{0em}{2.6em}%
		\slc\left(\newedge,  \summflow\right)}%
	{\coloneq}%
	{\begin{cases}
	     \sscomposition{\newedge}{\summflow} \bowtie \tscomposition{\newedge}{\summflow}
	     &  \sscomposition{\newedge}{\summflow} \neq \emptyset,  \tscomposition{\newedge}{\summflow}\neq \emptyset 
	     \\
		\rule[-.6em]{0em}{1.6em}%
	     \{\newedge\} 
	     &  \sscomposition{\newedge}{\summflow} = \tscomposition{\newedge}{\summflow} = \emptyset 
	     \\
	     \sscomposition{\newedge}{\summflow}\; \cup\;  \tscomposition{\newedge}{\summflow}
	     & \text{Otherwise}
             \end{cases}
	}
\DEFRule
\threeColumns{1}%
	{\rule[-.5em]{0em}{1.5em}%
		\sscomposition{\newedge}{\summflow}}%
	{\coloneq}%
	{\big\{\ncop \mid \prevedge \in \summflow,
		\srcn = \srcp,\labtgtp \leq \labsrcp < \labsrcn\big\} 
	}

\DEFRule
\threeColumns{1}%
	{\rule[-.5em]{0em}{1.5em}%
		\tscomposition{\newedge}{\summflow}}%
	{\coloneq}%
	{\big\{\ncop \mid \prevedge \in \summflow ,
		\tgtn = \srcp, \labtgtp \leq \labsrcp \leq \labtgtn\big\} 
	}
\DEFRule
\threeColumns{1}%
	{\rule[-.5em]{0em}{1.5em}%
		X \bowtie Y}%
	{\coloneq}%
	{\big\{\fe{\srcn}{\labsrcn}{\labtgtp}{\tgtp} \mid \newedge \in X,\prevedge \in Y\big\}
	}
\ENDDEF
}

\newcommand{\fourADef}{
\BEGDEF{\small Edge reduction in \summflow (part A)
		}{def:edge.reduction.a}
\oneColumn{%
	\rule{0em}{1.15em}%
	\newedge\circ\summflow  \; \coloneq \; \mlc\left(\{\newedge\},\summflow\right)}
\oneColumn{\WHERE \rule{10mm}{0mm} /\!* \text{ let } 
				\slces\left(X, \summflow\right) \text { be denoted by } Z  *\!/
}
\DEFRule
\threeColumns{1}%
	{\mlc\left(X,\summflow\right)}%
	{\coloneq}%
	{\begin{cases}
                       X 
				& Z \!= \!X 
		       \\ 
                       \mlc\left( Z, \summflow\right) 
			      	& \text{Otherwise}
                      \end{cases}
	}
\DEFRule

\threeColumns{1}%
	{\rule[-1.5em]{0em}{2.7em}%
		\slces\left(X, \summflow\right)}%
	{\coloneq}%
	{\displaystyle \bigcup\limits_{e\in X}  \slc\left(e,\summflow\right) 
	}
\DEFRule

\threeColumns{1}%
	{\rule[-2.em]{0em}{2.6em}%
		\slc\left(\newedge,  \summflow\right)}%
	{\coloneq}%
	{\begin{cases}
	     \sscomposition{\newedge}{\summflow} \bowtie \tscomposition{\newedge}{\summflow}
	     &  \sscomposition{\newedge}{\summflow} \neq \emptyset,  \tscomposition{\newedge}{\summflow}\neq \emptyset 
	     \\
		\rule[-.6em]{0em}{1.6em}%
	     \{\newedge\} 
	     &  \sscomposition{\newedge}{\summflow} = \tscomposition{\newedge}{\summflow} = \emptyset 
	     \\
	     \sscomposition{\newedge}{\summflow}\; \cup\;  \tscomposition{\newedge}{\summflow}
	     & \text{Otherwise}
             \end{cases}
	}
\ENDDEF
}

\newcommand{\fourBDef}{
\BEGDEF{\small Edge reduction in \summflow (part B)
		}{def:edge.reduction.b}
\threeColumns{1}%
	{\rule[-.5em]{0em}{1.5em}%
		\sscomposition{\newedge}{\summflow}}%
	{\coloneq}%
	{\big\{\ncop \mid \prevedge \in \summflow,
		\srcn = \srcp,\labtgtp \leq \labsrcp < \labsrcn\big\} 
	}

\DEFRule
\threeColumns{1}%
	{\rule[-.5em]{0em}{1.5em}%
		\tscomposition{\newedge}{\summflow}}%
	{\coloneq}%
	{\big\{\ncop \mid \prevedge \in \summflow ,
		\tgtn = \srcp, \labtgtp \leq \labsrcp \leq \labtgtn\big\} 
	}
\DEFRule
\threeColumns{1}%
	{\rule[-.5em]{0em}{1.5em}%
		X \bowtie Y}%
	{\coloneq}%
	{\big\{\fe{\srcn}{\labsrcn}{\labtgtp}{\tgtp} \mid \newedge \in X,\prevedge \in Y\big\}
	}
\ENDDEF
}

\newcommand{\fiveDef}{
\BEGDEF{\small \rule{0em}{1.2em} Construction of \csummflow
		\rule{15mm}{0mm}
		$/\!* \newedge \text{ is } \flow(\tlab,\slab) *\!/$
	}{def:csummflow.construction}
\threeColumns{1}%
	{\rule[-.5em]{0em}{1.5em}%
	\csummflow(\epath,\flab,\slab)}%
	{\coloneq}%
	{
	\begin{cases}
	\flow(\flab,\slab) & \slab \in \lsucc(\epath,\flab)
		\\
		\left(\csummflow(\epath,\flab,\tlab)\right)\left[\;\newedge\circ\csummflow(\epath,\flab,\tlab)\;\right]
		& 
		\tlab \in \lsucc^+(\epath,\flab) \cap \lpred(\epath,\slab)
	\end{cases}
	}
\oneColumn{\WHERE  \rule[-.4em]{27mm}{0mm} 
		}

\DEFRule
\threeColumns{1}%
             {\csummflow\left[ X \right]}%
             {\coloneq}%
             {
		\rule[-.4em]{0em}{1.4em}
					\csummflow
                                            \left[ \rededge \right]
		\rule{30.5mm}{0mm}
		/\!* \text{ let $X$ be }  \left\{ \rededge \right\} \;
			*\!/
             }
\DEFRule
\threeColumns{1}%
             {\rule[-1.5mm]{0mm}{3mm}%
              \csummflow\left[ e\right]}%
	     {\coloneq}%
	     {\, \csummflow\left[(x,i) \mapsto (y, j)\right]
		\rule{10.5mm}{0mm}
		/\!* \text{ let } e \equiv \de{x}{i,j}{y} *\!/}
\ENDDEF
}

\newcommand{\sixDef}{
\BEGDEF{\small \rule{0em}{1.2em} Semantics of \csummflow
	}{def:csummflow.semantics}
\oneColumn{%
	\rule[-.4em]{0em}{1.6em}
	 \cmemslab  \; \coloneq \;
	 \llbracket \,\csummflow(\epath,\flab,\slab) \rrbracket \cmem
	}
\oneColumn{\WHERE  \rule{33mm}{0mm} /\!* \text{ let 
		\csummflow be }  \left\{ e_1, e_2, \ldots e_k  \right\} \;
			*\!/}
\DEFRule
\threeColumns{2}{%
	\rule[-.5em]{0em}{1.6em}%
	 \llbracket \,\csummflow\, \rrbracket \cmem}%
	{\coloneq}%
	{	\left(
			   \llbracket e_k \rrbracket
				\dots
			\left(
			   \llbracket e_2 \rrbracket
			\left(
			   \llbracket e_1 \rrbracket \cmem 
			\right)
			\right)
                           \ldots 
			\right)
			 \coloneq \;
			   \llbracket e_k \rrbracket
				\ldots
			   \llbracket e_2 \rrbracket
			   \llbracket e_1 \rrbracket
			\cmem 
	}	
\DEFRule
\threeColumns{2}%
	{\rule[-.5em]{0em}{1.6em}%
	\eval(\de{x}{i,j}{y}, \cmem)}%
	{\coloneq}%
	{\de{w}{1,0}{z}
	\text{ where  }
			w = \cmem^{\,i-1} \{x\}, \; z = \cmem^j\{y\} 
	}
\DEFRule
\threeColumns{2}%
	{\rule[-.5em]{0mm}{1.6em}
		\llbracket e \rrbracket\cmem}%
	{\coloneq}%
	{\cmem[\eval(e, \cmem)]
	}
\ENDDEF
}

\newcommand{\sevenDef}{
\BEGDEF{\small Construction of \asummflow \rule{0em}{1.1em}
	}{def:asummflow.construction}
\oneColumn{\ASSUME \; \newedge \text{ is } \flow(\tlab,\slab) \text{ and  \asummflow is a set of edges} \rule{0em}{.9em}
	}
\threeColumns{0}%
	{\asummflow(\flab,\slab)}%
	{\coloneq}%
	{ \Base(\flab,\slab) 
		\;\;\cup
			\displaystyle \bigcup_{%
				\scriptsize
				\begin{array}{@{}l} \tlab \in \gsuccsubscript^+\!(\flab)
				\\ 
				\slab \in \gsuccsubscript(\tlab) \end{array}
				}
		 \!\!\!\!\! \left(\asummflow(\flab,\tlab)\right)\left[ \newedge \circ \asummflow(\flab,\tlab)\right] 
	}
\threeColumns{0}%
	{\Base(\flab,\slab)}%
	{\coloneq}%
	{\begin{cases}
	\newedge & \slab \in \gsucc(\flab)
			\\
	\emptyset & \text{otherwise}
			  \end{cases}	
	}
\oneColumn{\WHERE} 
\DEFRule
\threeColumns{0}%
	{
		\rule[-1.7mm]{0mm}{5.5mm}%
		\asummflow\left[ X \right]}%
	{\coloneq}%
	{\left(\asummflow - \conskill(X,\asummflow\,)\right)\; \cup\; (X)}
\DEFRule
\threeColumns{0}%
	{
		\rule[-1.7mm]{0mm}{5.5mm}%
		\conskill(X,\asummflow)}%
	{\coloneq}%
	{ \big\{ e_1 \mid e_1 \in \!\match(e, \asummflow), e \in X, 
		|\Def(X)| \!=\! 1 \big\}
	}
\DEFRule
\threeColumns{0}%
	{
		\rule[-1.7mm]{0mm}{5mm}%
		\match(e, \asummflow)}%
	{\coloneq}%
	{\{e_1 \mid e_1 \in \asummflow,\; \nsrc_e = \nsrc_{e_1},\; \labsrce{e} = \labsrce{e_1}\} }
\DEFRule
\threeColumns{0}%
	{
		\rule[-2mm]{0mm}{5mm}%
		\Def(X)}%
	{\coloneq}%
	{\big\{(\nsrc_e, \labsrce{e}) \mid e \in X\big\}}
\ENDDEF
}

\newcommand{\eightDef}{
\BEGDEF{\rule{0em}{1.1em}Semantics of \asummflow
	}{def:asummflow.semantics}
\oneColumn{%
	\rule{0em}{1.2em}%
	 \amemslab  \; \coloneq \;
	 \llbracket \asummflow(\flab,\slab) \rrbracket \amemflab
	}
\oneColumn{\WHERE  \rule{31mm}{0mm} /\!* \text{ let 
		\asummflow be }  \left\{ e_1, e_2, \ldots e_k  \right\} \;
			*\!/}
\DEFRule
\threeColumns{1}%
	{\rule[-.55em]{0em}{1.65em}%
		\llbracket \asummflow \rrbracket \amem}%
	{\coloneq}%
	{
		(\llbracket e_k, \asummflow \rrbracket \ldots (\llbracket e_2, \asummflow \rrbracket (\llbracket e_1, \asummflow \rrbracket \amem)) \ldots)
	}
\DEFRule
\threeColumns{1}%
	{ \rule[-.75em]{0em}{2.em}%
		\eval(\de{x}{i,j}{y}, \amem)}
	{\coloneq}%
	{\left\{\de{w}{1,0}{z} \mid w \in \amem^{i-1}\{x\},\; z \in \amem^{j}\{y\} \right\}
	}
\DEFRule
\threeColumns{1}%
	{\rule[-.55em]{0em}{1.65em}%
		\llbracket e, \asummflow \rrbracket \amem}%
	{\coloneq}%
	{\eval(e, \amem) 
			\; \cup \;
			\left(\amem - \memkill(e, \amem, \asummflow) \right) 
	}
\DEFRule
\threeColumns{1}%
	{\rule[-.45em]{0em}{1.45em}%
		\memkill(e, \amem, \asummflow)}%
	{\coloneq}%
	{\big\{e_2 \mid e_2 \in \! \match(e_1,\amem),\; 
			e_1 \!\in \eval(e, \amem), 
			\memsup(e, \amem\!, \asummflow)\big\}
	}
\DEFRule
\threeColumns{1}%
	{\rule[-.45em]{0em}{1.4em}%
		\memsup(e, \amem\!, \asummflow)}%
        {\Leftrightarrow}%
        {  \singledef{e}{\amem}\; \wedge\; \mustedge{e}{\asummflow} 
		}
\DEFRule
\threeColumns{1}%
	{\rule[-.45em]{0em}{1.65em}%
		\singledef{\de{x}{i,j}{y}}{\amem}}%
        {\Leftrightarrow}%
	{ \amem\,^{i-1}\{x\} = \{z\}\; \wedge\; z \neq ? 
		}
\ENDDEF
}

\newcommand{\nineDef}{
\BEGDEF{\rule[-.4em]{0em}{1.45em}\asummflow for a call $g()$ in procedure $f$
		}{def:asummflow.interprocedural}
\oneColumn{\rule[-.4em]{0em}{1.5em} /\!* \text{ let }
	\asummflow_f \text{ denote }
	\asummflow(\Start{f},\flab) 
	\text{ and } \asummflow_g \text{ denote }
	\asummflow(\Start{g},\End{g})
			*\!/}
\DEFRule
\oneColumn{\rule[-.5em]{0em}{1.55em}
	\asummflow(\Start{f},\slab)  
		\; \coloneq \; \asummflow_g  \circ \asummflow_f
	}
\DEFRule
\oneColumn{\rule[-.5em]{0em}{1.55em}
		\asummflow_g  \circ \asummflow_f
		\; \coloneq \; \asummflow_f  \left[\,\asummflow_g \right]
	}
\oneColumn{\WHERE\rule[-.5em]{32mm}{0mm} /\!* \text{ let 
		$\asummflow_g$ be }  \left\{ e_1, e_2, \ldots e_k  \right\} \;
			*\!/}
\DEFRule
\threeColumns{2}%
	{\rule[-.65em]{0em}{1.7em}
		\asummflow_f \left[ \asummflow_g \,\right]}%
	{\coloneq}%
	{\asummflow_f
		\left[\, e_1, \asummflow_g\, \right]
		\left[\, e_2, \asummflow_g\, \right]
		\ldots
		\left[\, e_k, \asummflow_g\, \right]
	}
\DEFRule
\threeColumns{2}%
	{\rule[-.65em]{0em}{1.7em}
		\asummflow_f\left[ e, \asummflow_g \right]}%
	{\coloneq}%
	{(\asummflow_f - \compkill(e,\asummflow_f, \asummflow_g))\; \cup\;
		(e \circ \asummflow_f) 
	}
\DEFRule
\threeColumns{2}%
	{\rule[-.3em]{0em}{1.3em}
		\compkill(e,\asummflow_f, \asummflow_g)}%
	{\coloneq}%
	{\big\{ e_2 \mid e_2 \in \! \match(e_1, \asummflow_f),  \;
				e_1 \! \in \! e \circ \asummflow_f, \;
			\compsup(e, \asummflow_f, \asummflow_g)\big\}
	}
\DEFRule
\threeColumns{2}%
	{\rule[-.5em]{0em}{1.5em}
		\compsup(e, \asummflow_f, \asummflow_g)}%
	{\coloneq}%
	{\left(\left\lvert \Def(e \circ \asummflow_f) \right\rvert \; =\;  1\right) \wedge
				\mustedge{e}{\asummflow_g}
	}

\DEFRule
\twoColumns{1}%
	{\rule{0em}{1.em}
		\mustedge{\de{x}{i,j}{y}}{\asummflow}
	\Leftrightarrow
         \big( \de{x}{i,k}{z} \in \asummflow \Rightarrow k=j \wedge z = y \big)
			\;\bigwedge\;
		}
\twoColumns{1}%
	{
		\rule[-.5em]{37mm}{0mm}%
  		\big(\big(i > 1\; \wedge\; \de{x'}{i,0}{\summn} \notin \asummflow\big)\; \; \vee\; 
			\big(i = 1\; \wedge\; \de{x}{1,1}{x'} \notin \asummflow\big) \big)
		}
\ENDDEF
}

\newcommand{\soundnessDef}{
\BEGDEF{\rule{0em}{1.2em} Soundness of \csummflow and \asummflow
	}{def:sound.sff}
\oneColumn{\rule{0em}{1.2em}\text{\sf\em Soundness of Concrete Summary Flow Function \csummflow}}
\threeColumns{0}%
	{\rule[-.6em]{0em}{1.8em}
		\eval(\newedge, \llbracket \prevedge \rrbracket \cmemflab)}%
	{\coloneq}%
	{\eval(\cop, \cmemflab)}%
\threeColumns{0}%
	{\rule[-.5em]{0em}{1.5em}
		\eval(\newedge, \llbracket\, \csummflow\, \rrbracket \cmemflab)}%
	{\coloneq}%
	{\eval(\newedge \circ \csummflow, \cmemflab)}%
\threeColumns{0}%
	{\rule[-.6em]{0em}{1.2em}
	\cmemslab}
	{=}%
	{\llbracket\, \csummflow(\epath,\flab,\slab) \rrbracket \cmemflab}
\DEFRule
\oneColumn{\rule{0em}{1em}\text{\sf\em Soundness of Abstract Summary Flow Function \asummflow}}
\threeColumns{0}%
	{\rule[-.6em]{0em}{1.8em}
         \newkill(\epath,\newedge, \cmemflab)}%
        {\coloneq}%
        {\big\{ \, e_1 \; \mid\; e_1 \in \match(e,\cmemflab), 
		e \in \llbracket \, \newedge \, \rrbracket \cmemflab \big\}
	}
\threeColumns{0}%
	{\rule[-1.4em]{0em}{2.6em}
		\memkill\left(\newedge, \amemflab, \asummflow(\flab,\slab)\right)}%
	{\subseteq}%
	{\bigcap\limits_{\epath \in \paths(\flab,\slab)} \newkill(\epath, \newedge, \cmemflab)}
\threeColumns{0}%
	{\rule[-1.3em]{0em}{2.3em}
	\newedge \circ \asummflow(\flab,\slab)}
	{\supseteq}
	{\bigcup\limits_{\epath \in \paths(\flab,\slab)} \newedge \circ \csummflow(\epath, \flab, \slab)}
\threeColumns{0}%
	{\rule[-1.3em]{0em}{2.4em}
	\llbracket \asummflow(\flab,\slab) \rrbracket \amemflab}%
	{\supseteq}
	{\bigcup\limits_{\epath \in \paths(\flab,\slab)} \llbracket\, \csummflow(\epath,\flab,\slab) \rrbracket \cmemflab}
\ENDDEF
}

\newcommand{\latticeDef}{
\BEGDEF{Lattice of \gpgs
	}{def:lattice.gpg}
\oneColumn{
		\rule[-.0em]{0em}{1.em}
		\;\;\;
		\asummflow \in \left\{ \asummflow_\top
		\right\}
		\; \cup \; 
		\left\{ \left(\mathcal{N}, \mathcal{E}\right) \;\middle\vert \;
			 \mathcal{N} \subseteq \N,\;
			 \mathcal{E} \subseteq \E
		\right\}
	}
\oneColumn{\WHERE}
\DEFRule
\threeColumns{2}%
	{
		\N 
	}%
	{\coloneq}%
	{
		\vars \cup \{ ? \}
	}
\DEFRule
\threeColumns{2}%
	{
		\E 
	}%
	{\coloneq}%
	{
		\begin{array}[t]{@{}l}
		\big\{ \de{x}{i,j}{y} \;\lvert \;
			 x \in \ptrs,\; y \in \N,\; 
			\\
			\;\;
			0 < i \leq \,\mid\!\N\!\mid,\; 
			0 \leq j \leq \,\mid\!\N\!\mid
		\big\}
		\end{array}
	}
\DEFRule
\threeColumns{2}%
	{
		\rule[-.0em]{0em}{1.em}%
		\asummflow_1
		\sqsubseteq
		\asummflow_2
	}%
	{\Leftrightarrow}%
	{
		\left(\summflow_2 \!=\! \asummflow_\top\right) \vee
		\left(
			\mathcal{N}_1 \supseteq \mathcal{N}_2 \wedge
			\mathcal{E}_1 \supseteq \mathcal{E}_2 
		\right)
	}
\DEFRule
\threeColumns{2}%
	{
		\rule[-.0em]{0em}{2.2em}%
		\asummflow_1
		\sqcap
		\asummflow_2
	}%
	{\coloneq}%
	{
		\begin{cases}
		\asummflow_1 & \asummflow_2 = \asummflow_\top
			\\
		\asummflow_2 & \asummflow_1 = \asummflow_\top
			\\
		\left( \mathcal{N}_1 \cup \mathcal{N}_2, \; \mathcal{E}_1 \cup \mathcal{E}_2 \right)
				& \text{otherwise}
		\end{cases}
	}
\ENDDEF
}

\mysection{Introduction}
\label{intro}

Points-to analysis discovers  information  about indirect  accesses in  a
program  and its  precision influences  the
precision and scalability of  other program analyses significantly.
Computationally intensive  analyses such as model  checking are ineffective
on  programs  containing  pointers   partly  because  of  
imprecision of points-to analyses~\cite{Ball:2002:SLP:503272.503274,Beyer:2007:CSV:1770351.1770419,Clarke04atool,Fischer:2005:JDP:1081706.1081742,ivanvcic2005model,Jhala:2009:SMC:1592434.1592438}. 

We focus on exhaustive as against demand-driven~\cite{Dillig:2008:SCS:1375581.1375615,demand.driven.1,demand.driven.2} points-to
analysis. A demand-driven points-to analysis computes points-to information that is relevant to a query raised by a client
analysis; for a different query, the analysis needs to be repeated. An exhaustive analysis, on the other hand, computes all
points-to information which can be queried later by a client analysis;
multiple queries do not require points-to analysis to be repeated.
For precision of points-to information, we are interested in 
full flow- and context-sensitive points-to analysis.
A flow-sensitive analysis  respects  the control flow  and  computes  separate data  flow
information at each program point.  It provides more precise results but
could be inefficient  at the interprocedural level. A context-sensitive
analysis distinguishes between different  calling contexts of procedures
and restricts the analysis to interprocedurally valid control flow paths (i.e. control flow paths from program entry to program
exit in 
which every return from a procedure is matched with a call to the procedure such that all call-return matchings are
properly nested). 
A fully context-sensitive analysis does not approximate calling contexts by limiting the call chain lengths even in presence of recursion.
Both flow- and context-sensitivity bring in precision and we aim to achieve it without compromising
 on efficiency.

\begin{figure}[t]
\begin{center}
\setlength{\codeLineLength}{44mm}
\renewcommand{\arraystretch}{.9}
\begin{tabular}{c|c}
	\begin{tabular}{lc}
	\codeLineNoNumber{0}{int a, b, c, d;}{white}
	\codeLineNoNumber{0}{}{white}
	\codeLineOne{01}{0}{g()}{white} 
	\codeLine{0}{\OB }{white}
	\codeLine{1}{c = a*b;}{white}
	\codeLine{1}{f(); /* call 1 */}{white}
	\codeLine{1}{a = c*d;}{white}
	\codeLine{1}{f(); /* call 2 */}{white}
	\codeLine{0}{\CB}{white}
	\codeLineNoNumber{0}{}{white}
	\codeLine{0}{f()}{white}
	\codeLine{0}{\OB }{white}
	\codeLine{1}{a = b*c;}{white}
	\codeLine{0}{\CB}{white}
	\end{tabular}
&
\begin{minipage}{74mm}
\begin{enumerate}
\item[(a.1)] 
	Context independent representation of context-sensitive summary flow function
      of procedure $f$
\[
f(X) = X\cdot 011 + 010
\]
\item[(a.2)] 
	Context dependent representation of context-sensitive summary flow function
      of procedure $f$
\[
f = \left\{ 100 \mapsto 010, \; 011 \mapsto 011 \right\}
\]
\item[(b)] 
	Context-insensitive data flow information as a procedure summary
      of procedure $f$
\[
f = 010
\]
\end{enumerate}
\end{minipage}
\end{tabular}
\end{center}

\caption{Illustrating different kinds of procedure summaries for available expressions analysis.
The set \text{$\{a\!*\!b, \, b\!*\!c,\, c\!*\!d\}$} is represented by the bit vector 111.
}
\label{fig:diff.summaries}
\end{figure}

The  top-down   approach  to   context-sensitive analysis  propagates
the  information  from  callers to  callees~\cite{summ2}  
effectively traversing the call graph
top down. In the process, it 
analyzes   a   procedure   each   time    a   new   data   flow   value
reaches a procedure from some call.   Several   popular   approaches   fall   in
this  category:  call  strings  method~\cite{sharir.pnueli},  its  
value-based  variants~\cite{call_string.vbt,vasco}  and the  tabulation  based
functional method~\cite{graph_reach,sharir.pnueli}. By contrast, the bottom-up
approaches~\cite{DBLP:conf/aplas/FengWDD15,Saturn,summ1,reps.ide,sharir.pnueli,purity1,Whaley,ptf,Yan:2012:RSS:2259051.2259053,yorsh.ipdfa,summ2} 
avoid  analyzing a procedure multiple  times by  constructing its {\em  summary
flow  function\/} which  is used
to incorporate the effect of calls to the procedure. 
Effectively, this approach traverses the call graph bottom up.

It is prudent to distinguish between three kinds of summaries of a procedure
that can be created  for minimizing the number of times a procedure is re-analyzed:
\begin{enumerate}
\item[(a.1)] 
       a bottom-up  parameterized summary flow function  which is context
      independent  (context dependence is captured in the parameters),
\item[(a.2)] 
       a top  down enumeration of  summary flow  function in the  form of
      input-output pairs for the input values reaching a procedure, and
\item[(b)] 
      a bottom-up parameterless  (and hence context-insensitive) summary
      information.
\end{enumerate}

\exmpbeg
Figure~\ref{fig:diff.summaries} illustrates the three different kinds of summaries for available
expressions analysis. Procedure $f$ 
kills the availability of expression $a\!*\!b$, 
generates the availability of $b\!*\!c$, and
is transparent to the availability of $c\!*\!d$.
\begin{itemize}
\item[$\bullet$]  Summary (a.1) is a  parameterized flow function,
       summary (a.2) is an enumerated flow function, 
       whereas summary (b) is a
      data flow value (i.e. it is a summary information as against a summary flow function) representing the effect
      of all calls of procedure $f$. 
\item[$\bullet$] Summaries (a.1) and (a.2)
      are context-sensitive (because they compute distinct values
      for different calling contexts of $f$) whereas summary (b)
      is context-insensitive (because it represents the same value
      regardless of the calling context of $f$). 
\item[$\bullet$] Summaries (a.1) and
      (b) are context independent (because they can be
      constructed without requiring any information from the calling contexts of $f$)
      whereas summary (a.2) is context dependent
      (because it requires information from the calling contexts of $f$). 
\end{itemize}
\exmpend


Note that context independence (in (a.1) above), achieves context-sensitivity through parameterization and
should not be confused with context-insensitivity (in (b) above).

We focus on summaries of the first kind (a.1) because we would
like to avoid re-analysis and seek context-sensitivity. We formulate our analysis on a language modelled on C.

\begin{figure}[t]
\begin{center}
\setlength{\codeLineLength}{34mm}
\renewcommand{\arraystretch}{.9}
\begin{tabular}{cc}
	\begin{tabular}{rc}
	\codeLineNoNumber{0}{int **x, **y;}{white}
	\codeLineNoNumber{0}{int *z, *a, *b;}{white}
	\codeLineNoNumber{0}{int d, e, u, v, w;}{white}
	\codeLineNoNumber{0}{void f();}{white}
	\codeLineNoNumber{0}{void g();}{white}
	\codeLineNoNumber{0}{}{white}
	\codeLineOne{1}{0}{void f()}{white} 
	\codeLine{0}{\OB x = \&a;}{white}
	\codeLine{1}{z = \&w;}{white}
	\codeLine{1}{g();}{white}
	\codeLine{1}{*x = z;}{white}
	\codeLine{0}{\CB}{white}
	\end{tabular}
	&
	\begin{tabular}{@{}rc}
	\codeLine{0}{void g()}{white}
	\codeLine{0}{\OB a = \&e;}{white}
	\codeLine{1}{if (...) \OB}{white}
	\codeLine{2}{*x = z;}{white}
	\codeLine{2}{\phantom{*}z = \&u;}{white}
	\codeLine{1}{\CB \; else \OB}{white}
	\codeLine{2}{y = \&b;}{white}
	\codeLine{2}{z = \&v;}{white}
	\codeLine{1}{\CB}{white}
	\codeLine{1}{x = \&b;}{white}
	\codeLine{1}{*y = \&d;}{white}
	\codeLine{0}{\CB}{white}
	\end{tabular}
\end{tabular}
\caption{A motivating example which is used as a running example through the paper.
Procedures $g$ and $f$ are used for illustrating intraprocedural and interprocedural
      \gpg construction respectively. All variables are global.}
\label{fig:mot_eg}
\end{center}
\end{figure}

\subsection*{Our Key Idea, Approach, and an Outline of the Paper}

Section~\ref{sec:motivation} describes our motivation and contributions by contextualizing our work
in the perspective of the past work on bottom-up summary flow functions for points-to analysis. 
As explained in Section~\ref{sec:gpg}, we essentially
generalize the concept of points-to relations by using the counts of indirection levels
leaving the indirectly accessed unknown locations implicit. This allows us to create summary flow functions
in the form of \emph{generalized points-to graphs} (\gpgs)
 whose size is linearly bounded by the number of variables. By design, \gpgs can represent
both memory (in terms of classical points-to facts) and memory transformers (in terms of
generalized points-to facts). 

\exmpbeg
Consider procedure $g$ of Figure~\ref{fig:mot_eg} whose \gpg is shown in
Figure~\ref{fig:summ.flow.func.with.abstract.nodes}(c). 
The edges in \gpgs track indirection levels: 
indirection level 1 in the label ``$1,\!0$'' indicates that the source is assigned 
the address (indicated by indirection level 0) of the target.
      Edge \de{a}{1,0}{e} is created for line 8. 
The indirection level 2 in
edge \de{x}{2,1}{z} for line 10 indicates that
the pointees of $x$ are being defined; since $z$ is read, its indirection level is 1.
      The combined effect of lines 13 (edge \de{y}{1,0}{b}) and 17 (edge \de{y}{2,0}{d}) results in the edge \de{b}{1,0}{d}.
      However edge \de{y}{2,0}{d} is also retained because
there is no information about the pointees of $y$ along the
      other path reaching line 17.
\exmpend

The generalized points-to facts are
composed to create new generalized points-to facts with smaller indirection levels 
(Section~\ref{sec:edge.composition})
whenever possible
thereby converting them progressively to classical points-to facts.
 This is performed in 
two phases: construction of \gpgs, and
use of \gpgs to compute points-to information.
\gpgs are constructed flow-sensitively by
processing pointer assignments along the control flow of a procedure and collecting
generalized points-to facts. 
(Section~\ref{sec:intra_gpg}). 

Function calls are handled context-sensitively 
by incorporating the effect of the \gpg of a
callee into the \gpg of the caller
(Section~\ref{sec:interprocedural.extensions}). 
Loops and recursion are handled using 
a fixed point computation.
\gpgs also distinguish between \may and \must pointer updates thereby facilitating
strong updates.

Section~\ref{sec:dfv_compute} shows how \gpgs are used for computing classical points-to facts.
Section~\ref{sec:sound.sff} 
defines formal semantics of \gpgs and provides a proof of soundness of the proposed points-to analysis using \gpgs.
Section~\ref{sec:level_3} describes the handling of advanced features of the language such as function pointers, structures, unions, 
heap, arrays and pointer arithmetic. 
Section~\ref{sec:measurements} presents the empirical measurements.
Section~\ref{sec:related.work} describes the related work. 
Section~\ref{sec:conclusions} concludes the paper. 

The core ideas of this work were presented in~\cite{gpg.sas.16}. Apart from providing better explanations of the ideas, this paper covers the following additional aspects of this work:
\begin{itemize}
\bitem Many more details of edge composition such as 
	\begin{inparaenum}[\em (a)]
	\item descriptions of \stcomp and \ttcomp compositions (Section~\ref{sec:meaningful.edge.composition}),
	\item derivations of usefulness criteria depending upon the type of compositions (Section~\ref{subsec:derivation.constraint}), and
	\item comparison of edge composition with matrix multiplication (Section~\ref{sec:why.not.matrix.mult}) and
	      dynamic transitive closure (Section~\ref{sec:dynamic-trans-closure}).
	\end{inparaenum}
\bitem Soundness proofs for points-to analysis using \gpgs by defining the concepts of a concrete memory (created along a single
control flow path reaching a program point) and an abstract memory (created along all control flow paths reaching a program point) (Section~\ref{sec:sound.sff}).
\bitem Handling of advanced features such as function pointers, structures, unions, heap memory, arrays, pointer arithmetic, etc. (Section~\ref{sec:level_3}).
\end{itemize}

\mysection{Motivation and Contributions}
\label{sec:motivation}

This section highlights the issues in constructing bottom-up summary flow functions for points-to analysis. We also provide a
brief overview of the past approaches along with their limitations and 
describe our contributions by showing
how our representation of summary flow 
functions for points-to analysis overcomes these limitations.

\newcommand{\commonG}{%
\begin{pspicture}(0,0)(28,15)
\putnode{g1}{origin}{3}{12}{\pscirclebox[framesep=1.]{\newedge}}
\putnode{g2}{g1}{10}{0}{\pscirclebox[framesep=.2]{$\phi_1$}}
\putnode{g3}{g2}{0}{-9}{\pscirclebox[framesep=.2]{$\phi_2$}}
\putnode{g4}{g1}{0}{-9}{\pscirclebox[framesep=1.]{$m$}}
\ncline{->}{g1}{g2}
\ncline{->}{g2}{g3}
\aput[2pt](.5){$f$}
\nccurve[angleA=-45,angleB=225,nodesepA=-.7,offsetA=0,nodesepB=-.7,offsetB=0,ncurv=3.5]{<-}{g3}{g3}%
\aput[2pt](.5){$f$}
\ncline[nodesepA=-.75,nodesepB=-.85]{->}{g4}{g2}
\ncline[nodesepB=-.4]{->}{g4}{g3}
\end{pspicture}
}

\subsection{Issues in Constructing Summary Flow Functions for Points-to Analysis}
\label{sec:issues}

Construction of bottom-up parameterized summary
flow functions requires 
\begin{itemize}
\item[$\bullet$] \emph{composing} statement-level flow functions to summarize the effect of 
	a sequence of statements appearing in a control flow path, and
\item[$\bullet$] \emph{merging} the composed flow functions to represent multiple control flow paths reaching a join point in
        the control flow graph.
\end{itemize}
An important requirement of such a summary flow function is that it should be compact and that its size should
be independent of the size of the procedure it represents. This seems hard
because the flow functions need
to handle indirectly accessed unknown pointees. When these pointees are defined in caller procedures, their information
is not available in a bottom-up construction; information reaching a procedure from its callees is available during
bottom-up construction but not the information reaching from its callers. The presence of
function pointers passed as parameters pose an additional challenge
for bottom-up construction for a similar reason. 

\begin{figure}[t]
\begin{center}
\begin{tabular}{|l|l|l|c|}
\hline
\multicolumn{2}{|c|}{\multirow{2}{*}{Pointer Statement}}
	& \multicolumn{1}{c|}{Flow Function $f \in \ffspace = \{\adf,\cpf,\stf,\ldf\,\}$,}
	&  Placeholders\rule{0em}{1em}
	\\
\multicolumn{2}{|c|}{}
	& \multicolumn{1}{c|}{$f:2^{\ptsname} \mapsto 2^{\ptsname}$}
	&  in $X$\rule[-.5em]{0em}{1em} 
	\\ \hline\hline
Address
\rule[-1.8em]{0em}{2.9em}%
	& $x = \&y$
        & $\adf_{xy}(X) = \begin{array}[t]{l}
		X - \{(x,l_{1}) \mid l_{1} \in L\}\ \cup 
		\\
		\{(x,y)\}
	  \end{array}
          $ 
        &  $\emptyset$
	\\ \hline

Copy
\rule[-1.8em]{0em}{2.9em}%
	& $x = y$
        & $\cpf_{xy}(X) = \begin{array}[t]{l}
		X - \{(x,l_{1}) \mid l_{1} \in L\}\ \cup
		\\
		\{(x,\phi_{1}) \mid (y,\phi_{1}) \in X\}
	  \end{array}
          $ 
        &   $\phi_{1}$
	\\ \hline
Store
\rule[-1.8em]{0em}{2.9em}%
	& $*x = y$
        & $\stf_{xy}(X) = \begin{array}[t]{l}
		X - \{(\phi_{1},l_{1}) \mid (x,\phi_{1}) \in X, l_{1} \in L\} \; \cup 
		\\
		\{(\phi_{1},\phi_{2}) 
          \mid \{ (x,\phi_{1}), (y, \phi_{2})\} \subseteq X\}
	  \end{array}
          $ 
        &  $\phi_{1}, \phi_{2}$
	\\ \hline

Load
\rule[-1.8em]{0em}{2.9em}%
	& $x = *y$
        & $\ldf_{xy}(X) = \begin{array}[t]{l}
		X - \{(x,l_{1}) \mid (x,l_{1}) \in L\}\ \cup 
		\\
		\{(x,\phi_{2}) \mid \{ (y,\phi_{1}), (\phi_{1},\phi_{2}) \} \subseteq X\}
	  \end{array}
          $ 
        &  $\phi_{1}, \phi_{2}$
	\\ \hline

\end{tabular}
\end{center}
\caption{Points-to analysis flow functions for basic pointer assignments.}
\label{tab:flow.func.stmt}
\end{figure}

\subsection{Modelling Access of Unknown Pointees}
\label{sec:problems_sff}

The main difficulty in reducing meets (i.e. merges) and compositions of points-to analysis flow functions is
modelling the accesses of pointees when they are not known. 
For the statement sequence
\text{$x =  *y; \; z =  *x$} if 
the pointee information of $y$ is not available, 
it is difficult to describe the
effect of these statements on points-to relations symbolically. 
A common solution for this is to use \emph{placeholders}\footnote{Placeholders are referred to as 
external variables in~\cite{summ1} and as extended parameters in~\cite{ptf}. They are parameters
of the summary flow function (and not of the procedure for which the summary flow function is constructed).}  
for indirect accesses. 
We explain the use of placeholders below and argue that they
prohibit compact representation of summary flow functions because
\begin{itemize}
\item[$\bullet$] the resulting representation of flow functions is not closed under composition, and
\item[$\bullet$] for flow-sensitive points-to analysis, a separate placeholder may be required for different occurrences of the same variable in different statements.
\end{itemize}

Let \vars and \text{$\ptrs \subseteq \vars$}
denote the sets of locations and pointers in a program.
Then, the points-to
information is  subset of 
\text{$\ptsname = \ptrs \times \vars$}.
For a given statement, a flow function for
points-to analysis computes points-to information after the statement by
incorporating its effect on the points-to information
that holds before the statement. It has the form \text{$f:\ 2^{\ptsname}
\to 2^{\ptsname}$}. 
Figure~\ref{tab:flow.func.stmt} enumerates the space of flow functions
for basic pointer assignments.\footnote{Other pointer assignments involving structures and heap are handled as described in Section~\ref{subsec:struct_heap}.}
 The flow functions are named in terms of the
variables appearing in the assignment statement and are parameterized
on the input points-to information $X$ which may depend on the calling
context. This is described in terms of placeholders in $X$ denoted by
$\phi_1$ and $\phi_2$ which are placeholders for the information in $X$. 
It is easy to see that the function space \text{$\ffspace  =
\{\adf, \cpf, \stf, \ldf\,\}$}  is not
closed under composition. 

\begin{figure}[t]
\centering
\small
\psset{unit=.78mm}
\psset{arrowsize=2.2}
\begin{tabular}{c|c|c}
\begin{pspicture}(-13,11)(36,49)

\newcommand{\hdist}{13}
\newcommand{\vdist}{11}
\newcommand{\tdist}{24}
\newcommand{\slantedstrikeoff}{%
\ncput[npos=.15]{/}
\ncput[npos=.30]{/}
\ncput[npos=.45]{/}
\ncput[npos=.60]{/}
\ncput[npos=.75]{/}
}
\newcommand{\rslantedstrikeoff}[1]{%
\ncput[npos=.15,nrot=#1]{\scalebox{.8}{/}}
\ncput[npos=.27,nrot=#1]{\scalebox{.8}{/}}
\ncput[npos=.39,nrot=#1]{\scalebox{.8}{/}}
\ncput[npos=.51,nrot=#1]{\scalebox{.8}{/}}
\ncput[npos=.63,nrot=#1]{\scalebox{.8}{/}}
}
\newcommand{\curvestrikeoff}{%
\ncput[npos=.35]{/}
\ncput[npos=.45]{/}
\ncput[npos=.55]{/}
\ncput[npos=.65]{/}
}
\newcommand{\horzstrikeoff}{%
\ncput[npos=.15]{$-$}
\ncput[npos=.30]{$-$}
\ncput[npos=.45]{$-$}
\ncput[npos=.60]{$-$}
}

\putnode{x1}{origin}{18}{34}{\pscirclebox[framesep=1.82]{$x$}}
\putnode{d}{x1}{-\hdist}{0}{\pscirclebox[framesep=1.35]{$d$}}
\putnode{y1}{d}{-\hdist}{0}{\pscirclebox[framesep=1.5]{$y$}}
\putnode{b1}{d}{0}{\vdist}{\pscirclebox[framesep=1.37]{$b$}}
\putnode{p1}{d}{0}{-\vdist}{\pscirclebox[framesep=.3]{$\phi_1$}}
\putnode{p2}{p1}{\hdist}{0}{\pscirclebox[framesep=.3]{$\phi_2$}}

\putnode{z1}{p2}{\hdist}{\vdist}{\pscirclebox[framesep=1.6]{$z$}}

\putnode{u}{z1}{0}{\vdist}{\pscirclebox[framesep=1.68]{$u$}}
\putnode{v}{z1}{0}{-\vdist}{\pscirclebox[framesep=1.66]{$v$}}

\putnode{a}{y1}{0}{-8}{\pscirclebox[framesep=1.72]{$a$}}
\putnode{e}{a}{6}{-8}{\pscirclebox[framesep=1.72]{$e$}}
\nccurve[angleA=250,angleB=160,nodesepA=-.4,nodesepB=-.4]{->}{a}{e}
\ncline[nodesep=-1]{->}{x1}{p1}
\rslantedstrikeoff{60}
\ncline[nodesep=-1]{->}{x1}{b1}
\ncline{->}{p1}{p2}
%
\ncline[nodesep=-1]{->}{z1}{p2}
\rslantedstrikeoff{60}
\ncline{->}{z1}{u}
\ncline{->}{z1}{v}
\ncline{->}{p1}{d}
\ncline[nodesepA=-1,nodesepB=-1]{->}{y1}{p1}
\ncline[nodesep=-1]{->}{y1}{b1}
\ncline{->}{b1}{d}

\putnode[l]{w}{e}{-10}{-7}{
(a) $x$ and $y$ are \may aliased
	}

\end{pspicture}
&
\begin{pspicture}(-13,11)(36,49)

\newcommand{\hdist}{13}
\newcommand{\vdist}{12}
\newcommand{\tdist}{24}
\newcommand{\slantedstrikeoff}{%
\ncput[npos=.15]{/}
\ncput[npos=.30]{/}
\ncput[npos=.45]{/}
\ncput[npos=.60]{/}
\ncput[npos=.75]{/}
}
\newcommand{\rslantedstrikeoff}[1]{%
\ncput[npos=.15,nrot=#1]{\scalebox{.8}{/}}
\ncput[npos=.27,nrot=#1]{\scalebox{.8}{/}}
\ncput[npos=.39,nrot=#1]{\scalebox{.8}{/}}
\ncput[npos=.51,nrot=#1]{\scalebox{.8}{/}}
\ncput[npos=.63,nrot=#1]{\scalebox{.8}{/}}
}
\newcommand{\curvestrikeoff}{%
\ncput[npos=.35]{/}
\ncput[npos=.45]{/}
\ncput[npos=.55]{/}
\ncput[npos=.65]{/}
}
\newcommand{\horzstrikeoff}{%
\ncput[npos=.15]{$-$}
\ncput[npos=.30]{$-$}
\ncput[npos=.45]{$-$}
\ncput[npos=.60]{$-$}
}

\putnode{x1}{origin}{18}{34}{\pscirclebox[framesep=1.82]{$x$}}
\putnode{d}{x1}{-\hdist}{0}{\pscirclebox[framesep=1.35]{$d$}}
\putnode{y1}{d}{-\hdist}{0}{\pscirclebox[framesep=1.5]{$y$}}
\putnode{b1}{d}{0}{\vdist}{\pscirclebox[framesep=1.37]{$b$}}
\putnode{p1}{d}{0}{-\vdist}{\pscirclebox[framesep=.3]{$\phi_1$}}
\putnode{p2}{p1}{\hdist}{0}{\pscirclebox[framesep=.3]{$\phi_2$}}
\putnode{p3}{y1}{0}{\vdist}{\pscirclebox[framesep=.3]{$\phi_3$}}

\putnode{z1}{p2}{\hdist}{\vdist}{\pscirclebox[framesep=1.6]{$z$}}

\putnode{u}{z1}{0}{\vdist}{\pscirclebox[framesep=1.68]{$u$}}
\putnode{v}{z1}{0}{-\vdist}{\pscirclebox[framesep=1.66]{$v$}}

\putnode{a}{y1}{0}{-8}{\pscirclebox[framesep=1.72]{$a$}}
\putnode{e}{a}{6}{-8}{\pscirclebox[framesep=1.72]{$e$}}
\nccurve[angleA=250,angleB=160,nodesep=-.4]{->}{a}{e}
\ncline[nodesep=-1]{->}{x1}{p1}
\rslantedstrikeoff{60}
\ncline[nodesep=-1]{->}{x1}{b1}
\ncline{->}{p1}{p2}
%
\ncline[nodesep=-1]{->}{z1}{p2}
\rslantedstrikeoff{60}
\ncline{->}{z1}{u}
\ncline{->}{z1}{v}
\ncline[nodesep=-1]{->}{y1}{b1}
\ncline{->}{y1}{p3}
\ncline[nodesep=-1]{->}{p3}{d}
\ncline{->}{b1}{d}

\putnode[l]{w}{e}{-10}{-7}{
(b) $x$ and $y$ are not aliased
	}

\end{pspicture}
&
\begin{pspicture}(-13,11)(40,49)

\newcommand{\hdist}{13}
\newcommand{\vdist}{12}
\newcommand{\tdist}{24}
\newcommand{\slantedstrikeoff}{%
\ncput[npos=.15]{/}
\ncput[npos=.30]{/}
\ncput[npos=.45]{/}
\ncput[npos=.60]{/}
\ncput[npos=.75]{/}
}
\newcommand{\rslantedstrikeoff}[1]{%
\ncput[npos=.15,nrot=#1]{/}
\ncput[npos=.30,nrot=#1]{/}
\ncput[npos=.45,nrot=#1]{/}
\ncput[npos=.60,nrot=#1]{/}
\ncput[npos=.75,nrot=#1]{/}
}
\newcommand{\curvestrikeoff}{%
\ncput[npos=.35]{/}
\ncput[npos=.45]{/}
\ncput[npos=.55]{/}
\ncput[npos=.65]{/}
}
\newcommand{\horzstrikeoff}{%
\ncput[npos=.15]{$-$}
\ncput[npos=.30]{$-$}
\ncput[npos=.45]{$-$}
\ncput[npos=.60]{$-$}
}

\putnode{x1}{origin}{18}{34}{\pscirclebox[framesep=1.82]{$x$}}
\putnode{d}{x1}{-\hdist}{0}{\pscirclebox[framesep=1.35]{$d$}}
\putnode{y1}{d}{-\hdist-1}{0}{\pscirclebox[framesep=1.5]{$y$}}
\putnode{b1}{d}{0}{\vdist}{\pscirclebox[framesep=1.37]{$b$}}

\putnode{z1}{p2}{\hdist}{\vdist}{\pscirclebox[framesep=1.6]{$z$}}

\putnode{u}{z1}{0}{\vdist}{\pscirclebox[framesep=1.68]{$u$}}
\putnode{v}{z1}{0}{-\vdist}{\pscirclebox[framesep=1.66]{$v$}}

\putnode{a}{y1}{5}{-8}{\pscirclebox[framesep=1.72]{$a$}}
\putnode{e}{a}{6}{-8}{\pscirclebox[framesep=1.72]{$e$}}
\nccurve[angleA=250,angleB=160,nodesep=-.4]{->}{a}{e}
\nbput[labelsep=.25,npos=.3]{1,0}
%
%
\nccurve[angleA=90,angleB=360,nodesep=-.4]{->}{x1}{b1}
\nbput[labelsep=0,npos=.4]{1,0}
%
%
%
\ncline{->}{z1}{u}
\nbput[labelsep=.75,npos=.5]{1,0}
\ncline{->}{z1}{v}
\naput[labelsep=.75,npos=.5]{1,0}
%
%
\nccurve[angleA=90,angleB=180,nodesep=-.3]{->}{y1}{b1}
\naput[labelsep=0,npos=.4]{1,0}
\nccurve[angleA=-30,angleB=200,nodesepA=-.5,nodesepB=-.4]{->}{x1}{z1}
\nbput[labelsep=.5,npos=.4]{2,1}
\nccurve[angleA=-30,angleB=200,nodesepA=-.5,nodesepB=-.4]{->}{y1}{d}
\naput[labelsep=.75,npos=.6]{2,0}
%
\ncline{->}{b1}{d}
\naput[labelsep=.5,npos=.5]{1,0}

\putnode[l]{w}{e}{8}{-7}{
(c) \gpg
	}

\end{pspicture}
\end{tabular}
\caption{PTFs/\gpg for procedure $g$ of
Figure~\ref{fig:mot_eg} for points-to analysis using placeholders $\phi_i$.
Edges deleted due to flow-sensitivity are struck off. Our proposed representation \gpg with no explicit placeholders.}
\label{fig:summ.flow.func.with.abstract.nodes}
\end{figure}

\exmpbeg
Let $f$ represent the composition of
flow functions for the statement sequence \text{$x =  *y; \; z =  *x$}.
Then
\[
f(X) = \ldf_{zx}(\ldf_{xy}(X)) = \begin{array}[t]{l}
	\left( X - \left(\{(x,l_{1}) \mid (x,l_{1}) \in L\}\ \cup 
			\{(z,l_{1}) \mid (z,l_{1}) \in L\}\ \right)\right)
	\\
	\cup \;
		\{(x,\phi_{2}) \mid \{ (y,\phi_{1}), (\phi_{1},\phi_{2}) \} \subseteq X\}
	\\
	\cup \;
		\{(z,\phi_{3}) \mid \{ (y,\phi_{1}), (\phi_{1},\phi_{2}), (\phi_2, \phi_3) \} \subseteq X\}
	\end{array}
\]
This has three placeholders and cannot be reduced to any of the four flow functions 
in the set.
\exmpend

\exmpbeg
\label{exmp:multiple.placeholders}
Consider the code snippet on the right for constructing a flow-sensitive summary
\setlength{\intextsep}{-.4mm}%
\setlength{\columnsep}{2mm}%
\begin{wrapfigure}{r}{25mm}
\renewcommand{\arraystretch}{.9}%
$
\setlength{\arraycolsep}{3pt}
\begin{array}{|lrcl|}
\hline
s_1:& x &= & *y;
	\\
s_2:& *z &= & q;
	\\
s_3:& p &= & *y;
	\\ \hline
\end{array}
$
\end{wrapfigure}
flow function.
Assume that we use $\phi_1$ as the placeholder to denote the pointees of $y$ and $\phi_2$ as the
placeholder to denote the pointees of pointees of $y$. 
We cannot guarantee
\noindent that the pointees of $y$ or pointees of pointees of $y$ remains same in 
$s_1$ and $s_3$ because statement $s_2$ could have a side effect of changing either one of them
depending upon the aliases present in the calling context. 
Under the C model, only one of the first two combinations of aliases is possible.
Assuming that $\phi_3$ is the placeholder for $q$, 
\begin{itemize}
\item[$\bullet$] When $*z$ is aliased to $y$ before statement $s_1$, $y$ is redefined and hence, the placeholder for pointees of $y$ in $s_3$ will now be 
$\phi_3$ otherwise it will be $\phi_1$. 
\item[$\bullet$] When $z$ is aliased to $y$ before statement $s_1$, pointees of $y$ i.e., $\phi_1$ is redefined and hence, the placeholder for pointees of pointees of $y$ in $s_3$ will be represented by $\phi_3$ otherwise it will be $\phi_2$.
\item[$\bullet$] When $z$ and $y$ are not related, neither $y$ nor pointees of $y$ are redefined and hence, the placeholders for pointees of $y$ and pointees of pointees of $y$ for statement $s_3$ will be same as that
of statement $s_1$.
\end{itemize}
Thus the decision to reuse the placeholder for a flow-sensitive summary flow function depends on the aliases present in the 
calling contexts.
It is important to observe that the combination of aliasing patterns involving other variables are ignored. Only the aliases
that are likely to affect the accesses because of a redefinition need to be considered when summary flow functions are constructed.

This difficulty can be overcome by avoiding the kill due to $s_2$ and
using $\phi_1$ for pointees of $y$ and $\phi_2$ for pointees of pointees of $y$ in both $s_1$ and $s_3$.
If $z$ is aliased to $y$ or $*z$ is aliased to $y$ before statement $s_1$ then
both $x$ and $p$ will point to both $\phi_2$ and $\phi_3$ which is imprecise.
Effectively, the summary flow function becomes flow-insensitive affecting the precision 
of the analysis.

Thus, introducing placeholders for the unknown pointees is not sufficient but
the knowledge of aliases in the calling context is also equally important for introducing the placeholders.
\exmpend

\subsection{An Overview of Past Approaches}
\label{sec:overview_past_app}

In this section,
we explain two approaches that construct the summary flow functions for points-to analysis. 
Other related investigations
have been reviewed in Section~\ref{sec:related.work}; the description in this section serves as a background to our
contributions.

\begin{itemize}
\item[$\bullet$] \emph{Using aliasing patterns to construct a collection of partial transfer functions (PTFs)}. 

      This approach is ``context-based'' as the information about the aliases present in the calling contexts is used for summary flow 
	function construction. A different summary flow function is constructed for every combination of aliases found in the 
	calling contexts using the placeholders for representing the unknown pointees.
This requires creation of multiple versions of a summary flow function which
      is represented by a collection of \emph{partial transfer functions} (PTFs). A PTF is constructed for 
      the aliases that could occur for a given list of
      parameters and global variables accessed in a procedure~\cite{ptf,summ2}. 

\exmpbeg
	For procedure $g$ of the program in Figure~\ref{fig:mot_eg}, three placeholders $\phi_1, \phi_2$, and $\phi_3$ 
have been used in the PTFs shown in Figures~\ref{fig:summ.flow.func.with.abstract.nodes}(a) and (b).
The possibility that $x$ and $y$ may or may not be aliased gives rise to two PTFs.
\exmpend

      The main limitation of this approach is that the number of PTFs could increase combinatorially
      with the number of dereferences of globals and parameters. 

	\exmpbeg
	For four dereferences,
      we may need 15 PTFs.
	Consider four pointers $a,b,c,d$.  Either none of them is aliased (1 possibility); 
	only two of them are aliased:
	\text{$(a,b)$},
	\text{$(a,c)$},
	\text{$(a,d)$},
	\text{$(b,c)$},
	\text{$(b,d)$}, or
	\text{$(c,d)$} (6 possibilities);
	only three of them are aliased:
	\text{$(a,b,c)$},
	\text{$(a,b,d)$},
	\text{$(a,c,d)$}, or
	\text{$(b,c,d)$}
	 (total 4 possibilities);
	all four of them are aliased:
	\text{$(a,b,c,d)$}
	 (1 possibility);
	groups of aliases of two each:
	\text{$\{(a,b), (c,d)\}$},
	\text{$\{(a,c), (b,d)\}$}, or
	\text{$\{(a,d), (b,c)\}$}
	 (3 possibilities). Thus the total number of PTFs is \text{$1+6+4+1+3=15$}.
\exmpend

      PTFs that do  not correspond to actual aliasing patterns
      occurring in a program are irrelevant. 
      They can be excluded by a preprocessing to discover the combination of aliases present in a program so that
      PTF construction can be restricted to the discovered combinations~\cite{ptf,summ2}. The number of PTFs could still be large.

      Although this approach does not introduce any imprecision, our measurements show that the number of aliasing
      patterns occurring in practical programs is very large which limits the usefulness of this approach. 

\item[$\bullet$] \emph{Single summary flow function without using aliasing patterns}. 

	This approach does not make
	any assumption about aliases in the calling context and constructs a single summary flow function
	for a procedure. Hence, it is ``context independent''.
	Owing to the absence of alias information in the calling contexts,
        this approach uses a new placeholder $\phi_4$ for pointee of $y$ and also $\phi_5$ for pointee of pointee of $y$ in $s_3$ 
	in Example~\ref{exmp:multiple.placeholders}. Thus, different placeholders for different accesses of the same variable are required thereby increasing the number of placeholders and hence 
	the size of summary flow function.
	In a degenerate case, the size of summary flow function may be proportional to the number of statements represented by the summary 
	flow function. This is undesirable because it may be better not to create summary flow functions
	and retain the original statements whose flow functions are applied one after the other.

	Separate placeholders for different occurrences of a variable can be avoided if
	points-to information is not killed by the summary flow functions~\cite{summ1,purity1,Whaley}. Another alternative is to use 
	flow-insensitive summary flow functions~\cite{DBLP:conf/aplas/FengWDD15}.
        However, both these cases introduces imprecision. 
\end{itemize}

\subsection{Our Contributions}
\label{sec:proposal}

A fundamental problem with placeholders is that they
explicate unknown locations by naming them, resulting in either a large number of
placeholders (e.g., a \gpg edge \de{\cdot}{i,j}{\cdot} would require \text{$i+j-1$} placeholders) 
or multiple summary flow functions for different aliasing patterns that exist in
the calling contexts.
We overcome this difficulty 
by representing the summary flow function of a procedure in
the form of a graph called {\it{Generalized Points-to Graph}} (\gpg) 
and use it for flow-and context-sensitive points-to analysis.

\gpgs leave pointees whose information is not available during summary construction, implicit. Our representation is characterized by the following:
\begin{enumerate}[(a)]
\item[$\bullet$] We do not need placeholders (unlike~\cite{summ1,purity1,Whaley,ptf,summ2}. This is possible because we 
	encode indirection levels as edge labels by
	replacing a sequence of indirection operators ``$*$'' by a number.\footnote{This is somewhat similar
	to choosing a decimal representation for integers over Peaono's representation or replacing
	a unary language by a binary or n-nary language~\cite{Hopcroft:2001:IAT:568438.568455}.}
\item[$\bullet$] We do not require any assumptions/information about aliasing patterns in the calling contexts 
       (unlike~\cite{ptf,summ2}) and
construct a single summary flow function per procedure (unlike~\cite{ptf,summ2})
      without introducing the imprecision introduced by~\cite{summ1,purity1,Whaley}.
\item[$\bullet$] The size of our summary flow function for a procedure does not depend on the number of statements in the procedure and 
      is bounded by the number of global variables, 
      formal parameters of the procedure, and its return value variable (unlike~\cite{summ1,purity1,Whaley}.
\item[$\bullet$] Updates can be performed in the calling contexts (unlike~\cite{DBLP:conf/aplas/FengWDD15,summ1,purity1,Whaley}).
\end{enumerate}
This facilitates the scalability of fully flow- and context-sensitive exhaustive points-to analysis.
We construct context independent summary flow functions and context-sensitivity is achieved through parameterization in terms of
indirection levels.

\subsection{Our Language and Scope}
\label{sec:scope}

We have described our formulations for a language modelled on C and have organized the paper based on the features
included in the language. For simplicity of exposition,
we divide the language features into three levels. Our description of our analysis begins with Level 1 and
is progressively extended to the Level 3.
\begin{center}
\begin{tabular}{|l|c|c|c|l|}
\hline
\multirow{2}{*}{Feature} & \multicolumn{3}{c|}{Level} & \multirow{2}{*}{Sections}
		\\ \cline{2-4}
& 1 & 2 & 3 & 
\\ \hline\hline
Pointers to scalars 
	& $\checkmark$ & &
	&
	\ref{sec:gpg}, \ref{sec:edge.composition}, 
		\ref{sec:intra_gpg}
	\\ \hline
Function Calls and Recursion
	&  & $\checkmark$ &
	& \ref{sec:interprocedural.extensions}
	\\ \hline
Function Pointers
	& & & $\checkmark $
	& \ref{sec:level_3} 
	\\ \hline
Pointers to Structures, Unions, and Heap
	& & & $\checkmark$
	& \ref{sec:level_3}
	\\ \hline
Pointer Arithmetic, Pointers to Arrays, Address Escaping Locals
	& & & $\checkmark$
	& \ref{sec:level_3}
	\\ \hline
\end{tabular}
\end{center}
For the first three features, the information flows from top to bottom in the call graph (i.e., from callers to callees) and hence are naturally 
handled by the top-down approaches of interprocedural analysis. However, a special attention is required for representing this information in the bottom-up approaches.
In case of recursion, the presence of cycle in the call graph requires a fixed point computation regardless of the approach used. 

Levels 1 and 2 handle the core features of the language whereas Level 3 handles the advanced 
features.\footnote{Since our language is modelled after C, statements such as \text{$x=*x$} are prohibited by typing rules, 
and cycles in points-to graph exist only in the presence of structures.}
A preliminary version of this paper restricted to Levels 1 and 2 appeared as~\cite{gpg.sas.16}.

\mysection{Generalized Points-to Graphs (\gpgs)}
\label{sec:gpg}

This section defines generalized points-to graph (\gpg) 
which represents memory manipulations without needing placeholders
for unknown locations.
We define the basic concepts assuming
scalars and pointers in the stack and static memory; see Section~\ref{sec:level_3} for extensions to handle structures, heap, function pointers, etc.

\subsection{Memory and Memory Transformer}
\label{sec:basic.memory.and.summflow}

We assume a control flow graph representation containing 3-address code statements.
Program points \tlab, \flab, \slab represent the points just
before the execution of statements. The successors and predecessors
of a program point are denoted by \gsucc and \gpred;
\text{$\gsucc^{*}$} 
\text{$\gpred^{*}$} 
denote their reflexive transitive closures. 
A \emph{control
flow path} is a finite sequence of (possibly repeating) program points \text{$q_0, q_1, \ldots,
q_m$} such that 
\text{$q_{i+1} \in \gsucc(q_i)$}.

Recall that \vars and \text{$\ptrs \subseteq \vars$} denote the sets of locations
and pointers respectively (Section~\ref{sec:problems_sff}). Every location has a content and an address.
The \emph{memory} at a program point is a
relation \text{$\mem \subseteq \ptrs \times (\vars \cup \{?\})$} where
``?'' denotes an undefined location. We view \mem as a graph 
with $\vars \cup \{?\}$ as the set of nodes.
An edge $x \rightarrow y$ in \mem indicates
that $x \in \ptrs$ contains the address of $y \in \vars$. 
The memory associated with a program point \flab is denoted by
$\amemflab$; since \flab could appear in multiple control flow paths and
could also repeat in a given control flow path, $\amemflab$ denotes the memory
associated with all occurrences of \flab.

\begin{center}
\oneDef
\end{center}

The pointees of a set of pointers \text{$X\subseteq \ptrs$}
in \mem are computed by the relation application 
\text{$\mem \; X = \{ y \mid (x,y) \in \mem, x\in X \}$}. 
Let $\amem^i$ denote
a composition of degree $i$. Then, $\mem^i \{x\}$ discovers the $i^{th}$ pointees of $x$ which involves $i$
transitive reads from $x\,$: first $i-1$ addresses are read followed by
the content of the last address. For
composability of \mem, we extend its domain to \text{$\vars\cup 
\{?\}$} by inclusion map. By definition, \text{$\mem^0 \{x\} = \{ x \}$}.

For adjacent program points \flab and \slab, 
\amemslab is computed from \amemflab 
by incorporating the
effect of the statement
between \flab and \slab,
\text{$\amemslab = \left(\flow(\flab, \slab)\right)(\amemflab)$}
where $\flow(\flab, \slab)$ is a
\emph{statement-level flow function} 
representing 
a \emph{memory transformer}. 
For \text{$\slab \in \gsucc^*(\flab)$}, 
the effect of the statements appearing in all control flow paths from \flab to \slab is
computed by
\text{$\amemslab = \left(\asummflow(\flab, \slab)\right)(\amemflab)$}
where the memory transformer \text{$\summflow(\flab,\!\slab)$} is a \emph{summary flow function}
mapping the memory at \flab to the memory at \slab. 
Definition \ref{def:basic.concepts} provides an equation to compute \asummflow 
without specifying a representation for it.

 Since control flow paths may contain cycles,
\text{$\asummflow$} is the maximum fixed point of the equation 
where 
\begin{inparaenum}[\em (a)]
\item the composition of \asummflow{}s  is denoted by $\circ$ such that
\text{$\left(g \circ\! f \right) (\cdot) = g\left(f\left(\cdot\right)\right)$}, 
\item \asummflow{}s are merged using $\sqcap$,
\item \Base captures the base case, and
\item \text{$\asummflow_{id}$} is the identity flow function.
\end{inparaenum}
Henceforth, we use the term memory transformer for a summary flow function \asummflow.
The rest of the paper proposes \gpg as a compact representation for \asummflow.
Section~\ref{sec:mt.pta} defines \gpg and Section~\ref{sec:lattice} defines its lattice.

\subsection{Generalized Points-to Graphs for Representing Memory Transformers}
\label{sec:mt.pta}

The classical memory transformers 
explicate the unknown locations using placeholders. This is a
low level abstraction close to the memory, defined in terms of classical 
points-to facts:
Given locations \text{$x,y \in \vars$},
a classical points-to fact \de{x}{}{y} in memory \amem asserts that $x$ holds the address of $y$.
We propose a higher level abstraction of the memory without explicating the unknown locations.

\smallskip
\noindent%
\fbox{
\parbox{136mm}{
\twoDef
}}
\smallskip

\begin{figure}[t]
\centering
\setlength{\tabcolsep}{1.25mm}
\psset{arrowsize=1.5}
\begin{tabular}{|c|c|c|c|c|c|c|}
\hline
\multicolumn{1}{|l|}{Pointer }
	& \multicolumn{1}{l|}{Memory graph after}
	& Pointers 
\rule[-.1em]{0em}{1em}
	& \multirow{2}{*}{Pointees}
	& \gpg 
	& \multicolumn{1}{l|}{Pointers}
	& \multicolumn{1}{l|}{Effect on \amem after}
\\
asgn.
	& \multicolumn{1}{l|}{the assignment}
	& \multicolumn{1}{l|}{defined}
	& 
	& edge
	& \multicolumn{1}{l|}{over-written}
	& \multicolumn{1}{l|}{the assignment}
	\\ \hline\hline
$x = \&y$
	& 
	\begin{tabular}{@{}c@{}}
	\psset{unit=.1mm}
	\begin{pspicture}(30,50)(300,90)
	  \putnode{x}{origin}{60}{65}{\pscirclebox[fillstyle=solid,fillcolor=black,framesep=8.5]{}}
	  \putnode{w}{x}{-20}{0}{$x\;$}
	  \putnode{z}{x}{70}{0}{\pscirclebox[doubleline=true,fillstyle=solid,fillcolor=black,framesep=5]{}}
	  \putnode{w}{z}{30}{0}{$y$}
	  \ncline[linewidth=7,arrowsize=18,doublesep=.]{->}{x}{z}
          \end{pspicture}
        \end{tabular}

	& $\amem^0\{x\}$
	& $\amem^0\{y\}$
	& \de{x}{1,0}{y}
\rule[-.5em]{0em}{1.8em}
	& $\amem^0\{x\}$
	& \text{$\amem^1\{x\}  =  \amem^0\{y\}$}
	\\ \hline
$x = y$
	& 
	\begin{tabular}{@{}c@{}}
	\psset{unit=.1mm}
	\begin{pspicture}(30,50)(300,90)
	  \putnode{x}{origin}{60}{65}{\pscirclebox[fillstyle=solid,fillcolor=black,framesep=8.5]{}}
	  \putnode{w}{x}{-20}{0}{$x\;$}

	  \putnode{y}{x}{140}{0}{\pscirclebox[fillstyle=solid,fillcolor=black,framesep=8.5]{}}
	  \putnode{w}{y}{20}{0}{$\;y$}

	  \putnode{z}{x}{70}{0}{\pscirclebox[doubleline=true,fillstyle=solid,fillcolor=black,framesep=5]{}}

	  \ncline{->}{y}{z}
	  \ncline[linewidth=7,arrowsize=18,doublesep=2]{->}{x}{z}
          \end{pspicture}
        \end{tabular}

	& $\amem^0\{x\}$
	& $\amem^1\{y\}$
	& \de{x}{1,1}{y}
\rule[-.5em]{0em}{1.8em}
	& $\amem^0\{x\}$
	& \text{$\amem^1\{x\}  =  \amem^1\{y\}$}
	\\ \hline
$x = *y$
	& 
	\begin{tabular}{@{}c@{}}
	\psset{unit=.1mm}
	\begin{pspicture}(30,50)(300,90)
	  \putnode{x}{origin}{60}{65}{\pscirclebox[fillstyle=solid,fillcolor=black,framesep=8.5]{}}
	  \putnode{w}{x}{-20}{0}{$x\;$}

	  \putnode{y}{x}{210}{0}{\pscirclebox[fillstyle=solid,fillcolor=black,framesep=8.5]{}}
	  \putnode{w}{y}{20}{0}{$\;y$}

	  \putnode{z}{x}{70}{0}{\pscirclebox[doubleline=true,fillstyle=solid,fillcolor=black,framesep=5]{}}
	  \putnode{p}{x}{140}{0}{\pscirclebox[fillstyle=solid,fillcolor=black,framesep=8.5]{}}

	  \ncline{->}{y}{p}
	  \ncline{<-}{z}{p}
	  \ncline[linewidth=7,arrowsize=18,doublesep=2]{->}{x}{z}
          \end{pspicture}
        \end{tabular}
	& $\amem^0\{x\}$
	& $\amem^2\{y\}$
	& \de{x}{1,2}{y}
\rule[-.5em]{0em}{1.8em}
	& $\amem^0\{x\}$
	& \text{$\amem^1\{x\}  =  \amem^2\{y\}$}
	\\ \hline
$*x = y$
	& 
	\begin{tabular}{@{}c@{}}
	\psset{unit=.1mm}
	\begin{pspicture}(30,50)(300,90)
	  \putnode{x}{origin}{60}{65}{\pscirclebox[fillstyle=solid,fillcolor=black,framesep=8.5]{}}
	  \putnode{w}{x}{-20}{0}{$x\;$}

	  \putnode{y}{x}{210}{0}{\pscirclebox[fillstyle=solid,fillcolor=black,framesep=8.5]{}}
	  \putnode{w}{y}{20}{0}{$\;y$}

	  \putnode{z}{x}{70}{0}{\pscirclebox[fillstyle=solid,fillcolor=black,framesep=8.5]{}}
	  \putnode{p}{x}{140}{0}{\pscirclebox[doubleline=true,fillstyle=solid,fillcolor=black,framesep=5]{}}

	  \ncline{->}{y}{p}
	  \ncline[linewidth=7,arrowsize=18,doublesep=2]{->}{z}{p}
	  \ncline{->}{x}{z}
          \end{pspicture}
        \end{tabular}
	& $\amem^1\{x\}$
	& $\amem^1\{y\}$
	& \de{x}{2,1}{y}
\rule[-.5em]{0em}{2.2em}
	& \begin{tabular}{@{}c@{}} $\amem^1\{x\}$ \\ or none \end{tabular}
	& \text{$\amem^2\{x\}  \supseteq  \amem^1\{y\}$}
	\\ \hline
\end{tabular}
\caption{\gpg edges for basic pointer assignments in C. Figure~\ref{fig:basic.gpg.edges.heap} shows \gpg edges for structures and
heap. In the memory graph, a double circle indicates a shared location, a thick arrow shows the newly created edge in the memory
and unnamed nodes may represent multiple pointees.}
\label{fig:basic.gpg.edges}
\end{figure}

Figure~\ref{fig:basic.gpg.edges} illustrates the generalized points-to facts corresponding to the basic pointer assignments in C.
Observe that a classical points-to fact \de{x}{}{y} is a special case of the generalized points-to fact \de{x}{i,j}{y} with $i = 1$ and $j = 0$; the case $i=0$ does not arise.

The generalized points-to facts are more expressive than the classical points-to facts because they can be composed to create
new facts as shown by the example below.
Section~\ref{sec:edge.composition} explains
the process of composing the generalized points-to facts through \emph{edge composition} along with the conditions 
when the facts can and ought to be composed.

\exmpbeg
Statements $s_1$ and $s_2$ to the right are represented by \gpg edges \de{x}{1,0}{y} and 
\setlength{\intextsep}{-.4mm}%
\setlength{\columnsep}{2mm}%
\begin{wrapfigure}{r}{25mm}
$
\setlength{\arraycolsep}{3pt}
\begin{array}{|lrcl|}
\hline
s_1:& x &= & \&y; \rule{0em}{.9em}
	\\
s_2:& z &= & x;
	\\ \hline
\end{array}
$
\end{wrapfigure}
\de{z}{1,1}{x} respectively. 
We can compose the two edges by creating a new edge \de{z}{1,0}{y} indicating that $z$ points-to $y$. Effectively, this converts the
generalized points-to fact for $s_2$ into a classical points-to fact.
\exmpend

Edges in a set are unordered. However, we want a \gpg to represent a flow-sensitive memory transformer which requires the edges to
be ordered.
We impose this ordering externally which allows us to view the set of \gpg edges as a sequence.
A reverse post order traversal over the control flow graph of a procedure dictates this sequence.
It is required only at the interprocedural level when the effect of a callee is incorporated in its caller.
Since a sequence is totally ordered but control flow is partially ordered, the \gpg operations (Section~\ref{sec:interprocedural.extensions}) 
internally relax the total order to ensure that the edges appearing on different control 
flow paths do not affect each other. 
While the visual presentation of \gpgs as graphs is intuitively appealing, it loses the edge-ordering; 
we annotate edges with their ordering explicitly when it matters.


A \gpg is a uniform representation for a memory transformer as well as (an abstraction of) memory. 
This is analogous to a matrix which can be seen both 
as a transformer (for a linear translation in space) and also
as an absolute value.
A points-to analysis using \gpgs begins with generalized points-to facts
\de{\cdot}{i,j}{\cdot} representing memory transformers
which are composed to create new generalized points-to facts with smaller \indlev{\,}s
thereby progressively reducing them to classical points-to facts
\de{\cdot}{1,0}{\cdot} representing memory.

\begin{figure}[t]
\small
\centering
\psset{xunit=.99mm}
\begin{pspicture}(-38,10)(85,61)
\putnode{n1}{origin}{19}{13}{\smallblock{112}}
\putnode{n1}{origin}{19}{13}{
	\renewcommand{\arraystretch}{.7}
	\begin{tabular}{c}
	Basic Concept: A generalized points-to fact represented by a \gpg edge $e\equiv \de{x}{i,j}{y}$ 
	\end{tabular}
	}
\putnode{w}{n1}{60}{2}{
	\rotatebox{30}{
	\; 
	Sec. \ref{sec:mt.pta}
	}
}

\putnode{n2}{n1}{-25}{8}{\bigblock{56}}
\putnode{n2}{n1}{-25}{8}{
	\begin{tabular}{c}
	Edge composition $e_1 \circ e_2$ 
		\\
	$\circ:\E \times \E \to \E$
	\end{tabular}
	}
\putnode{w}{n2}{32}{2}{
	\rotatebox{30}{
	\; Sec.~\ref{sec:edge.composition}
	}
}
\putnode{n3}{n2}{0}{9}{\bigblock{50}}
\putnode{n3}{n2}{0}{9}{
	\begin{tabular}{c}
	Edge reduction $e \circ \summflow$ 
		\\
	$\circ:\E \times \asummflow \to 2^{\E}$
	\end{tabular}
	}
\putnode{w}{n3}{29}{2}{
	\rotatebox{30}{
	\; 
	Sec. \ref{sec:edge.reduction}
	}
}
\putnode{n4}{n3}{0}{9}{\bigblock{44}}
\putnode{n4}{n3}{0}{9}{
	\begin{tabular}{c}
	\gpg update $\summflow\left[X\right]$ 
		\\
	$[\;]:\asummflow \times 2^{\E} \to \asummflow$
	\end{tabular}
	}
\putnode{w}{n4}{26}{2}{
	\rotatebox{30}{
	\; 
	Sec. \ref{sec:gpg.abstract}
	}
}
\putnode{n5}{n4}{0}{9}{\bigblock{38}}
\putnode{n5}{n4}{0}{9}{
	\begin{tabular}{c}
	\gpg composition $\summflow_1 \circ \summflow_2$
		\\
	$\circ:\asummflow \times \asummflow \to \asummflow$
	\end{tabular}
	}
\putnode{w}{n5}{23}{2}{
	\rotatebox{30}{
	\; 
	Sec.  \ref{sec:interprocedural.extensions}
	}
}
\putnode{u}{n5}{4}{10}{\psframebox[framearc=.2,shadow=true,fillstyle=solid,fillcolor=white]{
	{ Construction of \gpg \summflow \ }}}

\putnode{n6}{n1}{33}{8}{\bigblock{40}}
\putnode{n6}{n1}{33}{8}{
	\begin{tabular}{c}
	Edge application $\llbracket e \rrbracket \mem$
		\\
	$\llbracket \; \rrbracket:
	\E \times \mem \to \mem$
	\end{tabular}
	}
\putnode{w}{n6}{24}{2}{
	\rotatebox{30}{
	\; 
	Sec. \ref{sec:dfv_compute}
	}
}
\putnode{n7}{n6}{0}{9}{\bigblock{34}}
\putnode{n7}{n6}{0}{9}{
	\begin{tabular}{c}
	\gpg application $\llbracket \asummflow \rrbracket \mem$
		\\
	$\llbracket \; \rrbracket:
	\asummflow \times \mem \to \mem$
	\end{tabular}
	}
\putnode{w}{n7}{21}{2}{
	\rotatebox{30}{
	\; 
	Sec. \ref{sec:dfv_compute}
	}
}
\putnode{u}{n7}{4}{13}{\psframebox[framearc=.1,shadow=true,fillstyle=solid,fillcolor=white]{
	\begin{tabular}{@{}c@{}}
		Computing points-to 
		\\
		information using \gpg 
		\summflow
		\end{tabular}}}
\end{pspicture}
\caption{A hierarchy of operations for points-to analysis using \gpgs. Each operation is defined in terms of the layers below it.
\E denotes the set of \gpg edges. By abuse of notation, we use \mem and \asummflow also as types to indicate the
signatures of the operations. The operators ``$\circ$'' and ``$\llbracket\; \rrbracket$'' are overloaded and can be disambiguated
using the types of the operands.
}
\label{fig:overview.gpg.pta}
\end{figure}

\subsection{The Lattice of \gpgs}
\label{sec:lattice}

Definition~\ref{def:lattice.gpg} describes the meet semi-lattice of \gpgs. For reasons described later in Section~\ref{sec:interprocedural.extensions},
we need to introduce an artificial $\top$ element denoted $\asummflow_\top$ in the lattice. 
It is used as the initial value in the data flow equations for computing \gpgs 
(Definition~\ref{def:asummflow.construction} which instantiates Definition~\ref{def:basic.concepts} for \gpgs).

\begin{center}
\latticeDef
\end{center}

The sequencing of edges is maintained externally and is explicated where required. This allows us to treat 
a \gpg (other than $\asummflow_{\top}$) as a 
pair of a set of nodes and a set of edges.
The partial order is a
point-wise superset relation applied to the pairs. Similarly, the meet operation
is a point-wise union of the pairs.
It is easy to see that the lattice is finite because the number of locations \vars is finite (being
restricted to static and stack slots). When we extend \gpgs to handle heap memory (Section~\ref{subsec:struct_heap}), explicit summarization
is required to ensure finiteness. The finiteness of the lattice and the monotonicity of \gpg operations guarantee the convergence of 
\gpg computations on a fixed point; starting from $\asummflow_\top$, we compute the maximum fixed point.

For convenience, we treat a \gpg as a set of edges leaving the set of nodes implicit; the \gpg nodes can always be inferred from the \gpg edges.

\begin{figure}[t]
\centering
\setlength{\tabcolsep}{3.5pt}
\small
\begin{tabular}{c@{}c}
\begin{tabular}{|c|c|c|}
\hline
Statement 
	& \multicolumn{2}{c|}{\rule{0em}{1em}\gpg} 
	\\ \cline{2-3}
sequence	
	& Before composition \rule{0em}{1em}
	& After composition 
	\\ \hline \hline
$\begin{array}{rcl}
 x & = & \&y \\
z & = & x
\end{array}$
& 
	\psset{arrowsize=1.6}
	\begin{tabular}{@{}c@{}}
	\begin{pspicture}(0,-1)(28,11)
          \putnode{x}{origin}{3}{5}{\pscirclebox[framesep=1.2]{$z$}}
          \putnode{y}{x}{11}{0}{\pscirclebox[framesep=1.3]{$x$}}
          \putnode{z}{y}{11}{0}{\pscirclebox[framesep=1.]{$y$}}
	  \ncline{->}{x}{y}
	  \bput[4pt](.5){$1,1$}
	  \aput[4pt](.5){\newedge}
	  \ncline{->}{y}{z} 
	  \bput[4pt](.5){$1,0$}
	  \aput[4pt](.5){\prevedge}
  	\end{pspicture}
	\end{tabular}
& 
	\psset{arrowsize=1.6}
	\begin{tabular}{@{}c@{}}
	\begin{pspicture}(0,-1)(29,11)
          \putnode{x}{origin}{3}{3}{\pscirclebox[framesep=1.2]{$z$}}
          \putnode{y}{x}{10}{0}{\pscirclebox[framesep=1.3]{$x$}}
          \putnode{z}{y}{12}{0}{\pscirclebox[framesep=1.]{$y$}}
	  \ncline{->}{x}{y}
	  \bput[4pt](.4){$1,1$}
	  \ncline{->}{y}{z}
	  \bput[4pt](.4){$1,0$}
	  \nccurve[ncurv=.4,arrowsize=2,angleA=50,angleB=145,nodesep=-.5,linestyle=dashed,dash=.8 .5]{->}{x}{z}
	  \aput[-1pt](.5){$(1,1\!-\!1\!+\!0)$}
  	\end{pspicture}
	\end{tabular}
\\ \hline
$\begin{array}{r@{\ }c@{\ }l@{}}
 x & = & \&y \\
 *x & = & z
\end{array}$
& 
	\psset{arrowsize=1.6}
	\begin{tabular}{@{}c@{}}
	\begin{pspicture}(0,-1)(28,12)
          \putnode{x}{origin}{3}{5}{\pscirclebox[framesep=1.2]{$z$}}
          \putnode{y}{x}{11}{0}{\pscirclebox[framesep=1.3]{$x$}}
          \putnode{z}{y}{11}{0}{\pscirclebox[framesep=1.]{$y$}}
	  \ncline{->}{y}{x}
	  \aput[4pt](.40){$2,1$}
	  \bput[4pt](.5){\newedge}
	  \ncline{->}{y}{z}
	  \bput[4pt](.55){$1,0$}
	  \aput[4pt](.5){\prevedge}
  	\end{pspicture}
	\end{tabular}
& 
	\psset{arrowsize=1.6}
	\begin{tabular}{@{}c@{}}
	\begin{pspicture}(0,-1)(29,12)
          \putnode{x}{origin}{3}{3}{\pscirclebox[framesep=1.2]{$z$}}
          \putnode{y}{x}{10}{0}{\pscirclebox[framesep=1.3]{$x$}}
          \putnode{z}{y}{12}{0}{\pscirclebox[framesep=1.]{$y$}}
	  \ncline{->}{y}{x}
	  \aput[4pt](.4){$2,1$}
	  \ncline{->}{y}{z}
	  \bput[4pt](.4){$1,0$}
	  \nccurve[ncurv=.4,angleA=120,angleB=35,nodesep=-.5,linestyle=dashed,dash=.8 .5]{->}{z}{x}
	  \bput[-1pt](.5){$(2\!-\!1\!+\!0,1)$}
  	\end{pspicture}
	\end{tabular}
\\ \hline
\end{tabular}
&
\!
\begin{minipage}{42mm}
\raggedright
\begin{itemize}
\item[] Regardless of the direction of an edge, 
$i$ in \sindlev\, ``$i,j$'' represents its source while $j$
		represents its target. 
\item[] Balancing the \sindlev{\,}s of $x$ (the pivot of composition) in
	\prevedge and \newedge allows us to join $y$ and $z$ to create 
	a reduced edge
	\text{$\rededge = \cop$} shown by dashed arrows.
\smallskip
\end{itemize}
\end{minipage}
\end{tabular}
\caption{Examples of edge compositions for points-to analysis. 
		}
\label{fig:edge.composition.intro}
\end{figure}

\subsection{A Hierarchy of \gpg Operations}

Figure~\ref{fig:overview.gpg.pta} lists the \gpg operations based on the concept of the generalized points-to facts. They are
presented in two separate columns according to the 
two phases of our analysis  
and each layer is defined in terms of the layers below it. 
The operations
are defined in the sections listed against them  in Figure~\ref{fig:overview.gpg.pta}.
\begin{description}
\item[\bf Constructing \gpgs]
An
\emph{edge composition}
\text{$e_1\!\circ e_2$} computes a new edge $e_3$ equivalent to $e_1$ using the points-to information in $e_2$
such that the \indlev of $e_3$ is smaller than that  of $e_1$.
An \emph{edge reduction}
\text{$e_1\circ \asummflow$} computes a set of edges $X$ by composing $e_1$ with the edges in \asummflow.
A \emph{\gpg update}
\text{$\asummflow_1\left[X\right]$} incorporates the effect of the set of edges $X$ in $\asummflow_1$ to
compute a new \gpg $\asummflow_2$. A
 \emph{\gpg composition}
\text{$\asummflow_1\circ \asummflow_2$} composes a callee's \gpg $\asummflow_2$ with \gpg
$\asummflow_1$ at a call point to compute a new \gpg $\asummflow_3$.

\item[\bf Using \gpgs for computing points-to information]
An \emph{edge application} 
\text{$\llbracket e \rrbracket \mem$} computes a new memory $\mem'$ by incorporating the effect of the \gpg edge $e$
in memory \mem.
A \emph{\gpg application} 
\text{$\llbracket\asummflow\rrbracket \mem$} applies the \gpg \asummflow to \mem and computes a new
memory $\mem'$ using edge application iteratively.

\end{description}
These operations allow us to build the
theme of a \gpg being a uniform representation for both memory and memory transformers. This uniformity of representation leads to
the following similarity in operations:
\begin{inparaenum}[\em (a)]
\item an edge application to a memory (\text{$\llbracket e \rrbracket \mem$})
is similar to an edge reduction in \gpg (\text{$e \circ \asummflow$}), and 
\item \gpg application to a memory (\text{$\llbracket\asummflow\rrbracket \mem$}) is similar to \gpg composition
(\text{$\asummflow_1 \circ \asummflow_2$}).
\end{inparaenum}

\mysection{Edge Composition}
\label{sec:edge.composition}

This section defines edge composition as a fundamental operation which is used in
Section~\ref{sec:intra_gpg} for constructing \gpgs. 
We begin by introducing edge composition and then explore the concept in its full glory by describing the types of compositions
and characterizing the properties of compositions such as \usefulness, \relevance, and \conclusiveness.
Some of these considerations
are governed by the goal of including the resulting edges 
in a \gpg \asummflow;
the discussion on inclusion of edges in \gpg \asummflow is relegated to Section~\ref{sec:gpg.abstract}.

Let a statement-level flow function 
\flow be represented by an
edge \newedge (``new'' edge) and consider an existing edge \text{\prevedge $\in \summflow$} (``processed'' edge).
Edges \newedge and \prevedge can be composed (denoted \text{\cop})
provided they have a common node called the \emph{pivot} of 
composition.
The goal is to \emph{reduce} (i.e., simplify) \newedge by using the points-to information from \prevedge. 
This is achieved by using the pivot as a bridge to join the remaining 
two nodes resulting in a reduced edge \rededge. 
This requires the \indlev{\,}s of the pivot in both edges to be made the same.
For example, given edges \text{\newedge $\equiv \de{z}{i,j}{x}$} and \text{\prevedge
$\equiv \de{x}{k,l}{y}$} with a pivot $x$, if \text{$j > k$}, then the
difference \text{$j-k$} can be added to the \indlev{\,}s of nodes in \prevedge,
to view \prevedge as \de{x}{j,(l+j-k)}{y}. This balances the 
\indlev{\,}s of $x$ in the two edges allowing us to create a reduced edge
$\rededge \equiv \de{z}{i,(l+j-k)}{y}$. 
Although this computes the transitive effect of edges, in general, it cannot be modelled using multiplication of matrices
representing graphs as explained in Section~\ref{sec:why.not.matrix.mult}.

\exmpbeg
In the first example in Figure~\ref{fig:edge.composition.intro}, the \indlev{\,}s
of pivot $x$ in both \prevedge and \newedge is the same allowing us to join $z$ and $y$ through
 an edge \de{z}{1,0}{y}. In the second example, 
the difference ($2\!-\!1$) in the \indlev{\,}s of $x$  can be added to the \indlev{\,}s of nodes in \prevedge viewing it as \de{x}{2,1}{y}.
This allows us to join $y$ and $z$ creating the edge \de{y}{1,1}{z}.
\exmpend




\newcommand{\gnode}[5]{
\putnode{#1}{#2}{#3}{#4}{\pscirclebox[fillstyle=solid,fillcolor=black,framesep=1]{}}
\putnode{w}{#1}{0}{-5}{#5}
\ifthenelse{\equal{#2}{origin}}{}{%
\ncline{->}{#2}{#1}
}
}

\newcommand{\figOneOne}{%
\psset{unit=.5mm,arrowsize=3}
\begin{tabular}{@{}c@{}}
\begin{pspicture}(0,0)(38,28)
\putnode{n1}{origin}{5}{18}{\pscirclebox[fillstyle=solid,fillcolor=black,framesep=1]{}}
	\putnode{w}{n1}{-1}{-5}{$x$}
\putnode{n2}{n1}{12}{0}{\pscirclebox[fillstyle=solid,fillcolor=black,framesep=1]{}}
\putnode{n3}{n2}{12}{0}{\pscirclebox[fillstyle=solid,fillcolor=black,framesep=1]{}}
	\putnode{w}{n3}{0}{5}{$y$}
	\putnode{w}{n3}{6}{0}{$\locp$}
\putnode{n4}{n1}{8}{-10}{\pscirclebox[fillstyle=solid,fillcolor=black,framesep=1]{}}
	\putnode{w}{n4}{0}{-5}{$z$}
	\putnode{w}{n4}{6}{0}{\locn}
\ncline[linestyle=dashed,dash=1 1]{->}{n1}{n2}
\ncline{->}{n2}{n3}
\ncline[nodesep=-.25]{->}{n1}{n4}
\end{pspicture}
\end{tabular}
}

\newcommand{\figOneTwo}{%
\psset{unit=.5mm,arrowsize=3}
\begin{tabular}{@{}c@{}}
\begin{pspicture}(0,0)(38,20)
\putnode{n1}{origin}{4}{8}{\pscirclebox[fillstyle=solid,fillcolor=black,framesep=1]{}}
	\putnode{w}{n1}{0}{-5}{$x$}
\putnode{n2}{n1}{14}{0}{\pscirclebox[fillstyle=solid,fillcolor=black,framesep=1]{}}
	\putnode{w}{n2}{0}{-5}{$z$}
	\putnode{w}{n2}{0}{5}{\locp}
\putnode{n3}{n2}{14}{0}{\pscirclebox[fillstyle=solid,fillcolor=black,framesep=1]{}}
	\putnode{w}{n3}{0}{-5}{$y$}
	\putnode{w}{n3}{0}{5}{\locn}
\ncline{->}{n1}{n2}
\ncline{->}{n2}{n3}
\end{pspicture}
\end{tabular}
}

\newcommand{\figOneThree}{%
\psset{unit=.5mm,arrowsize=3}
\begin{tabular}{@{}c@{}}
\begin{pspicture}(0,0)(38,20)
\putnode{n1}{origin}{4}{14}{\pscirclebox[fillstyle=solid,fillcolor=black,framesep=1]{}}
	\putnode{w}{n1}{0}{-5}{$x$}
\putnode{n2}{n1}{12}{0}{\pscirclebox[fillstyle=solid,fillcolor=black,framesep=1]{}}
\putnode{n3}{n2}{12}{0}{\pscirclebox[fillstyle=solid,fillcolor=black,framesep=1]{}}
	\putnode{w}{n3}{0}{5}{$y$}
	\putnode{w}{n3}{6}{0}{\locp}
\putnode{n4}{n2}{8}{-10}{\pscirclebox[fillstyle=solid,fillcolor=black,framesep=1]{}}
	\putnode{w}{n4}{0}{-5}{$z$}
	\putnode{w}{n4}{6}{0}{\locn}
\ncline{->}{n1}{n2}
\ncline[linestyle=dashed,dash=1 1]{->}{n2}{n3}
\ncline[nodesep=-.25]{->}{n2}{n4}
\end{pspicture}
\end{tabular}
}

\newcommand{\figThreeOne}{%
\psset{unit=.5mm,arrowsize=3}
\begin{tabular}{@{}c@{}}
\begin{pspicture}(0,0)(38,20)
\putnode{n1}{origin}{4}{9}{\pscirclebox[fillstyle=solid,fillcolor=black,framesep=1]{}}
	\putnode{w}{n1}{0}{5}{$x$}
\putnode{n2}{n1}{14}{0}{\pscirclebox[fillstyle=solid,fillcolor=black,framesep=1]{}}
	\putnode{w}{n2}{0}{5}{\locn}
\putnode{n3}{n2}{14}{0}{\pscirclebox[fillstyle=solid,fillcolor=black,framesep=1]{}}
	\putnode{w}{n3}{0}{-5}{$y$}
	\putnode{w}{n3}{0}{5}{\locp}
\putnode{n4}{n2}{-10}{-8}{\pscirclebox[fillstyle=solid,fillcolor=black,framesep=1]{}}
	\putnode{w}{n4}{-4}{0}{$z$}
\ncline{->}{n1}{n2}
\ncline{->}{n2}{n3}
\ncline[nodesep=-.25]{->}{n4}{n2}
\end{pspicture}
\end{tabular}
}

\newcommand{\figThreeTwo}{%
\psset{unit=.5mm,arrowsize=3}
\begin{tabular}{@{}c@{}}
\begin{pspicture}(0,0)(38,30)
\putnode{n1}{origin}{2}{18}{\pscirclebox[fillstyle=solid,fillcolor=black,framesep=1]{}}
	\putnode{w}{n1}{0}{-5}{$x$}
\putnode{n2}{n1}{12}{0}{\pscirclebox[fillstyle=solid,fillcolor=black,framesep=1]{}}
	\putnode{w}{n2}{0}{5}{\locp}
	\putnode{w}{n2}{0}{-5}{$y$}
\putnode{n3}{n2}{12}{0}{\pscirclebox[fillstyle=solid,fillcolor=black,framesep=1]{}}
	\putnode{w}{n3}{6}{0}{\locn}
\putnode{n4}{n3}{-8}{-10}{\pscirclebox[fillstyle=solid,fillcolor=black,framesep=1]{}}
	\putnode{w}{n4}{0}{-5}{$z$}
\ncline{->}{n1}{n2}
\ncline{->}{n2}{n3}
\ncline[nodesep=-.25]{->}{n4}{n3}
\end{pspicture}
\end{tabular}
}

\newcommand{\figThreeThree}{%
\psset{unit=.5mm,arrowsize=3}
\begin{tabular}{@{}c@{}}
\begin{pspicture}(0,0)(38,24)
\putnode{n1}{origin}{8}{16}{\pscirclebox[fillstyle=solid,fillcolor=black,framesep=1]{}}
	\putnode{w}{n1}{-4}{0}{$x$}
\putnode{n2}{n1}{14}{-6}{\pscirclebox[fillstyle=solid,fillcolor=black,framesep=1]{}}
	\putnode{w}{n2}{6}{3}{\locp}
	\putnode{w}{n2}{6}{-3}{\locn}
	\putnode{w}{n2}{0}{-5}{$y$}
\putnode{n3}{n1}{0}{-12}{\pscirclebox[fillstyle=solid,fillcolor=black,framesep=1]{}}
	\putnode{w}{n3}{-4}{0}{$z$}
\ncline{->}{n1}{n2}
\ncline{<-}{n2}{n3}
\end{pspicture}
\end{tabular}
}

\begin{figure}[t]
\setlength{\tabcolsep}{3.7pt}
\centering
\begin{tabular}{|c|c|c|
                |c|c|c|
		}
\hline
\multicolumn{3}{|c||}{\rule{0em}{1em}Possible \ssscomp Compositions}
	& \multicolumn{3}{c|}{Possible \stscomp Compositions}
	\\ \hline
\rule[-.9em]{0em}{2.4em}%
 \renewcommand{\arraystretch}{.8}\begin{tabular}{@{}c@{}} Statement \\ sequence\end{tabular}
	&  \renewcommand{\arraystretch}{.8}\begin{tabular}{@{}c@{}}Memory \\ graph\end{tabular}
	&  \renewcommand{\arraystretch}{.8}\begin{tabular}{@{}c@{}}\gpg \\ edges\end{tabular}
	&  \renewcommand{\arraystretch}{.8}\begin{tabular}{@{}c@{}} Statement \\ sequence\end{tabular}
	&  \renewcommand{\arraystretch}{.8}\begin{tabular}{@{}c@{}}Memory \\ graph \end{tabular}
	&  \renewcommand{\arraystretch}{.8}\begin{tabular}{@{}c@{}}\gpg \\ edges\end{tabular}
	\\ \hline\hline
\multicolumn{3}{|c||}{\rule[-.55em]{0em}{1.5em}$\labsrcn < \labsrcp$}
	& \multicolumn{3}{c|}{\rule[-.55em]{0em}{1.5em}$\labtgtn < \labsrcp$}
	\\ \hline
$
\begin{array}{@{}r@{\ }c@{\ }l@{}}
\multicolumn{3}{@{}l@{}}{\psframebox{\text{Ex. \ssa}}}
	\\ \rule{0em}{1em}
*x & = & \& y \\
x  & = & \&z
\end{array}
$
	& \figOneOne
	&
		\begin{tabular}{@{}r@{\ }l@{}}
		 \prevedge: & \de{x}{2,0}{y}
			\\
		 \newedge: & \de{x}{1,0}{z}
			\\
		   & (\irrelevant)
		\end{tabular}
&
$
\begin{array}{@{}r@{\ }c@{\ }l@{}}
		\multicolumn{3}{@{}l@{}}{\psframebox{\text{Ex. \tsa}}}
		\\ \rule{0em}{1em}
*x & = & \& y \\
z  & = & x
\end{array}
$
	& \figThreeOne
	&
		\begin{tabular}{@{}r@{\ }l@{}}
		 \prevedge: & \de{x}{2,0}{y}
			\\
		 \newedge: & \de{z}{1,1}{x}
			\\
	 	  & (not \useful)
		\end{tabular}
	
\\ \hline
\multicolumn{3}{|c||}{\rule[-.55em]{0em}{1.5em}$\labsrcn > \labsrcp$ \text{ (Additionally $\labtgtp \leq \labsrcp$)}}
	& \multicolumn{3}{c|}{\rule[-.55em]{0em}{1.5em}$\labtgtn > \labsrcp$ \text{ (Additionally $\labtgtp \leq \labsrcp$)}}
\\ \hline
		$
		\begin{array}{@{}r@{\ }c@{\ }l@{}}
		\multicolumn{3}{@{}l@{}}{\psframebox{\text{Ex. \ssb}}}
		\\ \rule{0em}{1em}
		x & = & \& z \\
		*x  & = & \&y
		\end{array}
		$
	& \figOneTwo
	&
		\begin{tabular}{@{}r@{\ }l@{}}
		 \prevedge: & \de{x}{1,0}{z}
			\\
		 \newedge: & \de{x}{2,0}{y}
			\\
		\rededge: & \de{z}{1,0}{y}
		\end{tabular}
	&
		$
		\begin{array}{@{}r@{\ }c@{\ }l@{}}
		\multicolumn{3}{@{}l@{}}{\psframebox{\text{Ex. \tsb}}}
		\\ \rule{0em}{1em}
		x & = & \& y \\
		z  & = & * x
		\end{array}
		$
	&  \figThreeTwo
	&
		\begin{tabular}{@{}r@{\ }l@{}}
		 \prevedge: & \de{x}{1,0}{y}
			\\
		 \newedge: & \de{z}{1,2}{x}
			\\
		 \rededge: & \de{z}{1,1}{y}
		\end{tabular}
\\ \hline
\multicolumn{3}{|c||}{\rule[-.55em]{0em}{1.5em}$\labsrcn = \labsrcp$}
	& \multicolumn{3}{c|}{\rule[-.55em]{0em}{1.5em}$\labtgtn = \labsrcp$ \text{ (Additionally $\labtgtp \leq \labsrcp$)}}
\\ \hline
		$
		\begin{array}{@{}r@{\ }c@{\ }l@{}}
		\multicolumn{3}{@{}l@{}}{\psframebox{\text{Ex. \ssc}}}
		\\ \rule{0em}{1em}
		*x & = & \& y \\
		*x  & = & \&z
		\end{array}
		$
	& \figOneThree
	&
		\begin{tabular}{@{}r@{\ }l@{}}
		 \prevedge: & \de{x}{2,0}{y}
			\\
		 \newedge: & \de{x}{2,0}{z}
			\\
		  & (\irrelevant)
		\end{tabular}
	&
		$
		\begin{array}{@{}r@{\ }c@{\ }l@{}}
		\multicolumn{3}{@{}l@{}}{\psframebox{\text{Ex. \tsc}}}
		\\ \rule{0em}{1em}
		x & = & \& y \\
		z  & = & x
		\end{array}
		$
	& \figThreeThree
	&
		\begin{tabular}{@{}r@{\ }l@{}}
		 \prevedge: & \de{x}{1,0}{y}
			\\
		 \newedge: & \de{z}{1,1}{x}
			\\
		 \rededge: & \de{z}{1,0}{y}
		\end{tabular}
	\\ \hline
\end{tabular}
\caption{Illustrating all exhaustive possibilities of \ssscomp and \stscomp compositions
(the pivot is $x$). Dashed edges are killed.
Unmarked compositions 
are \srelevant and \suseful (Section~\ref{sec:relevant.useful.ec});
since the statements are consecutive, they are also \sconclusive (Section~\ref{sec:conclusive.ec}) and hence \sdesirable.
}
\label{fig:edge.composition-part-a}
\end{figure}

Let an edge \newedge be represented by the triple \fe{\srcn}{\labsrcn}{\labtgtn}{\tgtn}
where \srcn and \tgtn are the source and the target of \newedge and
 $(\labsrcn,\!\labtgtn)$  is the \indlev. 
Similarly, \prevedge is represented by 
\fe{\srcp}{\labsrcp}{\labtgtp}{\tgtp} and the
reduced edge \text{$\rededge = \cop$} 
by \fe{\srcr}{\labsrcr}{\labtgtr}{\tgtr};
\text{$(\labsrcr,\!\labtgtr)$} is obtained by balancing the 
\indlev of 
the pivot in \prevedge and \newedge.
The pivot of a composition, denoted \pivot,  may be the source or the 
target  of \newedge and \prevedge. 
Thus, a composition \cop can be of the following four types:
\sscomp, \tscomp, \stcomp, and \ttcomp composition.
\begin{itemize}
\item[$\bullet$] \tscomp composition.  In this case, \text{\tgtn = \srcp} i.e., the pivot is the target of \newedge and the source of \prevedge. 
	Node \srcn becomes the source and \tgtp becomes the target of the reduced edge \rededge. 
\item[$\bullet$] \sscomp composition.  In this case, \text{\srcn = \srcp} i.e., the pivot is the source of both \newedge and \prevedge. 
	Node \tgtp becomes the source and \tgtn becomes the target of the reduced edge \rededge. 
\item[$\bullet$] \stcomp composition.  In this case, \text{\srcn = \tgtp} i.e., the pivot is the source of \newedge and the target of \prevedge. 
	Node \srcp becomes the source and \tgtn becomes the target of the reduced edge \rededge.
\item[$\bullet$] \ttcomp composition.  In this case, \text{\tgtn = \tgtp} i.e., the pivot is the target of both \newedge and \prevedge. 
	Node \srcn becomes the source and \srcp becomes the target of the reduced edge \rededge.
\end{itemize}

\label{sec:meaningful.edge.composition}

Consider an edge composition \text{$\rededge = \cop$}, \text{$\prevedge \in \summflow$}. 
For constructing a new \asummflow,
 we wish to
include \rededge rather than \newedge:
Including both of them
is sound but may lead to imprecision; including only \newedge is also sound 
but may lead to 
inefficiency because it forsakes summarization.
An edge composition 
is \desirable if and only if it is \relevant, \useful, and \conclusive.
We define these properties  below and explain them in the rest of the section.

\begin{enumerate}[\em (a)]
\item A composition \cop is \relevant only if it preserves flow-sensitivity. 
\item A composition \cop is \useful only if the \indlev of the resulting edge does not exceed the \indlev of \newedge.
\item A composition \cop is \conclusive only if the information supplied by \prevedge used for reducing \newedge
      is not likely to be invalidated by the intervening statements.
\end{enumerate}
\vspace*{-.1cm}
When the edge composition is \desirable, we include \rededge in \asummflow being constructed, 
otherwise we include \newedge.
In order to explain the \desirable compositions, we use the following notation: 
Let \locp denote a \text{$(\labpivotp)^{th}$} pointee of pivot \pivot accessed by \prevedge and
\locn denote a \text{$(\labpivotn)^{th}$} pointee of \pivot accessed by \newedge. 
\pivot is never
used as a subscript. Thus a \prevedge appearing in a subscript (e.g. in \locp) refers to an edge \prevedge.

\newcommand{\figTwoOne}{%
\psset{unit=.5mm,arrowsize=3}
\begin{tabular}{@{}c@{}}
\begin{pspicture}(0,0)(38,33)
\putnode{n1}{origin}{4}{19}{\pscirclebox[fillstyle=solid,fillcolor=black,framesep=1]{}}
	\putnode{w}{n1}{0}{-5}{$x$}
\putnode{n2}{n1}{12}{0}{\pscirclebox[fillstyle=solid,fillcolor=black,framesep=1]{}}
\putnode{n3}{n2}{12}{0}{\pscirclebox[fillstyle=solid,fillcolor=black,framesep=1]{}}
	\putnode{w}{n3}{6}{0}{\locp}
\putnode{n4}{n1}{8}{-10}{\pscirclebox[fillstyle=solid,fillcolor=black,framesep=1]{}}
	\putnode{w}{n4}{0}{-5}{$z$}
	\putnode{w}{n4}{6}{0}{\locn}
\putnode{n5}{n3}{-10}{8}{\pscirclebox[fillstyle=solid,fillcolor=black,framesep=1]{}}
	\putnode{w}{n5}{-4}{0}{$y$}
\ncline[linestyle=dashed,dash=1 1]{->}{n1}{n2}
\ncline{->}{n2}{n3}
\ncline[nodesep=-.25]{->}{n1}{n4}
\ncline[nodesep=-.25]{->}{n5}{n3}
\end{pspicture}
\end{tabular}
}

\newcommand{\figTwoTwo}{%
\psset{unit=.5mm,arrowsize=3}
\begin{tabular}{@{}c@{}}
\begin{pspicture}(0,0)(38,20)
\putnode{n1}{origin}{4}{10}{\pscirclebox[fillstyle=solid,fillcolor=black,framesep=1]{}}
	\putnode{w}{n1}{0}{-5}{$x$}
\putnode{n2}{n1}{14}{0}{\pscirclebox[fillstyle=solid,fillcolor=black,framesep=1]{}}
	\putnode{w}{n2}{5}{1}{\locp}
\putnode{n3}{n2}{-10}{8}{\pscirclebox[fillstyle=solid,fillcolor=black,framesep=1]{}}
	\putnode{w}{n3}{-4}{0}{$y$}
\putnode{n4}{n2}{10}{-8}{\pscirclebox[fillstyle=solid,fillcolor=black,framesep=1]{}}
	\putnode{w}{n4}{0}{-5}{$z$}
	\putnode{w}{n4}{6}{0}{\locn}
\ncline{->}{n1}{n2}
\ncline[nodesep=-.25]{<-}{n2}{n3}
\ncline[nodesep=-.25]{->}{n2}{n4}
\end{pspicture}
\end{tabular}
}

\newcommand{\figTwoThree}{%
\psset{unit=.5mm,arrowsize=3}
\begin{tabular}{@{}c@{}}
\begin{pspicture}(0,0)(38,33)
\putnode{n1}{origin}{4}{18}{\pscirclebox[fillstyle=solid,fillcolor=black,framesep=1]{}}
	\putnode{w}{n1}{0}{-5}{$x$}
\putnode{n2}{n1}{12}{0}{\pscirclebox[fillstyle=solid,fillcolor=black,framesep=1]{}}
\putnode{n3}{n2}{12}{0}{\pscirclebox[fillstyle=solid,fillcolor=black,framesep=1]{}}
	\putnode{w}{n3}{6}{0}{\locp}
\putnode{n4}{n2}{10}{-10}{\pscirclebox[fillstyle=solid,fillcolor=black,framesep=1]{}}
	\putnode{w}{n4}{0}{-5}{$z$}
	\putnode{w}{n4}{6}{0}{\locn}
\putnode{n5}{n3}{-10}{10}{\pscirclebox[fillstyle=solid,fillcolor=black,framesep=1]{}}
	\putnode{w}{n5}{-4}{0}{$y$}
\ncline{->}{n1}{n2}
\ncline[linestyle=dashed,dash=1 1]{->}{n2}{n3}
\ncline[nodesep=-.25]{->}{n2}{n4}
\ncline[nodesep=-.25]{->}{n5}{n3}
\end{pspicture}
\end{tabular}
}

\newcommand{\figFourOne}{%
\psset{unit=.5mm,arrowsize=3}
\begin{tabular}{@{}c@{}}
\begin{pspicture}(0,0)(38,30)
\putnode{n1}{origin}{4}{15}{\pscirclebox[fillstyle=solid,fillcolor=black,framesep=1]{}}
	\putnode{w}{n1}{0}{5}{$x$}
\putnode{n2}{n1}{14}{0}{\pscirclebox[fillstyle=solid,fillcolor=black,framesep=1]{}}
	\putnode{w}{n2}{1}{-5}{\locn}
\putnode{n3}{n2}{14}{0}{\pscirclebox[fillstyle=solid,fillcolor=black,framesep=1]{}}
	\putnode{w}{n3}{2}{5}{\locp}
\putnode{n4}{n2}{-10}{-8}{\pscirclebox[fillstyle=solid,fillcolor=black,framesep=1]{}}
	\putnode{w}{n4}{-4}{0}{$z$}
\putnode{n5}{n3}{-10}{8}{\pscirclebox[fillstyle=solid,fillcolor=black,framesep=1]{}}
	\putnode{w}{n5}{-4}{0}{$y$}
\ncline{->}{n1}{n2}
\ncline{->}{n2}{n3}
\ncline[nodesep=-.25]{->}{n4}{n2}
\ncline[nodesep=-.25]{->}{n5}{n3}
\end{pspicture}
\end{tabular}
}

\newcommand{\figFourTwo}{%
\psset{unit=.5mm,arrowsize=3}
\begin{tabular}{@{}c@{}}
\begin{pspicture}(0,0)(38,22)
\putnode{n1}{origin}{4}{9}{\pscirclebox[fillstyle=solid,fillcolor=black,framesep=1]{}}
	\putnode{w}{n1}{0}{-5}{$x$}
\putnode{n2}{n1}{14}{0}{\pscirclebox[fillstyle=solid,fillcolor=black,framesep=1]{}}
	\putnode{w}{n2}{2}{5}{\locp}
\putnode{n3}{n2}{14}{0}{\pscirclebox[fillstyle=solid,fillcolor=black,framesep=1]{}}
	\putnode{w}{n3}{2}{-5}{\locn}
\putnode{n4}{n3}{-10}{-8}{\pscirclebox[fillstyle=solid,fillcolor=black,framesep=1]{}}
	\putnode{w}{n4}{-4}{0}{$z$}
\putnode{n5}{n2}{-10}{8}{\pscirclebox[fillstyle=solid,fillcolor=black,framesep=1]{}}
	\putnode{w}{n5}{-4}{0}{$y$}
\ncline{->}{n1}{n2}
\ncline{->}{n2}{n3}
\ncline[nodesep=-.25]{->}{n4}{n3}
\ncline[nodesep=-.25]{->}{n5}{n2}
\end{pspicture}
\end{tabular}
}

\newcommand{\figFourThree}{%
\psset{unit=.5mm,arrowsize=3}
\begin{tabular}{@{}c@{}}
\begin{pspicture}(0,0)(38,22)
\putnode{n1}{origin}{8}{9}{\pscirclebox[fillstyle=solid,fillcolor=black,framesep=1]{}}
	\putnode{w}{n1}{-4}{0}{$x$}
\putnode{n2}{n1}{14}{0}{\pscirclebox[fillstyle=solid,fillcolor=black,framesep=1]{}}
	\putnode{w}{n2}{6}{-3}{\locn}
	\putnode{w}{n2}{6}{3}{\locp}
\putnode{n4}{n2}{-10}{-8}{\pscirclebox[fillstyle=solid,fillcolor=black,framesep=1]{}}
	\putnode{w}{n4}{-4}{0}{$z$}
\putnode{n5}{n2}{-10}{8}{\pscirclebox[fillstyle=solid,fillcolor=black,framesep=1]{}}
	\putnode{w}{n5}{-4}{0}{$y$}
\ncline{->}{n1}{n2}
\ncline[nodesep=-.25]{->}{n4}{n2}
\ncline[nodesep=-.25]{->}{n5}{n2}
\end{pspicture}
\end{tabular}
}

\begin{figure}[t]
\setlength{\tabcolsep}{3.7pt}
\centering
\begin{tabular}{|c|c|c|
                |c|c|c|
		}
\hline
\multicolumn{3}{|c||}{\rule{0em}{1em}Possible \sstcomp Compositions}
	& \multicolumn{3}{c|}{Possible \sttcomp Compositions}
	\\ \hline
\rule[-.9em]{0em}{2.4em}%
 \renewcommand{\arraystretch}{.8}\begin{tabular}{@{}c@{}} Statement \\ sequence\end{tabular}
	&  \renewcommand{\arraystretch}{.8}\begin{tabular}{@{}c@{}}Memory \\ graph\end{tabular}
	&  \renewcommand{\arraystretch}{.8}\begin{tabular}{@{}c@{}}\hrg \\ edges\end{tabular}
	&  \renewcommand{\arraystretch}{.8}\begin{tabular}{@{}c@{}} Statement \\ sequence\end{tabular}
	&  \renewcommand{\arraystretch}{.8}\begin{tabular}{@{}c@{}}Memory \\ graph \end{tabular}
	&  \renewcommand{\arraystretch}{.8}\begin{tabular}{@{}c@{}}\hrg \\ edges\end{tabular}
	\\ \hline\hline
\multicolumn{3}{|c||}{\rule[-.55em]{0em}{1.5em}$\labsrcn < \labtgtp$}
	&
\multicolumn{3}{c|}{\rule[-.55em]{0em}{1.5em}$\labtgtn < \labtgtp$}
	\\ \hline
$
\begin{array}{@{}r@{\ }c@{\ }l@{}}
		\multicolumn{3}{@{}l@{}}{\psframebox{\text{Ex. \sta}}}
		\\ \rule{0em}{1em}
y & = & * x \\
x  & = & \&z
\end{array}
$
	& \figTwoOne
	&
		\begin{tabular}{@{}r@{\ }l@{}}
		 \prevedge: & \de{y}{1,2}{x}
			\\
		 \newedge: & \de{x}{1,0}{z}
			\\
		  & (\irrelevant)
		\end{tabular}
	&
$
\begin{array}{@{}r@{\ }c@{\ }l@{}}
		\multicolumn{3}{@{}l@{}}{\psframebox{\text{Ex. \tta}}}
		\\ \rule{0em}{1em}
y & = & * x \\
z  & = & x
\end{array}
$
	& \figFourOne
	&
		\begin{tabular}{@{}r@{\ }l@{}}
		 \prevedge: & \de{y}{1,2}{x}
			\\
		 \newedge: & \de{z}{1,1}{x}
			\\
		  & (not \suseful)
		\end{tabular}
	\\ \hline
\multicolumn{3}{|c||}{\rule[-.55em]{0em}{1.5em}$\labsrcn > \labtgtp$ \text{ (Additionally $\labsrcp \leq \labtgtp$)}}
	& \multicolumn{3}{c|}{\rule[-.55em]{0em}{1.5em}$\labtgtn > \labtgtp$ \text{ (Additionally $\labsrcp \leq \labtgtp$)}}
	\\ \hline
		$
		\begin{array}{@{}r@{\ }c@{\ }l@{}}
		\multicolumn{3}{@{}l@{}}{\psframebox{\text{Ex. \stb}}}
		\\ \rule{0em}{1em}
		y & = & x \\
		*x  & = & \&z
		\end{array}
		$
	& \figTwoTwo
	&
		\begin{tabular}{@{}r@{\ }l@{}}
		 \prevedge: & \de{y}{1,1}{x}
			\\
		 \newedge: & \de{x}{2,0}{z}
			\\
		\rededge: & \de{y}{2,0}{z}
		\end{tabular}
	&
		$
		\begin{array}{@{}r@{\ }c@{\ }l@{}}
		\multicolumn{3}{@{}l@{}}{\psframebox{\text{Ex. \ttb}}}
		\\ \rule{0em}{1em}
		y & = &  x \\
		z  & = & * x
		\end{array}
		$
	& \figFourTwo
	&
		\begin{tabular}{@{}r@{\ }l@{}}
		 \prevedge: & \de{y}{1,1}{x}
			\\
		 \newedge: & \de{z}{1,2}{x}
			\\
		 \rededge: & \de{z}{1,2}{y}
		\end{tabular}
	\\ \hline
\multicolumn{3}{|c||}{\rule[-.55em]{0em}{1.5em}$\labsrcn = \labtgtp$}
	& \multicolumn{3}{c|}{\rule[-.55em]{0em}{1.5em}$\labtgtn = \labtgtp$ \text{ (Additionally $\labsrcp \leq \labtgtp$)}}
	\\ \hline
		$
		\begin{array}{@{}r@{\ }c@{\ }l@{}}
		\multicolumn{3}{@{}l@{}}{\psframebox{\text{Ex. \stc}}}
		\\ \rule{0em}{1em}
		y & = & * x \\
		* x  & = & \&z
		\end{array}
		$
	& \figTwoThree
	&
		\begin{tabular}{@{}r@{\ }l@{}}
		 \prevedge: & \de{y}{1,2}{x}
			\\
		 \newedge: & \de{x}{2,0}{z}
			\\
		   &(\irrelevant)
		\end{tabular}
	&
		$
		\begin{array}{@{}r@{\ }c@{\ }l@{}}
		\multicolumn{3}{@{}l@{}}{\psframebox{\text{Ex. \ttc}}}
		\\ \rule{0em}{1em}
		y & = &  x \\
		z  & = & x
		\end{array}
		$
	& \figFourThree
	&
		\begin{tabular}{@{}r@{\ }l@{}}
		 \prevedge: & \de{y}{1,1}{x}
			\\
		 \newedge: & \de{z}{1,1}{x}
			\\
		 \rededge: & \de{z}{1,1}{y}
		\end{tabular}
	\\ \hline
\end{tabular}
\caption{Illustrating all exhaustive possibilities of \sstcomp and \sttcomp compositions
(the pivot is $x$). 
See Figure~\ref{fig:edge.composition-part-a} for illustrations of \ssscomp and \stscomp compositions.
In each case, the pivot of the composition is $x$. 
}
\label{fig:edge.composition-part-b}
\end{figure}

\subsection{Relevant Edge Composition.}

An edge composition is \relevant if it preserves flow-sensitivity. This requires the indirection levels in \newedge 
to be reduced by using the points-to information in \prevedge (where \prevedge appears before \newedge along a control flow path) but not vice-versa.
The presence of a points-to path in memory (which is the transitive closure of the points-to edges) between \locp and \locn 
(denoted by \dpath or \upath) indicates that \prevedge can be used to resolve the indirection levels in \newedge.


\exmpbeg
	For \text{$\labsrcn < \labsrcp$} in Figure~\ref{fig:edge.composition-part-a} (Ex. \ssa), edge \prevedge updates the pointee of $x$ 
	and edge \newedge redefines $x$. As shown in the memory graph, there is no path between \locp and \locn and 
	hence $y$ and $z$ are unrelated rendering this composition \irrelevant. 
	Similarly,  edge composition is \irrelevant for \text{$\labsrcn = \labsrcp$} (Ex. \ssc), \text{$\labsrcn < \labtgtp$} (Ex. \sta), and \text{$\labsrcn = \labsrcp$} (Ex. \stc). 

	For \text{$\labsrcn > \labsrcp$} (Ex. \ssb), \dpath holds in the memory graph and hence this composition is \relevant. 
	For Ex. \tsa, \upath holds; for
        \tsb, \dpath holds; for \tsc both paths hold. Hence, all three compositions are \relevant.
\exmpend

Owing to flow-sensitivity, edge composition is not commutative although it is associative.

\begin{lemma}
\label{lemm:edge.associativity.of.ecomp}
Edge composition is associative.
\[
(e_1\circ e_2) \circ e_3 = e_1 \circ (e_2 \circ e_3)
\]
\end{lemma}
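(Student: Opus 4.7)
The plan is to prove associativity by a careful case analysis on the pivots involved in the two compositions. An edge composition $e \circ e'$ requires a shared endpoint (the pivot \pivot) between $e$ and $e'$; the reduced edge connects the two \emph{non-pivot} endpoints, with indirection levels obtained by balancing the pivot's indirection levels on both sides. So in $(e_1 \circ e_2) \circ e_3$ there are two pivots: a node \pivot{} shared by $e_1$ and $e_2$, and a node \pivot{}$'$ shared by the reduced edge $e_1 \circ e_2$ and $e_3$. Since the reduced edge retains only the non-pivot endpoints of $e_1$ and $e_2$, \pivot{}$'$ must be a non-pivot endpoint of $e_1$ or $e_2$ that coincides with an endpoint of $e_3$. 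The same analysis applies symmetrically to $e_1 \circ (e_2 \circ e_3)$.

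First I would argue that both sides are defined on the same triples $(e_1, e_2, e_3)$ by showing that the composability conditions (existence of a pivot in each stage) are symmetric under re-bracketing. The canonical case is that $e_2$ bridges $e_1$ and $e_3$ via two distinct endpoints on $e_2$: one endpoint of $e_2$ serves as the pivot with $e_1$, and the other as the pivot with $e_3$. Under this configuration, both bracketings identify the same pair of pivots, just in opposite order. Next I would show that the source and target of the final edge coincide in both orders: after eliminating the two pivots of $e_2$, the surviving nodes are precisely the non-pivot endpoints of $e_1$ and $e_3$, independent of the order of elimination. This follows directly from the endpoint formula in Definition~\ref{def.edge.composition}, by tracing through each of the four subcases of \sscomp/\stcomp/\tscomp/\ttcomp composition.

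The equality of indirection levels on the final edge reduces to elementary arithmetic. Recall that balancing shifts the indirection levels of one edge by an integer difference to align the pivot's level with the other edge. In $(e_1 \circ e_2) \circ e_3$, the level of each surviving endpoint accumulates two such shifts—first from the balancing at \pivot, then from the balancing at \pivot{}$'$. In $e_1 \circ (e_2 \circ e_3)$, the same two differences are added, but in the opposite order. Since these shifts are integer additions applied independently to the endpoints' indirection levels, associativity and commutativity of integer addition give the same final indirection level on each side.

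The main obstacle will be the sheer combinatorics: with four composition types and two bracketings, up to sixteen configurations have to be checked, and one must also handle degenerate cases where endpoints of $e_1$ and $e_3$ happen to coincide (admitting alternative pivot choices), or where pivots degenerate to self-loops. To keep the argument uniform, I would abstract away from the \sscomp/\stcomp/\tscomp/\ttcomp distinctions by re-expressing each composition in the symmetric form ``identify a pivot node, delete it, connect the survivors, shift levels by the balancing difference.'' In this uniform view associativity reduces to (i) independence of the order in which two distinct pivots of $e_2$ are eliminated, and (ii) associativity of integer addition on indirection-level shifts, which completes the proof.
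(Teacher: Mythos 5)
Your core argument is the same as the paper's, but you plan to prove considerably more than the paper does. The paper's entire proof is the observation that the \indlev{\,}s of a composed edge are computed by arithmetic expressions built from binary plus and binary minus, and that these associate once binary minus is rewritten as addition of a negated term (e.g.\ $a+b+(-c)$ for $a+b-c$) --- i.e.\ exactly your point (ii). Everything else in your plan (showing both bracketings are defined on the same triples, that the surviving endpoints agree after eliminating the two pivots of $e_2$ in either order, and the treatment of degenerate pivot choices) is left entirely implicit in the paper. That extra structural work is not wasted: it is where the genuine subtleties of the lemma live. In particular, the \indlev formula in Definition~\ref{def:order.computation} branches on which edge has the smaller pivot level (so the ``shift'' is really a $\max$, not a bare sum), the \sscomp case with $\labsrcn = \labsrcp$ yields no composition at all, and a pivot for the outer composition need not be an endpoint of $e_2$ --- in which case one bracketing can be defined while the other is not. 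Your proposal at least flags these issues; to close them you would need either to restrict the lemma to the configuration where $e_2$ bridges $e_1$ and $e_3$ through two distinct pivots under the \suseful{} constraints (which fix the branch of the $\max$), or to state the lemma, as the paper implicitly does, only about the formal \indlev arithmetic. So: same key idea, but your version is the more honest and more laborious proof of a stronger, fully structural statement.
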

\begin{proof}
Edge composition computes \indlev{\,}s using arithmetic expressions involving
binary plus ($+$) and binary minus ($-$). They can be made to associate
by replacing binary minus ($-$) with binary plus ($+$) and unary minus
($-$), eg. $a+b+(-c)$ instead of $a +b -c$.
\end{proof}

\subsection{Useful Edge Composition.}
\label{sec:relevant.useful.ec}
The \usefulness of edge composition characterizes progress in conversion of the generalized points-to facts to the classical points-to
facts. 
This requires
the \indlev $(\labsrcr, \labtgtr)$ of the reduced edge \rededge to satisfy the following constraint:
	\begin{align}
	\label{eq:usefulness.constraint}
	\labsrcr \leq \labsrcn\; \wedge\; \labtgtr \leq \labtgtn
	\end{align}
	Intuitively, this 
	ensures that 
	the \indlev of the new source  and the new target
      does not exceed the corresponding \indlev in the original
      edge \newedge. 

\exmpbeg
	Consider Ex. \tsa of Figure~\ref{fig:edge.composition-part-a}, in which \text{$\labtgtn < \labsrcp$}, and \upath holds in the memory graph. Although this composition 
	is \relevant, it is not \useful because the \indlev of \rededge exceeds the \indlev of \newedge. For this example, a \tscomp composition will create an edge
        \de{z}{2,0}{y} whose \indlev is higher than that of \newedge (\de{z}{1,1}{x}).
	Similarly, an edge composition is not \useful when \text{$\labtgtn < \labtgtp$} (Ex. \tta).
\exmpend

	Thus, we need \dpath, and not \upath, to hold in the memory graph for a \useful edge composition.
	We can relate this with the \usefulness criteria (Inequality~\ref{eq:usefulness.constraint}). 
	The presence of a path \dpath ensures that the \indlev of edge \rededge does not exceed that of \newedge.

From Figure~\ref{fig:edge.composition-part-a}, we conclude that 
an edge composition is \relevant and \useful only if there exists a path \dpath\ { rather than \upath.}
\emph{
Intuitively, such a path guarantees that the updates made by \newedge do not invalidate the generalized points-to fact
represented by \prevedge.}
Hence, the two generalized points-to facts can be composed 
by using the pivot as a bridge to create a new generalized points-to fact represented by \rededge.

\subsubsection*{Deriving the Composition Specific Conditions for Usefulness of Edge Compositions}
\label{subsec:derivation.constraint}

Constraint~\ref{eq:usefulness.constraint} can be further refined for a composition based on its type.
We show the derivation of the \usefulness criterion by examining the cases for \relevant edge compositions.
For simplicity, we consider only
\tscomp composition. 
There are three cases to be considered: \text{$\labtgtn > \labtgtp$}, \text{$\labtgtn < \labtgtp$} and \text{$\labtgtn = \labtgtp$}. We have already seen 
that the case \text{$\labtgtn < \labsrcp$} is \irrelevant in that it results in an imprecision in points-to information and hence we ignore this case. We derive a constraint for the case \text{$\labtgtn > \labsrcp$}.
The \indlev ($\labsrcr,\, \labtgtr$) of the reduced edge \rededge for the case \text{$\labtgtn > \labsrcp$}, by balancing the \indlev of the pivot $\tgtn/\srcp$ in edges \newedge and \prevedge, is given as 
\begin{align*}
(\labsrcr, \labtgtr) & = (\labsrcn, \labtgtp + \labtgtn - \labsrcp)
\end{align*}
By imposing the \usefulness constraint (Inequality~\ref{eq:usefulness.constraint}) we get:

\[
\renewcommand{\arraystretch}{1.35}
\begin{array}{l}
(\labtgtn > \labsrcp)\; \wedge\; (\labsrcr \leq \labsrcn)\; \wedge\; (\labtgtr \leq \labtgtn) \\
\Rightarrow \;
	(\labtgtn > \labsrcp)\; \wedge\; (\labsrcn \leq \labsrcn) \; \wedge\; (\labtgtp+\labtgtn-\labsrcp \leq \labtgtn) \\
\Rightarrow\; 
	(\labtgtn > \labsrcp)\; \wedge\; (\labtgtp \leq \labsrcp) \\ 
\Rightarrow\; 
	\labtgtp \leq \labsrcp < \labtgtn
\end{array}
\]

We can also derive a \usefulness constraint for the case \text{$\labtgtn = \labsrcp$}. The final condition for a \useful \tscomp composition combined for both the cases is:
\begin{align}
\label{cons:ts}
	\labtgtp \leq \labsrcp \leq \labtgtn
	&& \text{(\tscomp composition)} 
\end{align}

Similarly, we can derive the criterion for other compositions by examining the \relevant and \useful cases for them which turn out to
be:
\begin{align}
\label{cons:ss}
	\labtgtp \leq \labsrcp < \labsrcn
		&& \text{(\sscomp composition)} 
\\ 
\label{cons:st}
	\labsrcp \leq \labtgtp < \labsrcn
		&& \text{(\stcomp composition)} 
\\ 
\label{cons:tt}
	\labsrcp \leq \labtgtp \leq \labtgtn
		&& \text{(\ttcomp composition)} 
\end{align}

\exmpbeg
Consider a \tscomp composition where \newedge is $\de{z}{1,1}{x}$ and \prevedge is $\de{x}{2,1}{y}$
      violating the constraint \text{$\labsrcp < \labtgtn$} 
(Inequality~\ref{cons:ts}) because \text{$2 \!\not\le\! 1$}. 
      Edge \newedge needs pointees of $x$ whereas \prevedge provides information
      in terms of the pointees of pointees of $x$.
Similarly, a \tscomp composition of 
$\de{z}{1,2}{x}$ as \newedge and
$\de{x}{1,2}{y}$ as \prevedge 
      violates the constraint \text{$\labtgtp \leq \labsrcp$} (Inequality~\ref{cons:ts}).
      In this case, \newedge needs pointees of pointees of $x$ whereas \prevedge provides information
      in terms of pointees of pointees of pointees of $y$.
\exmpend

In both these cases, the \indlev of \rededge exceeds the \indlev of \newedge 
and hence we do not perform such compositions.
Similarly, we can reason about the \usefulness constraint (Inequalities~\ref{cons:ss}$-$\ref{cons:tt}) 
for other types of compositions.

\subsection{Conclusive Edge Composition.}
\label{sec:conclusive.ec}
Recall that \text{$\rededge = \cop$} is \relevant and \useful if we expect a path \dpath in the memory.
This composition is \conclusive when location \locp remains 
accessible from the pivot \pivot in \prevedge when \newedge is composed with \prevedge. 
{
Location \locp may become inaccessible from \pivot because of a
combined effect of the statements in a calling context and the statements in the procedure being processed.}
Hence, the composition is \undesirable and may lead to unsoundness if
\rededge is included in \asummflow instead of \newedge.

\begin{figure}[t]
\centering
\setlength{\codeLineLength}{22mm}%
\renewcommand{\arraystretch}{.7}%
\begin{tabular}{cc|c}
	\begin{tabular}{rc}
	\codeLineNoNumber{0}{}{white}
	\codeLineOne{1}{0}{void f()}{white} 
	\codeLine{0}{\OB y = \&a;}{white}
	\codeLine{1}{q = \&p;}{white}
	\codeLine{1}{g();}{white}
	\codeLine{0}{\CB}{white}
	\codeLine{0}{void g()}{white}
	\end{tabular}
	&
	\begin{tabular}{rc}
	\codeLine{0}{\OB *y = \&c;}{white}
	\codeLine{1}{a = \&b;}{white}
	\codeLine{1}{x = *y;}{white}	
	\codeLine{1}{p = \&t;}{white}
	\codeLine{1}{*q = \&s;}{white}
	\codeLine{1}{r = p;}{white}
	\codeLine{0}{\CB}{white}
	\end{tabular}
&
\begin{tabular}{c}
\begin{pspicture}(0,0)(55,26)
\psset{arrowsize=1.5mm}
\small
\putnode{r}{origin}{6}{3}{\pscirclebox[framesep=1.4]{$r$}}
\putnode{p}{r}{0}{10}{\pscirclebox[framesep=1.4]{$p$}}
\putnode{t}{p}{0}{10}{\pscirclebox[framesep=1.4]{$t$}}
\putnode{x}{r}{17}{0}{\pscirclebox[framesep=1.5]{$x$}}
\putnode{y}{x}{0}{10}{\pscirclebox[framesep=1.4]{$y$}}
\putnode{c}{y}{0}{10}{\pscirclebox[framesep=1.4]{$c$}}
\putnode{q}{x}{14}{4}{\pscirclebox[framesep=1.4]{$q$}}
\putnode{s}{q}{11}{0}{\pscirclebox[framesep=1.4]{$s$}}
\putnode{a}{c}{14}{-4}{\pscirclebox[framesep=1.5]{$a$}}
\putnode{b}{a}{11}{0}{\pscirclebox[framesep=1.4]{$b$}}
\ncline{->}{r}{p}
\naput[labelsep=.5,npos=.5]{11}
\ncline{->}{p}{t}
\naput[labelsep=.5,npos=.5]{1,0}
\nccurve[linestyle=dashed,dash=.6 .6,angleA=45, angleB=-45, nodesepA=-.8, nodesepB=-.9]{->}{r}{t}
\nbput[labelsep=.15,npos=.5]{1,0}
\ncline{->}{q}{s}
\naput[labelsep=.5,npos=.5]{2,0}
\ncline{->}{a}{b}
\naput[labelsep=.5,npos=.5]{1,0}
\ncline{->}{x}{y}
\naput[labelsep=.5,npos=.5]{1,2}
\ncline{->}{y}{c}
\naput[labelsep=.5,npos=.5]{2,0}
\nccurve[linestyle=dashed,dash=.6 .6,angleA=45, angleB=-45, nodesepA=-.9, nodesepB=-.8]{->}{x}{c}
\nbput[labelsep=.15,npos=.5]{1,0}
\end{pspicture}
\end{tabular}
\end{tabular}
\caption{Excluding inconclusive compositions (reduced edges shown by dashes are excluded).}
\label{fig:replacing_n_by_r}
\end{figure}


\exmpbeg
Line 07 of  procedure $g$ in Figure~\ref{fig:replacing_n_by_r} indirectly 
defines $a$ (because $y$ points to $a$ as defined on line 02 of procedure $f$) whereas line 08
directly defines $a$ overwriting the value assigned on line 07. Thus, $x$ points to $b$ and not $c$ after line 09. 
However, during \gpg construction of procedure $g$,  
the relationship between $y$ and $a$ is not known. Thus, the composition of 
\text{$\newedge \equiv \de{x}{1,2}{y}$} with \text{$\prevedge \equiv \de{y}{2,0}{c}$} results in \text{$\rededge
\equiv \de{x}{1,0}{c}$}. In this case, \locp is $c$, however it is not
reachable from $y$ anymore as the pointee of $y$ (which is $a$) is redefined by line 08. Worse still,
we do not have \de{y}{2,0}{b} and hence edge
\de{x}{1,0}{b} cannot be created leading to unsoundness.

Similarly, line 10 defines $p$ directly whereas line 11 defines $p$ indirectly (because $q$ points to $p$ as defined on line 03 of 
procedure $f$). 
The composition of \text{$\newedge \equiv \de{r}{1,1}{p}$} with 
\text{$\prevedge \equiv \de{p}{1,0}{t}$} results in \text{$\rededge
\equiv \de{r}{1,0}{t}$}. In this case, \locp is $t$, however it is not
reachable from $p$ anymore as the pointee of $p$ is redefined indirectly by line 11.
Thus we miss out the edge \de{r}{1,0}{s} leading to unsoundness. 
\exmpend

Since the calling context is not available during \gpg construction,
we are forced to retain edge \newedge in the \gpg, thereby missing an opportunity of reducing the \indlev of \newedge.
Hence we propose the following condition for \conclusiveness:
\begin{enumerate}[\em (a)]
\item The statements of \prevedge and \newedge should be consecutive on every control flow
      path. 
\item If the statements of \prevedge and \newedge are not consecutive on some control flow path,
       we require that 
	\begin{enumerate}[\em (i)]
	\item the intervening statements should not have an indirect assignment (e.g., \text{$*x = \ldots$}), and
       \item the pointee of pivot \pivot in edge \prevedge has been found i.e.  \text{$\labpivotp = 1$}.
\end{enumerate}
\end{enumerate}

\exmpbeg
For the program in Figure~\ref{fig:replacing_n_by_r}, consider the composition of
\text{$\newedge \equiv \de{x}{1,2}{y}$} with \text{$\prevedge \equiv \de{y}{2,0}{c}$}. 
Since the pointee of $y$ (which is $a$) is redefined by line 08 violating the 
condition \text{$\labpivotp = 1$}, this composition is not
conclusive and we add \text{$\newedge \equiv \de{x}{1,2}{y}$} instead of \text{$\rededge \equiv \de{x}{1,0}{c}$}. 
Similarly, for the composition of \text{$\newedge \equiv \de{r}{1,1}{p}$} with 
\text{$\prevedge \equiv \de{p}{1,0}{t}$}, 
the pointee of $p$ is redefined indirectly by line 11 violating the 
condition that \prevedge and \newedge should not have an intervening indirect assignment. Thus this composition is 
inconclusive and we add \text{$\newedge \equiv \de{r}{1,1}{p}$} instead of \text{$\rededge \equiv \de{r}{1,0}{t}$}. 
\exmpend

This avoids a greedy reduction of \newedge when the available information is \inconclusive.


\subsection{Can Edge Composition be Modelled as Matrix Multiplication?}
\label{sec:why.not.matrix.mult}

Edge composition \cop computes transitive effects of edges \newedge and \prevedge. This is somewhat similar to the reachability
computed in a graph: If there are edges 
\text{$x\rightarrow y$} and
\text{$y\rightarrow z$} representing the facts that $y$ is reachable from $x$ and $z$ is reachable from $y$, then it
follows that $z$ is reachable from $x$ and an edge 
\text{$x\rightarrow z$} can be created. If the graph is represented by an adjacency matrix $A$ in which the element  \text{$(x,y)$}
represents reachability of $y$ from $x$, matrix multiplication \text{$A \times A$} can be used to compute the transitive effect.

It is difficult to model edge composition in this manner because of the following reasons:
\begin{itemize}
\bitem Edge labels are pairs of numbers representing indirection levels. Hence we will need to device an appropriate operator and
      the usual multiplication would not work.
\bitem Edge composition has some additional constraints over reachability because of desirability; undesirable compositions are not performed. 
      These restrictions  are difficult to model in matrix multiplication.
\bitem Transitive reachability considers only the edges of the kind \text{$x\rightarrow y$} and
      \text{$y\rightarrow z$}; i.e. the pivot should be the target of the first edge and the source of the second edge.  Edge
      composition considers pivot as both source as well as target in both the edges and hence considers all four compositions
       (\sscomp, \ttcomp, \tscomp, and \stcomp). 
      For example, we compose \de{x}{1,0}{z} and \de{x}{2,0}{y} in an \sscomp composition to create a new edge \de{z}{1,0}{y}.
      Transitive reachability computed using matrix multiplication can consider only \tscomp composition.
\end{itemize}
Thus, matrix multiplication does not model edge composition naturally.

\mysection{Constructing \gpgs at the Intraprocedural Level}
\label{sec:intra_gpg}



In this section we define edge reduction,
and \gpg update for computing a new \gpg by incorporating the effect of an edge in the existing \gpg. \gpg composition is described in Section~\ref{sec:interprocedural.extensions} which shows how
procedure calls are handled.

\subsection{Edge Reduction $\newedge \circ \summflow$}
\label{sec:edge.reduction}

In this section, we motivate the need for edge reduction and discuss 
the issues arising out of cascaded compositions across different types of compositions.
Then, we define edge reduction which in turn is used for constructing \asummflow.

\subsubsection{The Need of Edge Reduction}
\label{sec:need-edge-reduction}

Given an edge \newedge and a \gpg \asummflow, edge reduction, denoted \text{$\newedge \circ \asummflow$}, reduces the \indlev of
\newedge progressively by using the edges from \asummflow through a series of edge compositions.
For example, an edge \de{x}{1,2}{y} requires two \tscomp compositions: first one for identifying the pointees of $y$ and second one
for identifying the pointees of pointees of $y$. Similarly, for an edge \de{x}{2,1}{y}, \sscomp and \tscomp edge compositions are
required for identifying the pointees of $x$ which are being defined and the pointees of $y$ whose addresses are being assigned.
Thus, the result of edge reduction is the fixed point computation of the cascaded edge compositions.

\subsubsection{Restrictions on Cascaded Edge Compositions}
\label{sec:restrictions-cascaded-edge-comp}

 An indiscriminate series of edge compositions may cause
a reduced edge \text{$\rededge = \cop$} to be composed again with \prevedge. In some cases, this may restore the original edge \newedge
i.e., \text{$(\cop) \circ \prevedge = \newedge$}, nullifying the effect of the earlier composition.
To see why this happens, assume that the first composition eliminates the pivot $x$ which is replaced by $y$ in the
reduced edge. The second composition may eliminate the node $y$ as the pivot re-introducing node $x$. This is illustrated as follows:

\exmpbeg
Consider the example \stb in Figure~\ref{fig:edge.composition-part-b}. An \stcomp composition between \text{$\prevedge \equiv
\de{y}{1,1}{x}$} and \text{$\newedge \equiv \de{x}{2,0}{z}$} eliminates the pivot $x$ and replaces it with $y$ in the reduced edge
\text{$\rededge \equiv \de{y}{2,0}{z}$}. This reduced edge can now be treated as edge \newedge to further compose with
\text{$\prevedge \equiv \de{y}{1,1}{x}$} using $y$ as the pivot resulting in a new reduced edge \de{x}{2,0}{z} in which $x$ has
been re-introduced, thereby restoring the original edge. 
\exmpend

This nullification may happen with an \stcomp composition followed by an \sscomp composition or vice-versa. Similarly, a \ttcomp composition and a
\tscomp composition nullify the effect of each other.
Since, edge reduction uses a fixed point computation, the computation may oscillate between the original and the reduced edges causing
non-termination. In order to ensure termination, 
we restrict the combinations of edge compositions to 
the following four possibilities: \text{$\sscomp+\tscomp$}, \text{$\sscomp+\ttcomp$}, \text{$\stcomp+\tscomp$}, and \text{$\stcomp+\ttcomp$}.
For our implementation, we have chosen the first combination i.e, \text{$\sscomp+\tscomp$}. We formalize the operation of edge
reduction for this combination in the rest of this section.

\subsubsection{Edge Reduction using \sscomp and \tscomp Edge Compositions}
\label{sec:edge-red-ss-ts}

Edge reduction \text{$\newedge \circ \summflow$} 
uses the edges in \asummflow to
compute a set of edges whose \indlev{\,}s do not exceed that of
\newedge (Definition~\ref{def:edge.reduction}).

 The results of
\sscomp and \tscomp compositions are
denoted by \sscomposition{\newedge}{\summflow} and
\tscomposition{\newedge}{\summflow} which 
compute \relevant and \useful edge compositions;
the
\inconclusive edge compositions are filtered out independently. 
The edge ordering is not required at the
intraprocedural level;
a reverse post order traversal over the control flow graph suffices.

A single-level composition (\slc) 
combines \sscomposition{\newedge}{\summflow}
with \tscomposition{\newedge}{\summflow}. 
When both \tscomp and \sscomp
compositions are possible (first case in \slc), the join operator $\bowtie$ combines their
effects by creating new edges by joining the
sources from \sscomposition{\newedge}{\summflow} 
with the targets from
\tscomposition{\newedge}{\summflow}. If neither of \tscomp and \sscomp
compositions are possible (second case in 
\slc), 
edge \newedge is considered as the reduced edge. If only one of them is possible, its
result becomes the result of \slc (third case).
Since the reduced edges computed by \slc may compose with other edges
in \summflow, we extend \slc to multi-level composition (\mlc) which
recursively composes edges in $X$ with edges in \summflow through
function \slces which extends \slc to a set of edges.

\begin{center}
\fourDef
\end{center}


\exmpbeg
When \newedge represents a statement \text{$x=*y$}, we need multi-level compositions: The first-level composition identifies pointees
of $y$ while the second-level composition identifies the pointees of
pointees of $y$. This is facilitated by function
\setlength{\intextsep}{-.4mm}%
\setlength{\columnsep}{2mm}%
\begin{wrapfigure}{r}{24.5mm}
\renewcommand{\arraystretch}{.9}%
$
\setlength{\arraycolsep}{3pt}
\begin{array}{|lrcl|}
\hline
\rule{0em}{0.85em}
s_1:& y &= & \&a;
	\\
s_2:& a &= & \&b;
	\\
s_3:& x &= & *y;
	\\ \hline
\end{array}
$
\end{wrapfigure}
 \mlc.
 Consider the code snippet on the right.
$\summflow = \{\de{y}{1,0}{a}, \de{a}{1,0}{b}\}$ for $\newedge \equiv \de{x}{1,2}{y}$ (statement $s_3$). 
This involves two consecutive \tscomp compositions. The first composition involves \de{y}{1,0}{a} as \prevedge resulting in
$\tscomposition{\newedge}{\summflow} = \{\de{x}{1,1}{a}\}$ and $\sscomposition{\newedge}{\summflow} = \emptyset$. This satisfies
the third case of \slc. 
Then, \slces is called with 
$X = \{\de{x}{1,1}{a}\}$. The
second \tscomp composition between \de{x}{1,1}{a} (as a new \newedge) and \de{a}{1,0}{b} (as \prevedge)  results in a reduced
edge \de{x}{1,0}{b}. \slces is called again with $X = \{\de{x}{1,0}{b}\}$ which returns $X$, satisfying the base condition of
\mlc.
\exmpend

\exmpbeg
Single-level compositions are combined using $\bowtie$ when \newedge represents 
\text{$*x = y$}.  
\setlength{\intextsep}{-.65mm}%
\setlength{\columnsep}{2mm}%
\begin{wrapfigure}{r}{26.5mm}
$
\setlength{\arraycolsep}{3pt}
\begin{array}{|lrcl|}
\hline
\rule{0em}{0.85em}
s_1:& x &= & \&a;
	\\
s_2:& y &= & \&b;
	\\
s_3:& *x &= & y;
	\\ \hline
\end{array}
$
\end{wrapfigure}
For the code snippet on the right, 
 \sscomposition{\newedge}{\summflow} returns \{\de{a}{1,1}{y}\} and 
\tscomposition{\newedge}{\summflow} returns
\{\de{x}{2,0}{b}\} when \newedge is \de{x}{2,1}{y} (for statement $s_3$). The join operator $\bowtie$ combines the effect of \tscomp and \sscomp compositions by combining the sources from 
\sscomposition{\newedge}{\summflow} and the targets from
\tscomposition{\newedge}{\summflow} resulting in a reduced edge $\rededge \equiv \de{a}{1,0}{b}$.
\exmpend

\subsubsection{A Comparison with Dynamic Transitive Closure}
\label{sec:dynamic-trans-closure}

It is tempting to compare edge reduction $\newedge \circ \summflow$  with dynamic transitive 
closure~\cite{Demetrescu:2010:DGA:1882757.1882766,Demetrescu:2008:MDM:1389898.1389899}:
edge composition computes a new edge that captures the transitive effect and this is done repeatedly by \mlc.
However, the analogy stops at this abstract level. Apart from the reasons mentioned in Section~\ref{sec:why.not.matrix.mult}, the
following differences make it difficult to model edge reduction in terms of dynamic transitive closure. 
\begin{itemize}
\item[$\bullet$] Edge reduction does not compute unrestricted transitive effects.  Dynamic transitive closure computes unrestricted transitive effects.
\item[$\bullet$] We do not perform closure. Either the final set computed by \mlc is retained in \summflow or \newedge is retained in \summflow.  Dynamic transitive closure implies retaining all edges including the edges computed in the intermediate steps. 
\end{itemize}

\subsection{Constructing \gpgs $\asummflow(\flab,\slab)$} 
\label{sec:gpg.abstract}


\begin{figure}[t]
\psset{arrowsize=1.75}
\begin{tabular}{ccc}
       \begin{pspicture}(-3,3)(33,10)
	\small

	  \putnode{m1}{origin}{3}{9}{\pscirclebox[fillstyle=solid,fillcolor=black,framesep=.75mm]{}}
	  	\putnode{w}{m1}{0}{-3}{$x\;$}
	  \putnode{m2}{m1}{6}{0}{\pscirclebox[fillstyle=solid,fillcolor=black,framesep=.75]{}}
	  	\putnode{w}{m2}{0}{-3}{$a$}
	  \putnode{m3}{m2}{6}{0}{\pscirclebox[fillstyle=solid,fillcolor=black,framesep=.75]{}}
	  	\putnode{w}{m3}{0}{-3}{$b$}
	  \putnode{m4}{m3}{6}{0}{\pscirclebox[fillstyle=solid,fillcolor=black,framesep=.75]{}}
	  	\putnode{w}{m4}{0}{-3}{$c$}
	  \putnode{m5}{m4}{5}{0}{\pscirclebox[fillstyle=solid,fillcolor=black,framesep=.1]{}
	                         \pscirclebox[fillstyle=solid,fillcolor=black,framesep=.1]{}
	                         \pscirclebox[fillstyle=solid,fillcolor=black,framesep=.1]{}}
	\ncline{->}{m1}{m2}
	\ncline{->}{m2}{m3}
	\ncline{->}{m3}{m4}
\end{pspicture}
&
       \begin{pspicture}(-3,1)(40,15)
	\small
       \putnode{n1}{origin}{1}{7}{\pscirclebox[framesep=1.5]{$x$}}
       \putnode{n2}{n1}{12}{0}{\pscirclebox[framesep=1.3]{$a$}}
       \putnode{n3}{n2}{11}{0}{\pscirclebox[framesep=1.3]{$b$}}
       \putnode{n4}{n3}{9}{0}{\pscirclebox[framesep=1.3]{$c$}}
	  \putnode{n5}{n4}{7}{0}{\pscirclebox[fillstyle=solid,fillcolor=black,framesep=.1]{}
	                         \pscirclebox[fillstyle=solid,fillcolor=black,framesep=.1]{}
	                         \pscirclebox[fillstyle=solid,fillcolor=black,framesep=.1]{}}
	\nccurve[angleA=0,angleB=180]{->}{n1}{n2}
	\nbput[labelsep=1.5pt,npos=.5]{1,0}
	\nccurve[ncurv=.5,nodesepA=-.8,nodesepB=-1,angleA=45,angleB=135]{->}{n1}{n3}
	\nbput[labelsep=.5pt,npos=.8]{2,0}
	\nccurve[ncurv=.5,nodesepA=-1,nodesepB=-.9,angleA=-45,angleB=225]{->}{n1}{n4}
	\naput[labelsep=1pt,npos=.52]{3,0}
	\end{pspicture}
&
       \begin{pspicture}(-3,0)(30,11)
	\small
       \putnode{o1}{origin}{8}{6}{\pscirclebox[framesep=0.8]{$x'$}}
       \putnode{o2}{o1}{14}{0}{\psframebox[framesep=1.75]{\summn}}
	\ncline[doubleline=true,arrowsize=2.5,doublesep=.1]{->}{o1}{o2}
	\aput[2pt](.4){$\nat,0$}
\end{pspicture}

\\ 
(a) Memory view & (b) \gpg view & (c) Aggregate edge
\end{tabular}
\caption{Aggregate edge for handling strong and weak updates. For this example, \text{$\summn = \{ a, b, c, \ldots \}$}.}
\label{fig:asummflow.update}
\end{figure}

For simplicity, we consider \asummflow only as a collection of edges, leaving the nodes implicit. 
Further,
the edge ordering does not matter at the intraprocedural level and hence we treat \asummflow 
as a set of edges. The construction of \asummflow assigns sequence numbers in the order of inclusion
of edges; these sequence numbers are maintained externally and are used during \gpg composition (Section~\ref{sec:interprocedural.extensions}).

By default, the \gpgs record the \may information but a simple extension in the form of
\emph{boundary definitions} (described in the later part of this section) allows them to
record the \must information. 
This supports distinguishing between strong and weak updates and
yet allows a simple set union to combine the information.

\begin{center}
\sevenDef
\end{center}

Definition~\ref{def:asummflow.construction} is an adaptation of Definition~\ref{def:basic.concepts} for \gpgs. 
Since \asummflow is viewed as a set of edges, the identity function
\text{$\asummflow_{id}$} is $\emptyset$, meet operation is $\cup$, and
\text{$\asummflow(\flab,\slab)$} is the least fixed point of the 
equation in Definition~\ref{def:asummflow.construction}.
The composition of a statement-level flow function 
(\newedge) with a summary flow function ($\asummflow(\flab,\tlab)$)
is performed by \gpg update which
includes all edges computed by edge reduction \text{$\newedge\!\circ\!\asummflow(\flab,\tlab)$};
the edges to be removed are under-approximated when a strong update cannot be 
performed (described in the rest of the section). 
When a strong update is performed, we exclude those edges of \asummflow whose source and \indlev match that of the
shared source of the reduced edges (identified by \text{$\match(e,\asummflow)$}).
For a weak update, \text{$\conskill(X,\asummflow) = \emptyset$} and
$X$ contains reduced edges. 
For an \inconclusive edge composition, \text{$\conskill(X,\asummflow) = \emptyset$} and
\text{$X = \{ \newedge \}$}.


\subsubsection*{Computing Edge Order for \gpg Edges to Facilitate Flow-Sensitivity}
\label{sec:edge-order}

\gpgs represent flow-sensitive memory transformers when their edges are viewed as a sequence matching
the control flow order of statements represented by the \gpgs. Hence we impose
an ordering on \gpg edges and maintain it externally explicating it whenever required. This allows us to treat 
\gpgs as set of edges by default in all computations and bring in the ordering only when required.

The ordering of \gpg edges for a procedure is governed by 
a reverse post order traversal over the control flow graph of the procedure. It is required only 
when the effect of a callee is incorporated in its caller because the control flow of the callee is not available.
Since a sequence is totally ordered but control flow is partially ordered, the \gpg operations (Section~\ref{sec:interprocedural.extensions}) 
internally relax the total order to ensure that the edges appearing on different control flow paths do not affect each other. 

Let \E, \Stmt, and \order denote the
the set of edges in a \gpg, set of statements, and a set of positive integers representing order numbers.  Then,
the edge order is maintained as a map 
\text{$\Stmt \to 2^{\E} \to \order$}. A particular statement may cause inclusion of multiple edges in a \gpg and all edges resulting from the same (non-call) statement should be assigned the same order
number. We also maintain reverse map 
\text{$\order \to 2^{\E}$} for convenience.
\subsection{Extending \asummflow to Support Strong Updates.}
\label{sec:may.must.xxp.edges}

Conventionally, points-to information is killed based on the following criteria: An assignment 
\text{$x=\ldots$} removes all points-to facts \de{x}{}{\cdot} whereas an assignment
\text{$*x=\ldots$} removes all points-to facts \de{y}{}{\cdot} where $x$ \must-points-to $y$; the latter represents 
a \emph{strong update}. When $x$ \may-points-to $y$, no points-to facts can be removed representing 
a \emph{weak update}.

\begin{figure}[t]
\centering
\begin{pspicture}(0,0)(104,52)
\psset{arrowsize=1.75,nodesep=-.5}
\putnode{n1}{origin}{57}{48}{\psframebox[linestyle=none]{\begin{tabular}{c}
			Source of the 
			reduced edges 
			in $X = \newedge \circ\, \asummflow$
			\end{tabular}}}
\putnode{n2}{n1}{-15}{-11}{\psframebox[linestyle=none]%
		{\begin{tabular}{c}
			Single \big($|\Def(X)\! =\! 1|$\big)
			\end{tabular}}}
\putnode{n3}{n1}{23}{-11}{\psframebox[linestyle=none]{Multiple \big($|\Def(X) > 1|$\big)}}
\ncline[nodesepA=-1.5,nodesepB=-0.6]{->}{n1}{n2}
\ncline[nodesepA=-1.5]{->}{n1}{n3}
%
\putnode{n5}{n2}{17}{-12}{\psframebox[linestyle=none]{\renewcommand{\arraystretch}{.9}%
			\begin{tabular}{@{}l@{}}
				Some path does not \\ have  a  definition 
			\end{tabular}}}
\putnode{n6}{n2}{-17}{-12}{\psframebox[linestyle=none]{\renewcommand{\arraystretch}{.9}%
			\begin{tabular}{@{}l@{}}
				Every path has \\ a definition
			\end{tabular}}}
%
\ncline[nodesepA=-0.8,nodesepB=-.75,linestyle=dashed,dash=.6 .4]{->}{n2}{n5}
\ncline[nodesepA=-1.5,nodesepB=-1]{->}{n2}{n6}
\putnode{n7}{n6}{0}{-18}{\psframebox[framearc=.1]{\renewcommand{\arraystretch}{.9}%
			\begin{tabular}{@{}l@{}}
			\emph{Strong Update}
			(Matching \\ edges 
			can be removed)
			\end{tabular}}}
\putnode{n8}{n3}{0}{-30}{\psframebox[framearc=.1]{\renewcommand{\arraystretch}{.9}%
			\begin{tabular}{@{}l@{}}
			\emph{Weak Update}
			(No edge \\ can 
			be removed)
			\end{tabular}}}
\ncline[nodesepA=-.8,nodesepB=0.25]{->}{n6}{n7}
\ncline[nodesepA=-.8,nodesepB=0.25,linestyle=dashed,dash=.6 .4]{->}{n5}{n8}
\ncline[nodesepA=-.8,nodesepB=0.25]{->}{n3}{n8}
\end{pspicture}
\caption{Criteria for strong and weak updates in \asummflow. Our formulations eliminate the
dashed edge simplifying strong updates.}
\label{fig:asummflow.s.update}
\end{figure}

Observe that the use of points-to information for strong updates is inherently captured by edge reduction. 
In particular, the use of edge reduction allows us to 
model the edge removal for both $x = \ldots$ and $*x = \ldots$ statements uniformly as follows: the reduced edges should
define the same pointer (or the same pointee of a given pointer) along every control flow path reaching the
statement represented by \newedge. This is 
captured by the requirement \text{$|\Def(X)|=1$} in \conskill in Definition~\ref{def:asummflow.construction}
where \text{$\Def(X)$} extracts the source nodes and their indirection levels of the edges in $X$.

When \text{$|\Def(X)| > 1$}, the reduced edges define multiple pointers (or different pointees of the same pointer)
leading to a weak update resulting in no removal of edges from \asummflow.
When \text{$|\Def(X)| = 1$}, all reduced edges define the same pointer (or the same pointee of a given pointer).
However, this is necessary but not sufficient for a strong update because the pointer may not be defined along all the paths---there may be a path which does not contribute to \text{$\Def(X)$}. We refer to such paths as definition-free paths for that particular pointer (or some pointee of a pointer).
The possibility of such a path makes
it difficult to distinguish between strong and weak updates as illustrated in Figure~\ref{fig:asummflow.s.update}.

Since a pointer $x$ or its transitive pointees may be defined along some but not all control flow paths from \flab to \slab,
we eliminate the possibility of definition-free paths from \flab to \slab by introducing \emph{boundary definitions} of the following two kinds at \flab:
\begin{inparaenum}[\em (a)]
\item a pointer assignment \text{$x = x'$} where $x'$ is a 
	symbolic representation of the initial value of $x$ at \flab (called the \emph{upwards exposed}~\cite{dfa_book} version of $x$), and
\item a set of assignments representing the relation between $x'$ and its transitive pointees.
\end{inparaenum}
These boundary definitions are represented by special \gpg edges---the first, by a \emph{copy edge} \de{x}{1,1}{x'} and the others, by
an \emph{aggregate} edge \de{x'}{\nat,0}{\summn} where \nat is the set of all possible \indlev{\,}s and \summn is the summary node
representing all possible pointees.
As illustrated in Figure~\ref{fig:asummflow.update},
\de{x'}{\nat,0}{\summn} is a collection of \gpg edges (Figure~\ref{fig:asummflow.update}(b))
representing the relation between $x$ with it transitive pointees at \flab (Figure~\ref{fig:asummflow.update}(a)).


A reduced edge \de{x}{1,j}{y} along any path from
\flab to \slab removes the copy edge $\de{x}{1,1}{x'}$ indicating that $x$ is redefined.
A reduced edge \de{x}{i,j}{y}, $i\!>\!1$ modifies the aggregate edge \de{x'}{\nat, 0}{\summn}
to \de{x'}{(\nat-\{i\}),0}{\summn} indicating that $(i\!-\!1)^{th}$ pointees of $x$ are redefined.

The inclusion of aggregate and copy edges 
guarantees that \text{$|\Def(X)| = 1$}
only when the source is defined along every path thereby eliminating the dashed path in Figure~\ref{fig:asummflow.s.update}. 
This leads to a necessary and sufficient condition for strong updates.
Note that the copy and aggregate edges improve 
the precision of analysis by enabling strong updates and are not required for its soundness.

\begin{figure}[t]
\centering\small
\psset{unit=.8mm}
\psset{arrowsize=2}
\begin{tabular}{@{}ccc@{}}
\begin{tabular}{c}
\begin{pspicture}(8,10)(57,76)

\newcommand{\hdist}{17}
\newcommand{\vdist}{13}
\newcommand{\tdist}{30}
\newcommand{\slantedstrikeoff}{%
\ncput[npos=.15]{/}
\ncput[npos=.30]{/}
\ncput[npos=.45]{/}
\ncput[npos=.60]{/}
}
\newcommand{\rslantedstrikeoff}[1]{%
\ncput[npos=.15,nrot=#1]{\scalebox{.8}{/}}
\ncput[npos=.27,nrot=#1]{\scalebox{.8}{/}}
\ncput[npos=.39,nrot=#1]{\scalebox{.8}{/}}
\ncput[npos=.51,nrot=#1]{\scalebox{.8}{/}}
\ncput[npos=.63,nrot=#1]{\scalebox{.8}{/}}
}
\newcommand{\curvestrikeoff}{%
\ncput[npos=.35]{/}
\ncput[npos=.45]{/}
\ncput[npos=.55]{/}
\ncput[npos=.65]{/}
}
\newcommand{\horzstrikeoff}{%
\ncput[npos=.15]{$-$}
\ncput[npos=.30]{$-$}
\ncput[npos=.45]{$-$}
\ncput[npos=.60]{$-$}
}
\putnode{a1}{origin}{17}{71}{\pscirclebox[framesep=0.7]{$a'$}}
\putnode{ap}{a1}{\hdist}{0}{\pscirclebox[framesep=1.5]{$a$}}
\putnode{e}{a1}{0}{-\vdist}{\pscirclebox[framesep=0.75]{$z'$}}
\putnode{v}{e}{\hdist}{0}{\pscirclebox[framesep=1.4]{$z$}}
\putnode{z1}{v}{\hdist}{6}{\pscirclebox[framesep=1.46]{$v$}}
\putnode{z2}{v}{\hdist}{-6}{\pscirclebox[framesep=1.46]{$u$}}
\putnode{u}{ap}{\hdist}{0}{\pscirclebox[framesep=1.56]{$e$}}
\putnode{x1}{e}{-6}{-\vdist}{\psframebox[framesep=1.62]{\summn}}
\putnode{xp}{x1}{\hdist+6}{0}{\pscirclebox[framesep=0.77]{$x'$}}
\putnode{zp}{xp}{\hdist}{0}{\pscirclebox[framesep=1.62]{$x$}}
\putnode{y1}{x1}{6}{-\vdist}{\pscirclebox[framesep=0.5]{$y'$}}
\putnode{b1}{y1}{\hdist}{0}{\pscirclebox[framesep=.8]{$b'$}}
\putnode{bp}{b1}{\hdist}{0}{\pscirclebox[framesep=1.17]{$b$}}
\putnode{yp}{y1}{0}{-\vdist}{\pscirclebox[framesep=1.3]{$y$}}
\putnode{d}{yp}{\hdist}{0}{\pscirclebox[framesep=1.15]{$d$}}
%
\ncline{->}{ap}{a1}
\rslantedstrikeoff{20}
\nbput{1,1}
\ncline{->}{v}{e}
\rslantedstrikeoff{20}
\nbput{1,1}
\ncline{->}{zp}{xp}
\rslantedstrikeoff{20}
\nbput{1,1}
\ncline{->}{bp}{b1}
\nbput{1,1}
\ncline{->}{yp}{y1}
\naput{1,1}
\nccurve[nodesepA=-.4,doubleline=true,arrowsize=2.2,doublesep=.1,angleA=210,angleB=90]{->}{e}{x1}
\naput[labelsep=.1]{\nat,0}
\nccurve[nodesepA=-.4,doubleline=true,arrowsize=2.2,doublesep=.1,angleA=150,angleB=270]{->}{y1}{x1}
\naput[labelsep=.1]{\nat,0}
\ncline[doubleline=true,arrowsize=2.2,doublesep=.1]{->}{xp}{x1}
\naput[labelsep=.1,npos=.3]{\nat,0}
\nccurve[angleA=125,angleB=330,doubleline=true,arrowsize=2.2,doublesep=.1,nodesepA=-0.5]{->}{b1}{x1}
\naput[labelsep=0,npos=.3]{\nat,0}
\nccurve[angleA=180,angleB=127,doubleline=true,arrowsize=2.2,doublesep=.1,nodesepA=-.3]{->}{a1}{x1}
\naput[labelsep=.2,npos=.5]{\nat,0}
\ncline{->}{ap}{u}
\nbput[labelsep=.5]{$g_1$}
\naput[labelsep=.5]{1,0}
\ncline{->}{zp}{bp}
\naput[labelsep=0.75]{$g_6$}
\nbput[labelsep=0.75]{1,0}
\ncline[nodesepA=-.7,nodesepB=-0.8]{->}{y1}{d}
\nbput[labelsep=.75]{$g_7$}
\naput[labelsep=.75]{2,0}
\ncline[nodesepA=-.7,nodesepB=-0.8]{->}{bp}{d}
\nbput[labelsep=.5]{$g_8$}
\naput[labelsep=.5]{1,0}

\ncline[nodesepA=-.5,nodesepB=-0.6]{->}{v}{z1}
\naput[labelsep=0.3,npos=.7]{$g_5$}
\nbput[labelsep=-0.2,npos=.7]{1,0}
\ncline[nodesepA=-.5,nodesepB=-0.6]{->}{v}{z2}
\naput[labelsep=-0.2,npos=.4]{$g_3$}
\nbput[labelsep=0,npos=.4]{1,0}
\ncline[nodesepA=-.7,nodesepB=-0.8]{->}{xp}{e}
\nbput[labelsep=.3]{$g_2$}
\naput[labelsep=0]{2,1}
\nccurve[angleA=-55,angleB=295,nodesepA=-.6,nodesepB=-0.3,ncurv=.85]{->}{yp}{bp}
\nbput[labelsep=.1]{$g_4$}
\naput[labelsep=.1]{1,0}

\end{pspicture}
\end{tabular}
& 
\begin{tabular}{@{}c@{}}
\begin{pspicture}(-4,8)(54,75)

\newcommand{\hdist}{16}
\newcommand{\vdist}{13}
\newcommand{\tdist}{32}
\newcommand{\slantedstrikeoff}{%
\ncput[npos=.15]{/}
\ncput[npos=.30]{/}
\ncput[npos=.45]{/}
\ncput[npos=.60]{/}
}
\newcommand{\curvestrikeoff}{%
\ncput[npos=.35]{/}
\ncput[npos=.45]{/}
\ncput[npos=.55]{/}
\ncput[npos=.65]{/}
}
\newcommand{\horzstrikeoff}{%
\ncput[npos=.15]{\scalebox{.8}{$-$}}
\ncput[npos=.30]{\scalebox{.8}{$-$}}
\ncput[npos=.45]{\scalebox{.8}{$-$}}
\ncput[npos=.60]{\scalebox{.8}{$-$}}
}

\putnode{a1}{origin}{8}{57}{\pscirclebox[framesep=1.5]{$a$}}
\putnode{e}{a1}{0}{\vdist}{\pscirclebox[framesep=1.56]{$e$}}
\putnode{x1}{a1}{0}{-\vdist}{\pscirclebox[framesep=1.62]{$x$}}
\putnode{w}{a1}{\tdist}{0}{\pscirclebox[framesep=1.42]{$w$}}
\putnode{v}{x1}{\hdist}{0}{\pscirclebox[framesep=1.46]{$v$}}
\putnode{z1}{v}{\hdist}{0}{\pscirclebox[framesep=1.4]{$z$}}
\putnode{u}{z1}{0}{-\vdist-4}{\pscirclebox[framesep=1.48]{$u$}}
\putnode{b1}{x1}{\hdist}{-\vdist-4}{\pscirclebox[framesep=1.17]{$b$}}
\putnode{d}{b1}{0}{-\vdist-2}{\pscirclebox[framesep=1.15]{$d$}}
\putnode{y1}{b1}{-\hdist}{0}{\pscirclebox[framesep=1.3]{$y$}}
\putnode{yp}{y1}{0}{-\vdist-2}{\pscirclebox[framesep=.5]{$y'$}}
%
\ncline{->}{b1}{d}
\nbput[labelsep=.95]{$f_{10}$}
\naput[labelsep=.95]{1,0}
\horzstrikeoff
\ncline{->}{b1}{u}
\naput[labelsep=.1,npos=.4]{1,0}
\nbput[labelsep=.1,npos=.4]{$f_{11}$}
\ncline{->}{b1}{v}
\nbput[labelsep=.2]{1,0}
\naput[labelsep=.2]{$f_{12}$}
\ncline{->}{a1}{e}
\nbput[labelsep=.5]{1,0}
\naput[labelsep=.5]{$f_3$}
\ncline{->}{z1}{u}
\naput[labelsep=.5]{1,0}
\nbput[labelsep=.5]{$f_5$}
\ncline{->}{z1}{v}
\nbput[labelsep=.5,npos=.5]{1,0}
\naput[labelsep=.5,npos=.4]{$f_7$}
\ncline{->}{z1}{w}
\nbput{1,0}
\naput[labelsep=0.9, npos=.5]{$f_2$}
\horzstrikeoff
\ncline{->}{y1}{b1}
\naput[labelsep=.5]{1,0}
\nbput[labelsep=.5]{$f_6$}
\ncline[nodesep=-.8]{->}{x1}{b1}
\naput[labelsep=.1,npos=.3]{1,0}
\nbput[labelsep=.1,npos=.3]{$f_8$}
\ncline{->}{x1}{a1}
\naput[labelsep=.8]{$f_1$}
\nbput[labelsep=.8]{1,0}
\horzstrikeoff
\ncline{->}{a1}{w}
\naput[labelsep=.75]{1,0}
\nbput[labelsep=.75]{$f_4$}
\ncline{->}{yp}{d}
\naput[labelsep=.5]{2,0}
\nbput[labelsep=.5]{$f_9$}
%

\end{pspicture}
\end{tabular}
&
\begin{tabular}{@{}c}
\begin{minipage}{28mm}
\raggedright
Regardless of the direction of the arrow, $i$ in \sindlev ``$i,j$'' represents its source while $j$
		represents its target. 
		Edges deleted by updates are struck off. 
Subscript $k$ in edge names $g_k, f_k$ indicates the order  of edge inclusion.

\smallskip
Copy and aggregate edges have not been shown for \asummflow for $f$.
\end{minipage}
\end{tabular}
\\ 
\rule{0em}{1.25em}
(a) \asummflow for $g$
&
(b) \asummflow for $f$
&
\end{tabular}
\caption{\asummflow for procedures $f$ and $g$ of
Figure~\protect\ref{fig:mot_eg}.}
\label{fig:asummflow.mot.exmp}
\end{figure}

\exmpbeg
Consider the construction of $\asummflow_g$ as illustrated in Figure~\ref{fig:asummflow.mot.exmp}(c).
Edge $g_1$ created for line 8 of the program, kills edge \de{a}{1,1}{a'} because $|\Def(\{g_1\})| = 1$.
For line 10, since the pointees of $x$ and $z$ are not
available in $g$, edge $g_2$ is created from $x'$ to $z'$;
this involves composition of \de{x}{2,1}{z} with the edges \de{x}{1,1}{x'} and \de{z}{1,1}{z'}.
Edges $g_3$, $g_4$, $g_5$ and $g_6$ correspond to
lines 11, 13, 14, and 16 respectively. 

Edge \de{z}{1,1}{z'} is killed along both paths (lines 11 and 14) and hence is struck off in $\asummflow_g$, indicating
$z$ is \must-defined.
On the other hand, \de{y}{1,1}{y'} is killed only along one of the two paths and hence is retained by the control flow merge
just before line 16. Similarly \de{x'}{2,0}{\summn}  in the aggregate edge \de{x'}{\nat,0}{\summn} is retained indicating that pointee of $x$ is not
defined along all paths. 
Edge $g_6$ 
kills \de{x}{1,1}{x'}.
Line 17 creates edges $g_7$ and $g_8$; this is a weak update because 
$y$ has multiple pointees ($|\Def(\{g_7, g_8\})| \neq 1$). Hence \de{b}{1,1}{b'} is not removed. Similarly, \de{y'}{2,0}{\summn} in the aggregate edge 
\de{y'}{\nat,0}{\summn}
is not removed.
\exmpend

\mysection{Constructing \gpgs at the Interprocedural Level}
\label{sec:interprocedural.extensions}

We have discussed the construction of intraprocedural \gpgs in Section~\ref{sec:intra_gpg}. We now extend \gpg construction to Level 2 of our language which includes handling function calls and recursion.

\subsection{Handling Function Calls}
\label{sec:func_calls}

Definition~\ref{def:asummflow.interprocedural} shows the construction of \gpgs at the interprocedural level by
handling procedure calls.
Consider a procedure~$f$ 
containing a call to $g$ between two consecutive program points \flab and \slab. 
Let \Start{g} and \End{g} denote the start and the end points of $g$.
\asummflow representing the control flow  paths from \Start{f} to \flab (i.e., just before the call to $g$) 
is \text{$\asummflow(\Start{f},\flab)$}; we denote it by 
$\asummflow_f$ for brevity.
\asummflow for the body of procedure $g$ is \text{$\asummflow(\Start{g},\End{g})$}; 
we denote it by $\asummflow_g$. 
Then \text{\asummflow(\Start{f},\slab)} using $\asummflow_f$ and $\asummflow_g$ is computed as follows:
\begin{itemize}
\item[$\bullet$] Edges for actual-to-formal-parameter mapping are added to $\asummflow_f$.
\item[$\bullet$] $\asummflow_f$ and $\asummflow_g$ are composed denoted \text{$\asummflow_f \circ \asummflow_g$}.
\item[$\bullet$] An edge is created between the return variable of $g$ and the receiver variable of the call in $f$ and is added to 
	$\asummflow_f$.
\end{itemize}

The composition of a callee's \gpg with the caller's \gpg can be viewed as incorporating the effect of inlining the
callee in the body of the caller. This intuition suggests the following steps for the composition of \gpgs: 
we select an edge $e$ from $\asummflow_g$
and perform an update \text{$\asummflow_f\left[e \, \circ \,\asummflow_f\right]$}.
We then update the resulting \asummflow with the next edge from $\asummflow_g$.
This is repeated until all 
edges of $\asummflow_g$ are exhausted.
The update of $\asummflow_f$ with an edge $e$ from $\asummflow_g$ involves the following:
\begin{itemize} 
\item[$\bullet$] Substituting the callee's upwards exposed variable $x'$ occurring in $\asummflow_g$ by the caller's original variable $x$ in 
      $\asummflow_f$,
\item[$\bullet$] Including the reduced edges \text{$e \circ \asummflow_f$}, and
\item[$\bullet$] Performing a strong or weak update. 
\end{itemize} 

\begin{center}
\nineDef
\end{center}

A strong update for summary flow function composition \text{$\asummflow_f \circ \asummflow_g$}
i.e., when a call is processed,
is identified by function \compsup (Definition~\ref{def:asummflow.interprocedural}).
Observe that a copy edge \text{$\de{x}{1,1}{x'}\in \asummflow$} implies that $x$ has not been defined along some path. 
Similarly, an aggregate edge \text{$\de{x'}{\nat,0}{\summn}\in \asummflow$} implies that some $(i-1)^{th}$ pointees of $x$, \text{$i\!>\! 1$} have not been defined along some path. 
We use these to define \mustedge{\de{x}{i,j}{y}}{\asummflow} which
asserts that the $(i\!-\!1)^{th}$ pointees of $x$, \text{$i\!>\! 1$} are defined along every control flow path.
We combine it with \text{$\Def(\de{x}{i,j}{y}\circ\,\asummflow)$} to define \compsup for identifying strong updates to be performed for a call. Note that we
need \text{\small\sf\em mustdef} only at the interprocedural level and not at the intraprocedural level. This is because, 
when we use \text{$\asummflow_g$} to incorporate its effect in \text{$\asummflow_f$},
performing a strong update requires knowing whether the source of an edge in 
\text{$\asummflow_g$} has been defined along every control flow path in $g$. However, 
we do not have the control flow information of $g$ when we to incorporate its effect in $\asummflow_f$.
When a strong update is performed, we delete all edges in $\asummflow_f$ that match
\text{$e\circ\asummflow_f$}.
These edges are discovered by taking a union of \text{$\match(e_1,\asummflow_f)$},
for all \text{$e_1 \in (e\circ\asummflow_f)$}.

The total order imposed by the sequence of \gpg edges is interpreted as a partial order as follows:
If an edge to be added involves an upwards exposed variable $x'$, it should be composed with an original edge\footnote{By an original edge in $\asummflow_f$, we mean an edge included in $\asummflow_f$ before processing the call to $g$. This edge could well be an edge
because of a call in $f$ processed before processing the current call.}
 in $\asummflow_f$ rather than 
a reduced edge included in $\asummflow_f$ created by \text{$e_1 \circ \asummflow_f$} for some \text{$e_1 \in \asummflow_g$}.
Further, it is possible that an edge \text{$e_2$} may kill an edge $e_1$ that was added to $\asummflow_f$ because it
coexisted with $e_2$ in $\asummflow_g$.
However, this should be prohibited because their coexistence in $\asummflow_g$ indicates that they are \may edges.
This is ensured by checking the presence of multiple edges with the same source in $\asummflow_g$.
For example, edge $f_7$ of Figure~\ref{fig:asummflow.mot.exmp}(d) does not kill $f_5$ as they coexist in $\asummflow_g$.

\exmpbeg
Consider the construction of $\asummflow_f$ as illustrated in Figure~\ref{fig:asummflow.mot.exmp}(d).
Edges $f_1$ and $f_2$ correspond to lines 2 and 3. 
The call on line 4 causes the
composition of \text{$\asummflow_f  = \{ f_1, f_2 \}$} with $\asummflow_g$ selecting edges in the 
order \text{$g_1, g_2, \ldots, g_8$}.
The edges from $\asummflow_g$ with their corresponding names in $\asummflow_f$ (denoted name-in-$g$/name-in-$f$) are:
$g_1/f_3$, $g_3/f_5$, $g_4/f_6$, $g_5/f_7$, $g_6/f_8$, $g_7/f_9$, and $g_8/f_{10}$.
Edge $f_4$ is created by \sscomp and \tscomp compositions of $g_2$ with $f_1$ and $f_2$.
Although $x$ has a single pointee (along edge $f_1$), the resulting update is
a weak update because the source of $g_2$ is  \may-defined indicated by the presence of \de{x'}{2,0}{\summn} in the aggregate edge 
\de{x'}{\nat,0}{\summn}.

Edges $g_3/f_5$ and $g_5/f_7$ together kill $f_2$. 
Note that the inclusion of $f_7$ does not kill $f_5$ because they both are from $\asummflow_g$.
Finally, the edge for line 5 (\de{x}{2,1}{z})
undergoes an \sscomp composition (with $f_8$) and \tscomp compositions (with $f_5$ and $f_7$).
This creates edges $f_{11}$ and $f_{12}$. Since \de{x}{2,1}{z} is accompanied by 
the aggregate edge \de{x'}{\nat-\{2\},0}{\summn} indicating that the pointee of $x$ is \must-defined, and $x$ has a single pointee (edge $f_8$), this is a strong
update killing edge $f_{10}$.
Observe that all edges in $\asummflow_f$ represent classical points-to facts except $f_9$. 
We need the pointees of $y$ from the callers of $f$ to reduce $f_9$.
\exmpend

\subsection{Handling Recursion}
\label{sec:handling_recur}

\begin{figure}[t]
\begin{center}
\setlength{\codeLineLength}{20mm}
\renewcommand{\arraystretch}{.9}
\begin{tabular}{c|c|c|c}
\begin{tabular}{c}
	\begin{tabular}{rc}
	\codeLineOne{1}{0}{void f()}{white} 
	\codeLine{0}{\OB }{white}
	\codeLine{1}{if ($\ldots$) \OB}{white}
	\codeLine{2}{y = \&a;}{white}
	\codeLine{1}{\CB}{white}
	\codeLine{1}{else \OB}{white}
	\codeLine{2}{y = \&b;}{white}
	\codeLine{2}{f();}{white}
	\codeLine{1}{\CB}{white}
	\codeLine{0}{\CB}{white}
	\end{tabular}
\end{tabular}
&
\begin{tabular}{c}
\begin{pspicture}(0,0)(32,40)
\putnode{m}{origin}{18}{38}{$n_{1}\;$\psframebox{{\Start{f}}}\white$\;n_{1}$}  
\putnode{a}{m}{-10}{-18}{$n_{2}\;$\psframebox{$y = \&a$}\white$\;n_{2}$} 
\putnode{b}{m}{10}{-12}{$n_{3}\;$\psframebox{$y = \&b$}\white$\;n_{3}$}
\putnode{c}{b}{0}{-12}{$n_{4}\;$\psframebox{$f();$}\white$\;n_{4}$}
\putnode{d}{m}{0}{-36}{$n_{5}\;$\psframebox{{\End{f}}}\white$\;n_{5}$}
\ncline{->}{m}{a}
\ncline{->}{m}{b}
\ncline{->}{b}{c}
\ncline{->}{c}{d}
\ncline{->}{a}{d}
\end{pspicture}
\end{tabular}
&
	\begin{tabular}{c}
	\begin{pspicture}(0,0)(19,40)
	\putnode{x}{origin}{2}{28}{\pscirclebox[framesep=1.2]{$y$}}
	\putnode{a}{x}{15}{0}{\pscirclebox[framesep=1.3]{$a$}}
	\putnode{w}{a}{-8}{-24}{\begin{tabular}{c} (a) $\asummflow_f$ with initial \\ value $\asummflow_\top$ \end{tabular}}
	\ncline{->}{x}{a}
	\aput[2pt](.5){1,0}
	\end{pspicture}
	\end{tabular}
&
	\begin{tabular}{c}
	\begin{pspicture}(0,0)(22,40)
	\putnode{x}{origin}{4}{28}{\pscirclebox[framesep=1.2]{$y$}}
	\putnode{a}{x}{15}{0}{\pscirclebox[framesep=1.3]{$a$}}
	\putnode{b}{a}{0}{-12}{\pscirclebox[framesep=1.3]{$b$}}
	\putnode{w}{a}{-8}{-24}{\begin{tabular}{c} (b) $\asummflow_f$ with initial \\ value $\asummflow_{id}$ \end{tabular}}
	\ncline{->}{x}{a}
	\aput[2pt](.5){1,0}
	\ncline[nodesep=-0.7]{->}{x}{b}
	\bput[2pt](.5){1,0}
	\end{pspicture}
	\end{tabular}
\end{tabular}
\caption{A recursive example demonstrating the need for $\asummflow_\top$.}
\label{fig:recur_precision_eg}
\end{center}
\end{figure}

The summary flow function \asummflow of a procedure is complete only when it incorporates the effect of all its callees.
Hence \asummflow of callee procedures are constructed first to incorporate its effect in their callers resulting in a postorder traversal 
over the call graph. However, in case of recursion, \asummflow of a callee procedure may not have been constructed yet because of 
the presence of a cycle in the call graph. This requires us to begin with an approximate version of \asummflow which is then refined
to incorporate the effect of recursive calls.
When the callee's \asummflow is computed, its call statements will have to be
reprocessed needing a fixed point computation.
This is handled in the usual manner~\cite{dfa_book,sharir.pnueli} by over-approximating initial $\asummflow$
that computes $\top$ for \may points-to analysis (which is  $\emptyset$). Using any other function would be sound but imprecise.
Such a \gpg, denoted $\summflow_\top$, kills all points-to relations and generates none.
 Clearly, $\summflow_\top$ is not expressible as
a \gpg and is not a natural $\top$ element of the meet semi-lattice~\cite{dfa_book} of \gpgs.
It has the following properties related to the meet and composition:
\begin{itemize}
\item \emph{Meet Operation}.
	Since we wish to retain the the meet operation $\sqcap$ as $\cup$, we extend $\cup$ to
	define \text{$\summflow \cup \summflow_\top = \summflow$} for any \gpg \asummflow. This property is also satisfied by a \gpg
	$\summflow = \emptyset$ denoted $\summflow_{id}$, however, it is an identify flow function and not a function computing $\top$
	because it does not kill points-to information. 
\item \emph{Composition}. Since \text{$\summflow_{\top}$} is a constant function returning $\top$ value of the lattice of \may
points-to analysis, it follows that
\text{$\forall \summflow, \; \summflow_{\top} \circ \summflow = \summflow_{\top}$}.
Similarly,
\begin{align*}
\forall X,\forall \summflow, \; \summflow \circ \summflow_{\top}\left(X\right) = \summflow \left(\summflow_\top(X)\right) = \summflow \left(\top\right) = \emptyset \left[\summflow\right]
\end{align*}
which implies that 
\text{$\summflow \circ \summflow_{\top} = \summflow$}. This is because $\top$ for \may points-to analysis is $\emptyset$ and empty
memory updated with \summflow returns \summflow.
Although,
\text{$\summflow \circ \summflow_{\top} = \summflow$},
it is only
an intermediate function because the fixed point computation induced by
recursion will eventually replace $\summflow_\top$ by an appropriate summary flow function.
\end{itemize}

\exmpbeg
Consider the example of Figure~\ref{fig:recur_precision_eg} to understand the difference between
using $\asummflow_{id}$ and $\asummflow_{\top}$ as the initial value.
If we use the initial \asummflow for procedure $f$ at $n_4$ as
$\asummflow_{id}$, a \gpg with no edges, then the \asummflow at the \Out{} of $n_4$ has a \gpg with one edge \de{y}{1,0}{b}. Thus, the summary flow function of procedure $f$ ($\asummflow_f$) computed at
$n_5$ after the meet is 
as 
shown in Figure~\ref{fig:recur_precision_eg}(b). 
After reprocessing the call at $n_4$, we still get the same \gpg. However,
if we consider \text{$\asummflow_\top$} as the initial value for procedure $f$, the \gpg at \Out{} of $n_4$ is an empty \gpg as \text{$\asummflow_\top$}
kills all points-to relations and generates none. Thus, $\asummflow_f$ at $n_5$ is 
as 
shown in Figure~\ref{fig:recur_precision_eg}(a) which remains 
the same even after re-processing.
The
resulting 
summary flow function is
more precise 
than the summary flow function computed using 
$\asummflow_{id}$ as the initial value because it excludes \de{y}{1,0}{b} from the \gpg of procedure $f$. It is easy to see that after a call to $f$ ends, $y$ cannot point to $b$; it must point to $a$.
\exmpend

\mysection{Computing Points-to Information using \gpgs}
\label{sec:dfv_compute}

\begin{figure}[t]
\begin{center}
\setlength{\codeLineLength}{26mm}
\renewcommand{\arraystretch}{.9}
\begin{tabular}{c|c|c}
	\begin{tabular}{rc}
	\codeLineOne{1}{0}{void f()}{white} 
	\codeLine{0}{\OB }{white}
	\codeLine{1}{x = \&a;}{white}
	\codeLine{1}{z = \&b;}{white}
	\codeLine{1}{p = \&c;}{white}
	\codeLine{1}{g();}{white}
	\codeLine{0}{\CB}{white}
	\end{tabular}
	&
	\begin{tabular}{@{}rc}
	\codeLine{0}{void g()}{white}
	\codeLine{0}{\OB}{white}
	\codeLine{1}{y = z;}{white}
	\codeLine{1}{*x = z;}{white}
	\codeLine{0}{\CB}{white}
	\end{tabular}
	&
	\begin{tabular}{rc}
	\codeLine{0}{void h()}{white} 
	\codeLine{0}{\OB }{white}
	\codeLine{1}{x = \&u;}{white}
	\codeLine{1}{z = \&v;}{white}
	\codeLine{1}{q = \&w;}{white}
	\codeLine{1}{g();}{white}
	\codeLine{0}{\CB}{white}
	\end{tabular}
\\ \hline
	\begin{tabular}{c}
	\begin{pspicture}(0,0)(25,40)
	\putnode{x}{origin}{5}{33}{\pscirclebox[framesep=1.7]{$x$}}
	\putnode{a}{x}{15}{0}{\pscirclebox[framesep=1.7]{$a$}}
	\putnode{z}{x}{0}{-12}{\pscirclebox[framesep=1.7]{$z$}}
	\putnode{b}{z}{15}{0}{\pscirclebox[framesep=1.7]{$b$}}
	\putnode{p}{z}{0}{-12}{\pscirclebox[framesep=1.7]{$p$}}
	\putnode{c}{p}{15}{0}{\pscirclebox[framesep=1.7]{$c$}}
	\ncline{->}{x}{a}
	\aput[2pt](.5){1,0}
	\ncline{->}{z}{b}
	\aput[2pt](.5){1,0}
	\ncline{->}{p}{c}
	\aput[2pt](.5){1,0}
	\end{pspicture}
	\end{tabular}
&
	\begin{tabular}{c}
	\begin{pspicture}(0,0)(25,40)
	\putnode{z}{origin}{5}{33}{\pscirclebox[framesep=1.7]{$z$}}
	\putnode{az}{z}{15}{0}{\pscirclebox[framesep=0.8]{$z'$}}
       	\putnode{s}{az}{0}{-12}{\psframebox[framesep=1.75]{\summn}}
	\putnode{y}{z}{0}{-12}{\pscirclebox[framesep=1.7]{$y$}}
	\putnode{x}{y}{0}{-12}{\pscirclebox[framesep=1.7]{$x$}}
	\putnode{ax}{x}{15}{0}{\pscirclebox[framesep=0.8]{$x'$}}
       \putnode{s}{az}{0}{-12}{\psframebox[framesep=1.75]{\summn}}
	\nccurve[angleA=125,angleB=235,nodesep=-.7]{->}{ax}{az}
	\aput[2pt](.5){2,1}
	\ncline[nodesep=-0.75]{->}{y}{az}
	\aput[2pt](.5){1,1}
	\ncline{->}{x}{ax}
	\aput[2pt](.5){1,1}
	\ncline{->}{z}{az}
	\aput[2pt](.5){1,1}
	\ncline{->}{ax}{s}
	\bput[2pt](.5){$\nat,0$}
	\ncline{->}{az}{s}
	\aput[2pt](.5){$\nat,0$}
	\end{pspicture}
	\end{tabular}
&
	\begin{tabular}{c}
	\begin{pspicture}(0,0)(25,40)
	\putnode{x}{origin}{5}{33}{\pscirclebox[framesep=1.7]{$x$}}
	\putnode{a}{x}{15}{0}{\pscirclebox[framesep=1.7]{$u$}}
	\putnode{z}{x}{0}{-12}{\pscirclebox[framesep=1.7]{$z$}}
	\putnode{b}{z}{15}{0}{\pscirclebox[framesep=1.7]{$v$}}
	\putnode{p}{z}{0}{-12}{\pscirclebox[framesep=1.7]{$q$}}
	\putnode{c}{p}{15}{0}{\pscirclebox[framesep=1.7]{$w$}}
	\ncline{->}{x}{a}
	\aput[2pt](.5){1,0}
	\ncline{->}{z}{b}
	\aput[2pt](.5){1,0}
	\ncline{->}{p}{c}
	\aput[2pt](.5){1,0}
	\end{pspicture}
	\end{tabular}
\\
\gpg at line 05 & \gpg at line 11 & \gpg at line 17
\end{tabular}
\caption{An example demonstrating the bypassing performed.}
\label{fig:dfv_compute}
\end{center}
\end{figure}

Recall that the points-to information is represented by a memory \mem. We define two operations to compute a new memory
$\mem'$ using a \gpg or a \gpg edge from a given memory \mem.


\begin{itemize}
	\item[$\bullet$] 
		An \emph{edge application} \text{$\llbracket e \rrbracket \mem$} computes memory $\mem'$
		 by incorporating the effect of \gpg edge $e \equiv \de{x}{i,j}{y}$ in memory \mem. This involves inclusion of edges 
		described by the set
		\text{$\left\{\de{w}{1,0}{z} \mid w \in \amem^{i-1}\{x\},\; z \in \amem^{j}\{y\}\right\}$} in $\mem'$ and 
		removal of edges by distinguishing between a strong and a weak update.
		The edges to be removed are characterized much along the lines of \compkill (Definition~\ref{def:asummflow.interprocedural}).
	\item[$\bullet$] 
		A \emph{\gpg application} \text{$\llbracket\asummflow\rrbracket \mem$} applies the \gpg \asummflow to \mem and computes the
		resulting memory $\mem'$ using edge application iteratively.
\end{itemize}

We now describe the computation of points-to information using these two operations. Let \ptv denote the points-to information at program point \slab in procedure $f$.
Then, \ptv can be computed by
\begin{inparaenum}[\em (a)]
\item computing \emph{boundary information} of $f$ (denoted \text{$\boundary_f$}) associated with the program point \Start{f}, and
\item computing the points-to information at \slab 
	from \text{$\boundary_f$} by incorporating the effect of all paths from \Start{f} to \slab.
\end{inparaenum}

\text{$\boundary_f$} is computed as the union of 
the points-to information reaching $f$ from all of its call points. 
For the main function, \boundary is computed from static initializations.
In the presence of recursion, 
a fixed point computation is required for computing \boundary. 

\exmpbeg
For the program in Figure~\ref{fig:dfv_compute}, 
the \boundary of procedure $g$ (denoted $\boundary_g$) 
is the points-to information
reaching $g$ from its callers $f$ and $h$
and hence
 a union of \gpg at the \Out{} of line numbers 05 and 17. 
Let $\summflow_{10}$ represent the \gpg that includes the effect of line 10. 
Then the points-to information after line number 10 
is ($\summflow_{10} \circ \boundary_g$)
as discussed in Section~\ref{sec:interprocedural.extensions}. Similarly, the points-to information at line number 11
can be computed by
($\summflow_{11} \circ \boundary_g$).
\exmpend

If \slab is \Start{f}, then \text{$\ptv = \boundary_f$}. For other program points,
\ptv can be computed from $\boundary_f$ in the following ways; both of them compute identical \ptv.
\begin{enumerate}[(a)]
\item \emph{Using statement-level flow function (Stmt-ff):} 
	Let $\stmt(\flab,\slab)$ denote the statement between \flab and \slab. If it is a non-call statement, let its
        flow function \text{$\flow(\flab,\slab)$} be represented by the \gpg edge \newedge. Then \ptv is computed as
	the least fixed point of the following data flow equations where \In{\flab,\slab} denotes the points-to information reaching program point $u$ from its predecessor $v$.
	\begin{align*}
	\In{\flab,\slab} & = \begin{cases}
			\llbracket  \asummflow(\Start{q},\End{q})\rrbracket \ptu
				& \stmt(\flab, \slab) =  \text{\em call q}
				\\
			\llbracket \newedge \rrbracket \ptu
				& \text{otherwise}
			\end{cases}
		\\
	\ptu & = \displaystyle \bigcup\limits_{\flab\, \in\, \gpredsubscript(\slab)} \In{\flab,\slab}
	\end{align*}
\item \emph{Using \gpgs:} 
	\ptv is computed using \gpg application \text{$\llbracket\asummflow(\Start{f},\slab)\rrbracket \boundary_f$}.
      This approach of \ptv computation is oblivious to intraprocedural control flow and does not involve 
      fixed point computation for loops because \text{$\asummflow(\Start{f},\slab)$} incorporates the effect of loops. 
\end{enumerate}

Our measurements show that the \emph{Stmt-ff} approach takes much less time than using \gpgs for \ptv computation.
This may appear surprising because the \emph{Stmt-ff} approach requires an additional fixed point computation for handling loops
which is not required in case of \gpgs.
However, using \gpgs for all statements including the non-call statements requires more time because the \gpg at \slab represents a cumulative effect of 
the statement-level flow functions from \Start{f} to \slab. 
Hence the \gpgs tend to become larger with the length of a control flow path. Thus computing \ptv 
using \gpgs for multiple consecutive statements involves redundant computations. 

\exmpbeg
In our example in Figure~\ref{fig:dfv_compute}, $\summflow_{10}$ has only one edge \de{y}{1,1}{z'} (ignoring the aggregate and copy edges)
whereas $\summflow_{11}$ consists of two edges \de{y}{1,1}{z'} and \de{x'}{1,2}{z'} incorporating the effect of all control flow paths from start of procedure $g$ to line number 11 which also includes the effect of line number 10.

As an alternative, we can compute points-to information using statement level flow functions using
the points-to information computed for the \In{} of the
statement (instead of \boundary) thereby avoiding redundant computations. Thus at line number 10, we have \de{y}{1,1}{z} and at line number 11 we have only \de{x}{2,1}{z}.
For a call statement, we can use the \gpg representing the summary flow function of the callee instead of propagating the values
through the body of the callee. This reduces the computation of points-to information to an intraprocedural analysis.
\exmpend

\subsection*{Bypassing of \boundary}

Our measurements show that using the entire \boundary of a procedure may be expensive because 
many points-to pairs reaching a call may not be accessed by the callee procedure. Thus
the efficiency of computing points-to information can be enhanced significantly by filtering out the points-to information
which is irrelevant to a procedure but merely passes through it unchanged. This
concept of \emph{bypassing} has been successfully used for data flow values of scalars~\cite{hakjoo2,hakjoo1}.
\gpgs support this naturally for pointers with the help of upwards exposed versions of variables. An upwards exposed version of a variable in a \gpg indicates 
that there is a use of the variable in the procedure requiring pointee information from the callers. Thus,
the points-to information of such a variable is relevant and should be a part of \boundary. For variables that do not have their corresponding upwards exposed versions in a \gpg, their points-to information is irrelevant and can be discarded from the \boundary of the procedure, effectively bypassing the calls to the procedure.

\exmpbeg
In our example of Figure~\ref{fig:dfv_compute}, the \gpg at the \Out{} of line number 11 
(which represents the summary flow function of procedure $g$)
contains upwards exposed versions of variables $x$ and $z$ indicating that some pointees of $x$ and $z$ from 
the calling context are accessed in the 
procedure $g$. Since the \indlev of $x'$ is 2 which is the source of one of the \gpg edge, its pointee is being defined by $g$.
Thus, pointee of $x$ needs to be propagated to the procedure $g$. Similarly, the \indlev of $z'$ is 1 which is the target of an
\gpg edge specifying that pointee of $z$ is being assigned to some pointer in procedure $g$. Thus, pointees of $x$ and $z$ are
accessed in procedure $g$ but are defined in the calling context and hence should be part of the \boundary of procedure $g$. Note that points-to information of $p$ or $q$ is neither accessed nor defined by procedure $g$ and hence can be bypassed.
Thus, $\boundary_g$ is not the union of \gpg{}s at the \Out{} of line numbers 05 and 17. It excludes edges such as \de{p}{1,0}{c}
and \de{q}{1,0}{w} as they are irrelevant to procedure $g$ and hence are bypassed.
\exmpend

\mysection{Semantics and Soundness of \gpgs}
\label{sec:sound.sff}

We prove the soundness of points-to analysis using \gpgs by establishing the semantics of \gpgs in terms of their effect on memory
and then showing that \gpgs compute a conservative approximation of the memory. For this purpose, we make a distinction between a \emph{concrete memory} at a program point computed along a single control flow path and an \emph{abstract memory} computed along all 
control flow paths reaching the program point. 
The soundness of the abstract memory computed using \gpgs is shown by arguing that it is an over-approximation of concrete memories.

The memory that we have used so far is an abstract memory. This section defines concrete memory and also the semantics of both concrete and abstract memories. 

\subsection{Notations for Concrete and Abstract Memory}
\label{sec:con_abs_mem}

We have already defined a \emph{control flow path} \epath as a sequence of (possibly repeating) program points 
\text{$q_0, q_1, \ldots, q_m$}.
When we talk about a particular control flow path \epath, we use \lsucc and \lpred to denote successors and predecessors of a program point along \epath. Thus,
\text{$q_{i+1} = \lsucc(\epath,q_i)$} and \text{$q_{i} = \lpred(\epath,q_{i+1})$}; \text{$\lsucc^{*}$}, \text{$\lpred^{*}$} denote their reflexive transitive closures.  
In presence of cycles, program points could repeat; however, we do not explicate their distinct occurrences for notational convenience; the context is sufficient to make the distinction.

The \emph{concrete memory} at a program point along a control flow path \epath is an association between pointers and the locations whose addresses they hold
and is represented by a function
\text{$\mem: \ptrs \to (\vars \cup \{?\})$}.
For static analysis, when the effects of multiple control flow paths reaching a
program point are incorporated in the memory, 
the resulting memory is a relation \text{$\mem \subseteq \ptrs \times (\vars \cup \{?\})$}
as we have already seen in Section~\ref{sec:basic.memory.and.summflow}.
We call it an \emph{abstract memory} because it is an over-approximation of the union of
concrete memories along all paths reaching the program point.

When concrete and abstract memories need to be distinguished, we denote them by  \cmem and \amem, respectively. 
$\cmemflab$ denotes the concrete memory associated with a particular occurrence of \flab in a given $\epath$ 
whereas $\amemflab$ denotes the memory associated with all occurrences of \flab in all possible $\epath$s.

Definition \ref{def:basic.concepts.cmem} provides an equation to construct a concrete summary flow function \text{$\csummflow(\epath,\flab,\slab)$} 
by composing the flow functions \flow of the statements appearing on a control flow path \epath from \flab to \slab. 
\begin{center}
\ONEDef
\end{center}
The summary flow function \text{$\csummflow(\epath,\flab,\slab)$}
is used to compute \cmemslab as follows:
\begin{align*}
\cmemslab & = \left[\,\csummflow(\flab,\slab)\right](\cmemflab)
\end{align*}

\subsubsection{Difference between \cmem and \amem: An Overview}
\label{sec:cmemto.amem}

The operations listed in Figure~\ref{fig:overview.gpg.pta} were defined
for abstract memory. An overview of how they differ for the two memories is as follows:
\begin{itemize}
\item[$\bullet$] \emph{Edge composition} \cop. This is same for both memories. 
\item[$\bullet$] \emph{Edge reduction}. For a concrete \gpg \csummflow, the reduction
      \text{$\newedge \circ\, \csummflow$} creates a single edge 
       whereas for an abstract \gpg \asummflow, the reduction \text{$\newedge \circ\, \asummflow$} could create multiple edges because \text{$\asummflow(\flab,\slab)$} needs to cover all paths from \flab to \slab unlike \text{$\csummflow(\epath,\flab,\slab)$} which covers only a single control flow path \epath from \flab to \slab.
\item[$\bullet$] \emph{\gpg application}. A concrete memory \cmem is a function and the update \text{$\llbracket e \rrbracket\cmem$}  
      reorients the out edge of the source of $e$. An abstract memory \amem is a relation and the source of $e$ may have multiple edges.
      This may require under-approximating deletion.
\item[$\bullet$] \emph{\gpg update}. Like \gpg application, \csummflow update is 
      exact whereas \asummflow update may have to be approximated.
\end{itemize}
\text{$\asummflow(\flab,\slab)$} should be an over-approximation of 
\text{$\csummflow(\epath,\flab,\slab)$} for every path \epath from \flab to \slab.
Hence, the inclusion of pointees of a pointer is 
over-approximated while their removal is
under-approximated; the latter requires
distinguishing between strong and weak updates.

\subsection{Computing Points-to \gpgs $\csummflow(\epath,\flab,\slab)$ for a Single Control Flow Path} 
\label{sec:hrg.concrete}

This section formalizes the concept of \gpg for points-to analysis over concrete
memory \cmem created by a program along a single execution path.  

In the base case, 
the program points \flab and \slab are consecutive and 
\text{$\csummflow(\epath,\flab,\slab)$} is $\flow(\tlab,\slab)$. 
When they are farther apart on \epath,
consider a program point \text{$\tlab \in \lsucc^+(\epath,\flab) \cap \lpred(\epath,\slab)$}.
We define $\csummflow(\epath,\flab,\slab)$ recursively by extending $\csummflow(\epath,\flab,\tlab)$ 
to incorporate the effect of $\flow(\tlab,\slab)$ for computing the concrete memory \cmem at program point \slab
from \cmem at \flab (Definition~\ref{def:csummflow.construction}).

\begin{figure}[t]
\begin{center}
\setlength{\codeLineLength}{20mm}
\begin{tabular}{@{}c|c@{}}
\begin{tabular}{@{}c@{}}
\begin{pspicture}(-12,6)(45,50)

\psset{arrowsize=1.5}
\newcommand{\hdist}{14}
\newcommand{\vdist}{12}

\putnode{a1}{origin}{5}{46}{\pscirclebox[framesep=1.7]{$a$}}
\putnode{e}{a1}{0}{-\vdist}{\pscirclebox[framesep=1.76]{$e$}}
\putnode{z1}{e}{\hdist}{0}{\pscirclebox[framesep=1.6]{$z$}}
\putnode{u}{z1}{0}{\vdist}{\pscirclebox[framesep=1.68]{$u$}}
\putnode{x1}{e}{\hdist}{-\vdist}{\pscirclebox[framesep=1.82]{$x$}}
\putnode{y1}{x1}{-\hdist}{-\vdist}{\pscirclebox[framesep=1.5]{$y$}}
\putnode{b1}{x1}{0}{-\vdist}{\pscirclebox[framesep=1.37]{$b$}}
\putnode{d}{y1}{0}{\vdist}{\pscirclebox[framesep=1.35]{$d$}}
%
%
\ncline{->}{y1}{d}
\nbput[labelsep=.5]{2,0}
\naput[labelsep=.5]{$e_5$}
\ncline{->}{a1}{e}
\naput[labelsep=.5]{1,0}
\nbput[labelsep=.5]{$e_1$}
\ncline{->}{z1}{u}
\nbput[labelsep=.5]{1,0}
\naput[labelsep=.5]{$e_3$}
\ncline{->}{x1}{b1}
\naput[labelsep=.5]{1,0}
\nbput[labelsep=.5]{$e_4$}
\ncline{->}{x1}{z1}
\nbput[labelsep=.5]{2,0}
\naput[labelsep=.5]{$e_2$}

\end{pspicture}
\end{tabular}
&
\begin{tabular}{@{}c@{}}
\begin{pspicture}(-10,6)(35,50)

\psset{arrowsize=1.5}
\newcommand{\hdist}{14}
\newcommand{\vdist}{12}

\putnode{a1}{origin}{5}{46}{\pscirclebox[framesep=1.7]{$a$}}
\putnode{e}{a1}{0}{-\vdist}{\pscirclebox[framesep=1.76]{$e$}}
\putnode{z1}{e}{\hdist}{0}{\pscirclebox[framesep=1.6]{$z$}}
\putnode{u}{z1}{0}{\vdist}{\pscirclebox[framesep=1.68]{$v$}}
\putnode{x1}{e}{\hdist}{-\vdist}{\pscirclebox[framesep=1.82]{$x$}}
\putnode{y1}{x1}{-\hdist}{-\vdist}{\pscirclebox[framesep=1.5]{$y$}}
\putnode{b1}{x1}{0}{-\vdist}{\pscirclebox[framesep=1.37]{$b$}}
\putnode{d}{y1}{0}{\vdist}{\pscirclebox[framesep=1.35]{$d$}}
%
%
\ncline{->}{a1}{e}
\naput[labelsep=.5]{1,0}
\nbput[labelsep=.5]{$e_1$}
\ncline{->}{z1}{u}
\nbput[labelsep=.5]{1,0}
\naput[labelsep=.5]{$e_3$}
\ncline{->}{x1}{b1}
\naput[labelsep=.5]{1,0}
\nbput[labelsep=.5]{$e_4$}
\ncline{->}{y1}{b1}
\naput[labelsep=.5]{1,0}
\nbput[labelsep=.5]{$e_2$}
\ncline[nodesep=-.7]{->}{b1}{d}
\nbput[labelsep=.1,npos=.6]{1,0}
\naput[labelsep=.1,npos=.6]{$e_5$}

\end{pspicture}
\end{tabular}
\\
(a) $\epath = $ 8-9-10-11-16-17
&
(b) $\epath = $ 8-9-13-14-16-17
		\rule[-.75em]{0em}{1.em}%
\end{tabular}
\caption{\csummflow for the two control flow paths in procedure $g$ of Figure~\protect\ref{fig:mot_eg}. The control flow paths are described in terms of line numbers.
		For \sindlev $mn$, regardless
		of the direction of the edge, $m$ is for the source while $n$
		is for the target. 
		The numbers in the subscripts of edge names (e.g., $e_i$) indicate the order  of their inclusion.
}
\label{fig:csummflow.mot.exmp}
\end{center}
\end{figure}

Extending $\csummflow(\epath,\flab,\tlab)$ (denoted \csummflow 
for simplicity)
to incorporate the effect of $\flow(\tlab,\slab)$ (denoted by the edge \newedge)
involves two steps:
\begin{itemize}
\bitem Reducing \newedge by composing it with edges in \csummflow.
This operation is
denoted by $\newedge \circ \csummflow$ (i.e. reduce \indlev of
     \newedge using points-to information in~\csummflow). This is explained in Definition~\ref{def:edge.reduction} which is
applicable to both \asummflow and \rule{0em}{1em}\csummflow uniformly.
\bitem Updating \csummflow with the reduced 
edge. This operation is 
denoted by $\csummflow\left[\newedge \circ \csummflow\,\right]$.
\end{itemize} 

The first step is same for both \csummflow and \asummflow. However,
the second step differs and is formulated in 
Definition~\ref{def:csummflow.construction} 
for \csummflow.
Unlike Definition~\ref{def:asummflow.construction}, the \gpg update for \csummflow is 
defined in terms of an update using a single edge.
Hence, we define \text{$\csummflow[X] = \csummflow[\rededge]$} where \text{$X = \{\rededge\}$}. This allows us to define \gpg update generically in terms of a set of edges $X$.
Given a reduced edge \rededge, the update
$\csummflow[\,\rededge\,]$, reorients the out edge of
the source whose \indlev matches that in \rededge; if no such edge exists
in \csummflow, \rededge is added to it. For this purpose, we view
\csummflow as a mapping \text{$\vars \times I \to \vars\times I$} 
and an edge \de{x}{i,j}{y} as a pair 
\text{$(x,i)\mapsto(y,j)$}
in \csummflow where
$I$ is the set of integers.
Then, the update \text{$\csummflow\left[\de{x}{i,k}{z}\right]$} changes the mapping of
\text{$(x,i)$} in \csummflow to \text{$(z,k)$}. 

\begin{center}
\fiveDef
\end{center}

\exmpbeg
Figure~\ref{fig:csummflow.mot.exmp} shows the summary flow function
along two paths in procedure $g$ of our motivating example in
Figure~\ref{fig:mot_eg}. The edges are numbered in the order of their
inclusion.
\exmpend

\subsection{The Semantics of the Application of \gpg to Concrete and Abstract Memory}
\label{sec:summflow.semantics}

We first define the semantics of \csummflow to \cmem and then extend it to the application of \asummflow 
to \amem.

\subsubsection{The Semantics of the Application of \csummflow to \cmem}
\label{sec:csummflow.semantics}

\mbox{}
The initial value of the memory at the start of a control flow path $\epath$ is 
\text{$\cmem_0 = \{ (x,?) \mid x \in \vars\}$}.
Since $\cmem_0$ is a total function, 
any \cmem 
computed by updating it is also a total function. Hence,
\cmem 
is defined for all variables at all
program points.
Let \text{$\cmem \{a\} = \{b\}$} implying that $a$ points-to $b$ in \cmem.
Suppose that, as a consequence of execution of a statement, $a$ ceases to point
to $b$ and instead points to $c$. The memory resulting from this change is denoted by 
\text{$\cmem\left[a \mapsto c\right]$}. 

Definition~\ref{def:csummflow.semantics} provides 
the semantics of the application of \text{$\csummflow(\epath,\flab,\slab)$} 
to \cmemflab to compute \cmemslab for the control flow path \epath from \flab to \slab. 

\begin{center}
\sixDef
\end{center}

The \emph{application} of a \gpg \emph{edge} \text{$e \equiv \de{x}{i,j}{y}$} to memory \cmem denoted \text{$\llbracket e \rrbracket \cmem$},
creates a points-to edge by discovering the locations reached from $x$ and $y$ through a series of indirections and updates \cmem by 
reorienting the existing edges. We define edge application as a two step process:
\begin{itemize} 
\bitem \emph{Edge evaluation} denoted \text{$\eval(e,\cmem)$} returns a points-to edge by discovering the locations reached indirectly from $x$ and $y$ where \text{$e \equiv \de{x}{i,j}{y}$}. This operation is similar to edge reduction (Section~\ref{sec:edge.reduction}) with minor differences such as, the edges used for reduction are from memory \cmem where each edge represents a classical points-to fact and not a generalized points-to fact. The reduced edge \rededge also represents a points-to fact.
\bitem \emph{Memory update} denoted \text{$\cmem\left[e\right]$} re-orients the existing edges. It is similar to the \gpg update \text{$\csummflow\left[e\right]$}.
\end{itemize} 
Suppose the evaluation \text{$\eval(\de{x}{i,j}{y}, \cmem)$} creates an edge 
\de{w}{1,0}{z} where \text{$w  = \cmem^{\,i-1} \{x\}$} and \text{$z = \cmem^j\{y\}$}, then the memory update \text{$\cmem\left[\de{x}{i,j}{y}\right]$} results in 
\text{$\cmem\left[w \mapsto z\right]$}. 
Although the two notations \text{$\cmem\left[\de{x}{i,j}{y}\right]$} and \text{$\cmem\left[w \mapsto z\right]$} look similar, the arrow 
$\rightarrow$ in the first indicates that it is a \gpg edge whereas the arrow 
$\mapsto$ in the second indicates that a mapping is being changed.
Effectively we change \cmemslab such that \text{$\cmemslab^{i}\{x\} = \cmemslab^{j}\{y\}$}.

\exmpbeg
For our motivating example, let \cmem before the call to $g$
be \text{$\{ (a,?), (b,?), (x,a), (y,?), (z,w) \}$}.
The resulting memory after applying \csummflow of Figure~\ref{fig:csummflow.mot.exmp}(a)
is \text{$\{ (a,w), (b,?), (x,b), (y,?), (z,u) \}$}. When we apply 
\csummflow of Figure~\ref{fig:csummflow.mot.exmp}(b) representing the other control flow path to the 
same \cmem before the call to $g$, the resulting \cmem
is \text{$\{ (a,e), (b,d), (x,b), (y,b), (z,v) \}$}.
\exmpend

\subsubsection{The Semantics of the Application of \asummflow to \amem}
\label{sec:asummflow.semantics}

Definition~\ref{def:asummflow.semantics} provides the 
semantics of \text{$\asummflow(\flab,\slab)$} 
by showing how \amemslab is computed from \amemflab.

\begin{center}
\eightDef
\end{center}

We assume that the pair $(x,?)$ is included in \amem for all variables at the start 
of the program.  \text{$\amem\{a\}$} represents the set of pointees of $a$.
The \emph{application} of a \gpg edge \text{$e \equiv \de{x}{i,j}{y}$} to memory \amem denoted
\text{$\llbracket e, \asummflow \rrbracket \amem$},
first evaluates the edge $e$ and then updates the memory \amem as follows:
\begin{itemize}
\bitem \emph{Edge evaluation} returns a set of edges that are included to compute memory \text{$\amemslab$}.  
	These are points-to edges obtained by discovering the locations reached from $x$ and $y$ through a series of indirections.
\bitem \emph{Memory update} 
      An update of \amem with $e$ is a strong update when  	
      $e$ defines a single pointer and its source is \must defined in \asummflow (i.e., it is defined along all paths from \flab to \slab). 
\end{itemize}
Unlike edge application to \cmem (Definition~\ref{def:csummflow.semantics}), edge application to \amem requires two arguments (Definition~\ref{def:asummflow.semantics}). The second argument \asummflow is required to identify that the source of the edge $e$ is \must defined which is not required for computing \cmem because \cmem considers only one control flow path \epath at a time.

The predicate \text{$\singledef{\de{x}{i,j}{y}}{\amem}$} asserts that
an edge \de{x}{i,j}{y} in \asummflow defines a single pointer. Contrast this with \text{$\Def(X)$} in
Definition~\ref{def:asummflow.construction} which collects the pointers being defined.
Observe that \singledef{\de{x}{i,j}{y}}{\amem} trivially holds for $i=1$.
%
We discover that the source of a \gpg edge is \must defined with the provision of edges \de{x}{1,1}{x'} and \de{x'}{\nat,0}{\summn} (Section~\ref{sec:may.must.xxp.edges}).
\singledef{\de{x}{i,j}{y}}{\amem} and \mustedge{\de{x}{i,j}{y}}{\asummflow} are combined
to define \text{$\memsup(e, \amem, \asummflow)$} which asserts that an edge $e$ in \asummflow can perform strong update of \amem.

When a strong update is performed using \text{$\memsup(e, \amem\!, \asummflow)$}, we delete all edges in $\amem$ that match
$e$ which is a reduced form of edge \newedge. These edges are discovered by \text{$\match(e,\amem)$}.
The edges to be removed (\memkill) are characterized much along the lines of 
\compkill (Section~\ref{sec:interprocedural.extensions}) with a couple of minor differences: 
\begin{itemize}
\bitem The edge $e_1$ now is a result of an evaluation of $e$ in \amem rather than reduction 
	\text{$e \circ\;\asummflow_f$}, and
\bitem matching edges $e_2$ are from \amem instead of from $\asummflow_f$.
\end{itemize}

\subsection{Soundness of \gpgs}

We first show the soundness of \text{$\csummflow(\epath, \flab,\slab)$} for a path \epath from \flab to \slab in terms of memory \cmemslab computed from memory \cmemflab. Then
we show the soundness of \text{$\asummflow(\flab,\slab)$} by arguing that it is an
over-approximation of \text{$\csummflow(\epath, \flab,\slab)$} 
for every path \epath from \flab to \slab. 

\begin{center}
\soundnessDef
\end{center}

Definition~\ref{def:sound.sff} articulates the formal proof obligations for showing  soundness. We assume that we perform
\tscomp and \sscomp compositions only. 
Further, the \conclusiveness of edge compositions is checked independently prohibiting \inconclusive edge compositions.

\subsubsection{Soundness of \gpgs for Concrete Memory}

We argue below that \text{$\csummflow(\epath,\flab,\slab)$} is sound because the effect of the reduced 
edge is identical to the effect of the original edge on \cmemflab; hence
the evaluation of an edge \newedge in the resulting memory computed after application of \prevedge to \cmemflab, is
same as the evaluation of the reduced edge \cop in \cmemflab.

\begin{lemma}
\label{lemma:edge.eval.onestep}
Consider a memory $\cmem'$ resulting from application of a \gpg edge \prevedge to \cmem. The evaluation of an edge \newedge
in $\cmem'$ is identical to the evaluation of the reduced edge \rededge in \cmem where \text{$\rededge = \cop$}.
\begin{align*}
\eval(\newedge, \left \llbracket \prevedge \right \rrbracket \emph{\cmemflab}) = \eval(\cop, \emph{\cmemflab})
	\tag{\thelemma.a}\label{lemma2.obligation}
\end{align*}
\end{lemma}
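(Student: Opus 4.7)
The plan is to unfold both sides of \eqref{lemma2.obligation} using Definition~\ref{def:csummflow.semantics} and perform a case analysis on the type of composition (one of \tscomp, \sscomp, \stcomp, \ttcomp) since \cop is defined piecewise according to which endpoints coincide at the pivot \pivot. In each case the right hand side $\eval(\cop,\cmemflab)$ is, by Definition~\ref{def:csummflow.semantics}, a points-to edge \de{w}{1,0}{z} obtained by traversing \cmemflab from the endpoints of \rededge; the left hand side is the analogous traversal performed in \text{$\cmem'=\llbracket\prevedge\rrbracket\cmemflab = \cmemflab\left[\eval(\prevedge,\cmemflab)\right]$}. The task reduces to showing that the two traversals reach the same pair of locations.

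First I would treat the \tscomp case, which is the most instructive. Here \text{$\tgtn=\srcp=\pivot$}, and the \usefulness constraint \text{$\labtgtp\le\labsrcp\le\labtgtn$} (Inequality~\ref{cons:ts}) gives \text{$(\labsrcr,\labtgtr)=(\labsrcn,\labtgtp+\labtgtn-\labsrcp)$}. On the right, $\eval(\rededge,\cmemflab)$ traces \text{$\labsrcn\!-\!1$} steps from \srcn in \cmemflab and \text{$\labtgtp+\labtgtn-\labsrcp$} steps from \tgtp in \cmemflab. On the left, evaluating \newedge in $\cmem'$ traces \text{$\labsrcn\!-\!1$} steps from \srcn and \labtgtn steps from \tgtn; the \srcn trace is unaffected by the application of \prevedge provided \prevedge's update does not lie on that path (which is precisely what \conclusiveness guarantees), while the \tgtn trace of length \labtgtn passes through the pointee at level \labsrcp of \pivot, which is exactly the location that \prevedge redirects to the \labtgtp-th pointee of \tgtp. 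After this redirection, the remaining \text{$\labtgtn-\labsrcp$} steps from \pivot continue from the \labtgtp-th pointee of \tgtp, giving a total of \text{$\labtgtp+\labtgtn-\labsrcp$} traversal steps from \tgtp, which matches the right-hand side.

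I would then dispatch \sscomp symmetrically: the pivot is now the common source \srcn=\srcp, and it is the \srcn trace in \newedge that must cross the pointee level \labsrcp being redefined by \prevedge. The arithmetic balancing $(\labsrcr,\labtgtr)=(\labtgtp,\labtgtn+\labsrcp-\labsrcn)$ follows from a mirror computation, where a step of \newedge now maps into the subsequent trace from \tgtp in the original memory. The remaining two cases, \stcomp and \ttcomp, require the same style of book-keeping; the constraints derived in Section~\ref{subsec:derivation.constraint} for each composition type ensure that the balanced \indlev{\,}s match the number of indirections that must be followed after the redirection.

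The main obstacle will be justifying that the traversal of \newedge from its non-pivot endpoint in $\cmem'$ agrees with the traversal in \cmemflab, because \prevedge might, in principle, have altered a link on that path. This is precisely the role of the \conclusiveness requirement (Section~\ref{sec:conclusive.ec}), which excludes compositions in which the pointee of the pivot supplied by \prevedge could be invalidated by intervening statements. Because the lemma is stated after the \inconclusive compositions have been filtered out (as stated just before Definition~\ref{def:sound.sff}), and because the argument concerns a single control flow path, \cmem is a (total) function and the redirection \text{$\cmemflab[w\mapsto z]$} introduced by \prevedge is confined to the single location $w$; all other lookups agree in \cmem and $\cmem'$. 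Combined with the four case analyses above, this yields the equality \eqref{lemma2.obligation}.
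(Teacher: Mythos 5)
Your proposal is correct and arrives at the same arithmetic as the paper --- the reduced \indlev \text{$(\labsrcn,\,\labtgtp+\labtgtn-\labsrcp)$} for \tscomp and its mirror for \sscomp --- but the central step is argued differently. The paper's proof is equational: it encodes each edge evaluation as a constraint of the form \text{$\cmemflab^{i}\{x\}=\cmemflab^{j}\{y\}$}, records the constraint that $\llbracket\prevedge\rrbracket$ imposes on $\cmem'$, and eliminates $\cmem'$ by substitution to recover exactly the constraint imposed by \cop on \cmemflab. Yours is operational: you trace the pointer chain of \newedge through $\cmem'$ and observe that the single redirection installed by \prevedge is encountered at precisely level \labsrcp of the pivot, after which \text{$\labtgtn-\labsrcp$} further steps continue from the \text{$\labtgtp$}-th pointee of \tgtp. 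Both routes establish the same equality; yours has the merit of making explicit the non-interference assumption (that the one redirected location does not lie on the other traversals), which the paper's substitution step leaves implicit, and the minor inefficiency of treating four cases where only two are needed --- the paper states just before Definition~\ref{def:sound.sff} that only \tscomp and \sscomp compositions are performed, so \stcomp and \ttcomp do not arise here. One small correction: your appeal to \conclusiveness to protect the \srcn trace is misplaced, since in this lemma $\cmem'$ is obtained from \cmemflab by applying \prevedge alone with no intervening statements; what actually does the work is the observation you make immediately afterwards, namely that \cmem is a total function and the update \text{$\cmemflab[w\mapsto z]$} changes the out-edge of a single location.
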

\begin{proof}
The lemma trivially follows when \newedge and \prevedge do not compose because they
have independent effects on \cmemflab provided the order of execution is followed.

Consider \tscomp composition for \cop. Let edge \text{$\newedge \equiv
\de{x}{i,j}{y}$} and edge \text{\prevedge $\equiv \de{y}{k,l}{z}$}. From
Section~\ref{sec:relevant.useful.ec},
\text{\cop = \de{x}{i,(l+j-k)}{z}} for a
\useful edge composition.
\begin{itemize}
\bitem For the RHS of~(\ref{lemma2.obligation}), the evaluation of \cop in
\cmemflab results in 
      \text{$\left\llbracket \, \cop\,  \right\rrbracket\cmemflab = \de{s_1}{1,0}{t_1} $}
      where \text{$s_1 = \cmemflab^{i-1} \{x\}$} and \text{$t_1 = \cmemflab^{l+j-k}\{z\}$}.
      Thus edge \de{s_1}{1,0}{t_1} imposes the constraint 
	\begin{align*}
	\cmemflab^{i} \{x\} = \cmemflab^{l+j-k} \{z\} \tag{\thelemma.b} \label{proof2.constraint.a}
	\end{align*}
\bitem For the LHS of~(\ref{lemma2.obligation}), the evaluation of edge \prevedge updates \cmemflab as follows
     \text{$\cmemflab\left[\prevedge\right] = \cmemflab[s_2\mapsto t_2]$}
     where the pointer \text{$s_2 = \cmemflab^{k-1} \{y\}$} and the pointee \text{$ t_2 = \cmemflab^{l}\{z\}$}.
      $\cmem'$ 
	 is defined in terms of \cmemflab by the following constraint
       resulting from the inclusion of the edge \de{s_2}{1,0}{t_2}. 
	\begin{align*}
      \cmemflab^{k} \{y\} = \cmemflab^{l} \{z\} \tag{\thelemma.c} \label{proof2.constraint.b}
	\end{align*}
     The evaluation of \newedge in the updated memory \text{$\cmem'  = \left\llbracket \prevedge \right\rrbracket \cmemflab$} results in
	\text{$\eval(\newedge, \cmem') = 
            \de{s_3}{1,0}{t_3}$}
       where \text{$s_3 = \big(\cmem'\big)^{i-1} \{x\}$} and \text{$t_3 =
	\big(\cmem'\big)^{j}\{y\}$}.
      Edge \de{s_3}{1,0}{t_3} imposes the following constraint  on \text{$\cmem'$}.
	\begin{align*}
         \big(\cmem'\big)^{i} \{x\} = \big(\cmem'\big)^{j} \{y\} 
	\end{align*}
	 In order to map this constraint to \cmemflab, we need to combine it
         with constraint (\ref{proof2.constraint.b}), replace \text{$\cmem'$} by $\cmemflab$ and
         solve them together.
      \begin{align*}
      & \cmemflab^{i} \{x\} = \cmemflab^{j} \{y\}\, \wedge\, \cmemflab^{k} \{y\} = \cmemflab^{l} \{z\} \\
      \Rightarrow\; & \cmemflab^{i} \{x\} = \cmemflab^{j} \{y\}\, \wedge\, \cmemflab^{k+(j-k)}
\{y\} = \cmemflab^{l+(j-k)} \{z\} \\
      \Rightarrow\; & \cmemflab^{i} \{x\} = \cmemflab^{l+j-k} \{z\}
	\tag{\thelemma.d}\label{proof2.constraint.c}
      \end{align*}
\end{itemize}
Constraint~(\ref{proof2.constraint.c}) is identical to
constraint~(\ref{proof2.constraint.a}).  Since the effect on the memory
is identical, the two evaluations are identical.

The equivalence of evaluations for \sscomp composition
between \newedge and \prevedge can be proved in a similar manner.
\end{proof}

\begin{lemma}
\label{lemma:edge.eval.multistep}
Consider a memory $\cmem'$ resulting from application of a \gpg \csummflow to \cmem. The evaluation of an edge \newedge
in $\cmem'$ is identical to the evaluation of the reduced edges \text{$\newedge \circ \csummflow$} in \cmem.
\[
\eval(\newedge, 
\left\llbracket \, \csummflow\, \right\rrbracket \emph{\cmemflab}) =
\eval(\newedge \circ \csummflow, \emph{\cmemflab})
\]
\end{lemma}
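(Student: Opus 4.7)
The plan is to prove this by induction on $k$, the number of edges in $\csummflow$, viewed as an ordered sequence $\{e_1, e_2, \ldots, e_k\}$ following the control flow order of the path $\epath$. Since we are in the concrete setting (a single path), each $\llbracket e_i \rrbracket$ is a single-edge update, so the semantic application unfolds as
\[
\llbracket \csummflow \rrbracket \cmemflab \;=\; \llbracket e_k \rrbracket \big( \llbracket e_{k-1} \rrbracket \cdots \llbracket e_1 \rrbracket \cmemflab \big).
\]
The base case $k=0$ is immediate ($\csummflow = \csummflow_{id}$ so $\llbracket \csummflow \rrbracket \cmemflab = \cmemflab$ and $\newedge \circ \csummflow = \newedge$), and the base case $k=1$ is exactly Lemma~\ref{lemma:edge.eval.onestep}.

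For the inductive step, let $\csummflow' = \{e_1, \ldots, e_{k-1}\}$ and write $\cmem' = \llbracket \csummflow' \rrbracket \cmemflab$. First I would apply Lemma~\ref{lemma:edge.eval.onestep} to peel off the last edge:
\[
\eval(\newedge, \llbracket e_k \rrbracket \cmem') \;=\; \eval(\newedge \circ e_k, \cmem').
\]
Let $r = \newedge \circ e_k$ denote the (single) reduced edge when $\newedge$ and $e_k$ share a pivot in a \useful, \relevant, and \conclusive way, and otherwise $r = \newedge$ (using the second case of \slc, since in the concrete setting exactly one of \sscomposition{\newedge}{e_k} and \tscomposition{\newedge}{e_k} can be non-empty at a time). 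By the induction hypothesis applied to $r$ and $\csummflow'$,
\[
\eval(r, \cmem') \;=\; \eval(r \circ \csummflow', \cmemflab).
\]
Chaining these gives $\eval(\newedge, \llbracket \csummflow \rrbracket \cmemflab) = \eval((\newedge \circ e_k) \circ \csummflow', \cmemflab)$, so the remaining task is to show that this equals $\eval(\newedge \circ \csummflow, \cmemflab)$.

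The last step is a structural argument about the composition operator. From the fixed-point formulation of $\mlc$ (Definition~\ref{def:edge.reduction}) and the associativity of edge composition (Lemma~\ref{lemm:edge.associativity.of.ecomp}), processing $\csummflow$ by first composing with $e_k$ and then iterating over $\csummflow'$ reaches the same fixed point as $\mlc(\{\newedge\}, \csummflow)$. In the concrete setting this is cleaner than in the abstract setting because each reduction produces exactly one edge (no branching through $\bowtie$), and any $e_i$ with which $\newedge$ does not compose simply contributes the identity case of \slc, leaving the intermediate reduced edge unchanged. Hence $(\newedge \circ e_k) \circ \csummflow' = \newedge \circ \csummflow$.

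The main obstacle I expect is the final structural identification: the operator $\circ$ on the left (edge--\gpg reduction) is defined via a fixed point rather than a simple right-fold, so one must justify that the ordered processing induced by the semantic unfolding agrees with the $\mlc$ fixed point. I would handle this by observing that in the concrete path setting the edges of $\csummflow$ are totally ordered and each update to $\cmem$ alters exactly one out-edge, so at each stage at most one $e_i$ is a candidate for composition with the current reduced edge; consequently $\mlc$ terminates in at most $k$ steps and traces the same reduction sequence as the semantic unfolding. With this observation, the equality $(\newedge \circ e_k) \circ \csummflow' = \newedge \circ \csummflow$ follows, completing the induction.
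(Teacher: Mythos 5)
There is a genuine gap in your final structural step, and it cannot be repaired by a better bookkeeping argument: the identity $(\newedge \circ e_k) \circ \{e_1,\dots,e_{k-1}\} = \newedge \circ \csummflow$ on which the induction rests is false. Take $\csummflow = \{e_1, e_2\}$ with $e_1 \equiv \de{y}{1,0}{a}$ and $e_2 \equiv \de{a}{1,0}{b}$ (the path $y=\&a;\ a=\&b$), and $\newedge \equiv \de{x}{1,2}{y}$ (for $x=*y$). Here $\newedge \circ \csummflow = \de{x}{1,0}{b}$, obtained by two successive \tscomp compositions inside \mlc: $\newedge$ first composes with $e_1$ to give \de{x}{1,1}{a}, and only this reduced edge shares a pivot with $e_2$. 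Your decomposition composes with $e_k = e_2$ \emph{first}; since $\newedge$ and $e_2$ share no node you get $r=\newedge$, and the induction hypothesis then yields $r \circ \{e_1\} = \de{x}{1,1}{a}$, whose evaluation in \cmemflab is \de{x}{1,0}{?} rather than \de{x}{1,0}{b}. So the proposed chain of equalities derives a false conclusion already on a two-edge instance. The root cause is that $\newedge \circ \csummflow$ is a fixed point that may need the \emph{later} edges of \csummflow only after $\newedge$ has been reduced by the \emph{earlier} ones; it is not a right fold (nor a left fold) over the edge sequence, so peeling any single edge off the reduction before the others reverses or truncates the order in which \mlc actually consumes them. (Your side claim that in the concrete setting only one of \sscomposition{\newedge}{\csummflow} and \tscomposition{\newedge}{\csummflow} can be non-empty is also incorrect: a store edge such as \de{x}{2,1}{y} needs both, joined by $\bowtie$; concreteness only guarantees that each contains a single edge.)

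The paper takes a different route precisely to avoid this: it inducts on the number of statements $m$ along the path and exploits the construction recurrence $\csummflow_{m}=\csummflow_{m-1}[e_{m}\circ\csummflow_{m-1}]$, rewriting $\eval(\newedge\circ\csummflow_{m+1},\cmemflab)$ on the reduction side via this recurrence and the hypothesis, rather than unfolding $\llbracket\csummflow\rrbracket$ edge by edge on the semantic side. If you want to salvage your style of argument, the induction has to mirror how \csummflow is \emph{built} (each new statement edge is fully reduced against the accumulated \gpg before being installed), not how the finished \gpg happens to be enumerated; any decomposition of the finished edge set will run into the counterexample above.
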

\begin{proof}
Let $\csummflow_m$ denote $\csummflow(\epath, \flab, \slab)$, where the
subpath of $\epath$ from \flab to \slab contains $m$ pointer assignment
statements. We prove the lemma by induction on $m$. From
Definition~\ref{def:csummflow.construction},
\begin{align*}
\csummflow_m & = \csummflow_{m-1}\left[e_m \circ \csummflow_{m-1}\right]
	\tag{\thelemma.a}\label{proof3.delta.def.1}
	\\
	& = \csummflow_{m-1}\left[\,e\,\right] & \text{ where } e = e_m \circ \csummflow_{m-1}
	\tag{\thelemma.b}\label{proof3.delta.def.2}
\end{align*}
For basis $m = 1$, $\csummflow_1$ contains a single edge and $\csummflow_0 = \emptyset$.
Hence the basis holds from Lemma~\ref{lemma:edge.eval.onestep}. 
For the inductive hypothesis, assume
\begin{align*}
\eval(\newedge, 
\left\llbracket \, \csummflow_{m}\, \right\rrbracket \emph{\cmemflab}) =
\eval(\newedge \circ \csummflow_{m}, \emph{\cmemflab})
	\tag{\thelemma.c}\label{proof3.hypothesis}
\end{align*}
To prove,
\begin{align*}
\eval(\newedge, 
\left\llbracket \, \csummflow_{m+1}\, \right\rrbracket \emph{\cmemflab}) =
\eval(\newedge \circ \csummflow_{m+1}, \emph{\cmemflab})
\end{align*}
For $m+1$, the RHS of (\ref{proof3.hypothesis}) becomes
\begin{align*}
& \eval(\newedge \circ \csummflow_{m+1}, \emph{\cmemflab})
	\\
\;\;\;
\;\;\;
\Rightarrow \;
	& \eval\big(\newedge \circ
\left(\, \csummflow_{m}[e_{m+1}\circ\csummflow_m] \right), \cmemflab\big)
	& \text{(using  (\ref{proof3.delta.def.1}) for $\csummflow_{m+1}$)}
	\nonumber
	\\
\Rightarrow \;
	& \eval\big(\newedge \circ \left(\, \csummflow_{m}[\,e\,]\right), \cmemflab\big)
	& \text{(let $e_{m+1}\circ\csummflow_m = e$)}
	\tag{\thelemma.d}\label{proof3.step2}
	\\
\Rightarrow \;
	& \eval\big(\newedge, \left\llbracket\, \csummflow_m[e] \right \rrbracket \cmemflab\big)
	& \text{(from (\ref{proof3.step2}) and (\ref{proof3.hypothesis}))}
	\tag{\thelemma.e}\label{proof3.step3}
	\\
\Rightarrow \;
	& \eval(\newedge, \left\llbracket \, \csummflow_{m+1}\, \right\rrbracket \emph{\cmemflab}) 
	& \text{(from (\ref{proof3.step3}) and (\ref{proof3.delta.def.2}))}
\end{align*}
Hence the lemma.
\end{proof}

\begin{theorem}
\label{thrm:summflow.equiv}
(Soundness of \csummflow). Let a control flow path $\epath$ from \flab to \slab contain $k$ statements. Then,
\text{$
\emph{\cmemslab} =
\left\llbracket\, \csummflow(\epath,\flab,\slab)\right \rrbracket \emph{\cmemflab}
$
}
\end{theorem}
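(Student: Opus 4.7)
The plan is to prove this by induction on $k$, the number of statements along the path $\epath$ from $\flab$ to $\slab$. Let $\flow_1, \ldots, \flow_k$ be the statement-level flow functions along $\epath$ with corresponding edges $e_1, \ldots, e_k$, and let $\csummflow_i$ denote the summary flow function constructed for the prefix of $\epath$ consisting of the first $i$ statements, i.e., $\csummflow_i = \csummflow(\epath, \flab, q_i)$ where $q_i$ is the $i$-th program point. By the operational semantics of statements, the memory after $i$ steps is $\cmem_i = \flow_i(\cmem_{i-1}) = \llbracket e_i \rrbracket \cmem_{i-1}$, with $\cmem_0 = \cmemflab$, and in particular $\cmemslab = \cmem_k$. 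The obligation is to show $\cmem_k = \llbracket \csummflow_k \rrbracket \cmemflab$.

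For the base case $k = 1$, the path contains a single statement and by Definition~\ref{def:csummflow.construction} we have $\csummflow_1 = \flow(\flab,\slab) = \{e_1\}$. The semantics (Definition~\ref{def:csummflow.semantics}) gives $\llbracket \csummflow_1 \rrbracket \cmemflab = \llbracket e_1 \rrbracket \cmemflab = \cmemflab[\eval(e_1, \cmemflab)]$, which matches $\cmem_1$ by the operational definition of $\flow_1$.

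For the inductive step, assume $\cmem_{k-1} = \llbracket \csummflow_{k-1} \rrbracket \cmemflab$. Using the operational semantics, $\cmem_k = \llbracket e_k \rrbracket \cmem_{k-1} = \cmem_{k-1}[\eval(e_k, \cmem_{k-1})]$. Substituting the inductive hypothesis and applying Lemma~\ref{lemma:edge.eval.multistep} to the evaluation argument, we obtain
\[
\cmem_k \;=\; (\llbracket \csummflow_{k-1} \rrbracket \cmemflab)\bigl[\eval(e_k \circ \csummflow_{k-1}, \cmemflab)\bigr].
\]
On the other hand, by Definition~\ref{def:csummflow.construction} we have $\csummflow_k = \csummflow_{k-1}[\rededge]$ where $\rededge = e_k \circ \csummflow_{k-1}$. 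So the theorem reduces to the identity
\[
\llbracket \csummflow_{k-1}[\rededge] \rrbracket \cmemflab \;=\; (\llbracket \csummflow_{k-1} \rrbracket \cmemflab)[\eval(\rededge, \cmemflab)].
\]

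The main obstacle is establishing this last identity, which asserts that the update operator $[\cdot]$ on $\csummflow$ is compatible with the update operator $[\cdot]$ on memory via the semantic bracket $\llbracket\cdot\rrbracket$. To handle it, I would unfold the update on the left using its definition: $\csummflow_{k-1}[\rededge]$ changes the mapping $(x,i)$ in $\csummflow_{k-1}$ to $(y,j)$, where $\rededge \equiv \de{x}{i,j}{y}$. Under the sequential interpretation of $\llbracket \csummflow \rrbracket$ as $\llbracket e_{k-1}\rrbracket \cdots \llbracket e_1\rrbracket \cmemflab$ (Definition~\ref{def:csummflow.semantics}), the sole effect of replacing the mapping of $(x,i)$ with $(y,j)$ is that the final memory, restricted to the $(i{-}1)$-th pointee of $x$, points to the $j$-th pointee of $y$ as computed in $\cmemflab$. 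A straightforward case analysis on whether $\csummflow_{k-1}$ already contained an edge with source $(x,i)$, combined with the fact that $\cmem$ is a total function (so the update $\cmem[w \mapsto z]$ is a genuine reorientation), then shows both sides reorient exactly the same cell to the same target, closing the induction.
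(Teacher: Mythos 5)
Your proof follows essentially the same route as the paper's: induction on the number of statements along \epath, with Lemma~\ref{lemma:edge.eval.multistep} supplying the key identity \text{$\eval(\newedge, \llbracket\, \csummflow\, \rrbracket \cmemflab) = \eval(\newedge \circ \csummflow, \cmemflab)$} at each inductive step. You are in fact more explicit than the paper's two-line argument in isolating the remaining obligation --- that the \gpg update \text{$\csummflow[\rededge]$} commutes with semantic application to \cmemflab --- a step the paper leaves implicit.
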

\begin{proof}
From Lemma~\ref{lemma:edge.eval.multistep}, the effect of the reduced
form $e \circ \csummflow$ of an edge $e$ on memory \cmemflab is identical to
the effect of $e$ on the resulting memory obtained after \gpg application of \csummflow to \cmemflab. This holds for
every edge in \csummflow and the theorem follows from induction on the
number of statements covered by \csummflow.
\end{proof}

\subsubsection{Soundness of \gpgs for Abstract Memory}

We argue below that \text{$\asummflow(\flab,\slab)$} is sound because 
it under-approximates the removal of \gpg edges
and over-approximates the inclusion of \gpg edges compared to 
\text{$\csummflow(\epath,\flab,\slab)$} for any $\epath$ from \flab to \slab.

In order to relate \cmemflab and \amemflab (i.e., concrete and abstract memory), we rewrite the update operation for concrete
memory (Definition~\ref{def:csummflow.semantics})  which
reorients the edges without explicitly defining the edges
being removed.  We explicate the edges being removed by rewriting the equation as:
\begin{align}
\left\llbracket \newedge \right\rrbracket \cmemflab & =  \left(\cmemflab - \newkill(\epath, \newedge, \cmemflab)\right) \cup
		\eval(\newedge, \cmemflab) 
		\\
\newkill(\epath, \newedge, \cmemflab) & = \left\{ \, e_1 \; \middle|\; e_1 \in \match(e,\cmemflab), e \in 
		\eval(\newedge, \cmemflab) \right\}
	\label{eq:cmem.newkill}
\end{align}
Let $\paths(\flab,\slab)$ denote the set of all control flow paths from \flab to \slab.

\begin{lemma}
\label{lemm:kill.comparision}
Abstract summary flow function under-approximates the removal of information.
\end{lemma}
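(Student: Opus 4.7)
The plan is to unfold the abstract kill into its defining conditions and show each element belongs to every concrete kill. I would begin by taking an arbitrary $e_2 \in \memkill(\newedge, \amemflab, \asummflow(\flab,\slab))$. By Definition~\ref{def:asummflow.semantics}, this gives some $e_1 \in \eval(\newedge, \amemflab)$ with $e_2 \in \match(e_1, \amemflab)$, along with $\memsup(\newedge, \amemflab, \asummflow)$, which unpacks into $\singledef{\newedge}{\amemflab}$ and $\mustedge{\newedge}{\asummflow(\flab,\slab)}$. I then fix an arbitrary $\epath \in \paths(\flab,\slab)$ and aim to exhibit an $e'_1 \in \eval(\newedge, \cmemflab)$ with $e_2 \in \match(e'_1, \cmemflab)$.

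The second step is to transfer the two guarding conditions from the abstract to the concrete setting. Using the (assumed) invariant that $\amemflab \supseteq \cmemflab$ for every $\epath$, the $\singledef$ condition---that $\amemflab^{\,i-1}\{x\}$ is a singleton $\{w\}$ with $w \neq ?$ for $\newedge \equiv \de{x}{i,j}{y}$---forces $\cmemflab^{\,i-1}\{x\} = \{w\}$ as well, since a subset of a singleton of a defined location is either that very singleton or the empty set, and the $\mustedge$ condition rules out the latter by guaranteeing that no copy or aggregate edge $\de{x}{1,1}{x'}$ or $\de{x'}{i,0}{\summn}$ remains live in $\asummflow$, i.e., $x$'s $(i\!-\!1)^{\text{th}}$ pointee has been rewritten along every path. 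Consequently the unique source pointer discovered by $\eval(\newedge, \cmemflab)$ agrees with that discovered by $\eval(\newedge, \amemflab)$, so $e'_1$ and $e_1$ share $\nsrc$ and $\labsrce{\cdot}$.

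With source agreement established, the third step is straightforward: since $e_2 \in \match(e_1, \amemflab)$ means $\nsrc_{e_2} = \nsrc_{e_1} = \nsrc_{e'_1}$ and $\labsrce{e_2} = \labsrce{e_1} = \labsrce{e'_1}$, and since $e_2$ (being the unique edge in $\amemflab$ with this source and indirection, by $\singledef$) must also lie in $\cmemflab$, I conclude $e_2 \in \match(e'_1, \cmemflab)$ and therefore $e_2 \in \newkill(\epath, \newedge, \cmemflab)$. Since $\epath$ was arbitrary, $e_2 \in \bigcap_\epath \newkill(\epath, \newedge, \cmemflab)$, establishing the lemma.

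The main obstacle will be the second step: rigorously justifying that $\singledef{\newedge}{\amemflab}$ combined with $\mustedge{\newedge}{\asummflow}$ truly forces $\cmemflab^{\,i-1}\{x\}$ to coincide with $\amemflab^{\,i-1}\{x\}$ along every path. This hinges on a careful use of the boundary edges (copy edges $\de{x}{1,1}{x'}$ and aggregate edges $\de{x'}{\nat,0}{\summn}$ introduced in Section~\ref{sec:may.must.xxp.edges}), whose removal from $\asummflow$ is precisely what certifies ``must-defined along every path.'' I expect this to require a small auxiliary lemma relating the presence/absence of such boundary edges in $\asummflow(\flab,\slab)$ to the structure of $\csummflow(\epath,\flab,\slab)$ for every $\epath$, essentially a structural induction on path length mirroring the one used in Lemma~\ref{lemma:edge.eval.multistep}.
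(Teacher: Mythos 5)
Your overall strategy---unpack $\memkill$, fix an arbitrary $\epath$, and chase each element into $\newkill(\epath,\newedge,\cmemflab)$---is a faithful (and far more detailed) elaboration of the paper's own two-sentence argument, which merely observes that $\memkill$ imposes the extra guard \memsup and that this guard forces the source to be uniquely and must-defined along every path. Your first two steps are sound: given the invariant $\amemflab \supseteq \cmemflab$ and the totality of concrete memory, $\singledef{\newedge}{\amemflab}$ does pin down $\cmemflab^{\,i-1}\{x\} = \{w\}$, so the evaluated edges in the abstract and concrete memories share their source and indirection level.

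The gap is in your third step. Membership $e_2 \in \match(e_1', \cmemflab)$ requires $e_2 \in \cmemflab$, and your justification---that $e_2$ is ``the unique edge in $\amemflab$ with this source and indirection, by $\singledef$'' and hence lies in $\cmemflab$---fails on both counts. First, $\singledef{\de{x}{i,j}{y}}{\amemflab}$ asserts that the \emph{set of sources} $\amemflab^{\,i-1}\{x\}$ is the singleton $\{w\}$; it says nothing about the out-degree of $w$ in $\amemflab$, and $\match(e_1,\amemflab)$ is precisely the (possibly multi-element) set of out-edges of $w$. Second, even a unique out-edge of $w$ in $\amemflab$ need not belong to $\cmemflab$: the over-approximation runs the wrong way, so $\amemflab$ may contain an edge $\de{w}{1,0}{z}$ contributed only by some \emph{other} path $\epath'$ (or by abstraction slack), which then sits in $\memkill$ but not in $\newkill(\epath,\newedge,\cmemflab)$, defeating the literal set inclusion. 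What your argument does establish---and what the soundness theorem actually consumes, since it only needs that edges surviving in $\cmemflab$ also survive in $\amemflab$---is the restricted inclusion $\memkill(\newedge,\amemflab,\asummflow(\flab,\slab)) \cap \cmemflab \subseteq \newkill(\epath,\newedge,\cmemflab)$ for every $\epath$. Either prove that weaker statement or supply the missing argument for $e_2 \in \cmemflab$; as written the step would fail. Your closing remark that the argument also rests on an unproven auxiliary lemma connecting the boundary (copy and aggregate) edges of $\asummflow$ to every $\csummflow(\epath,\flab,\slab)$ is well taken---the paper asserts that connection without proof as well.
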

\begin{align*}
\memkill\left(\newedge, \amemflab, \asummflow(\flab,\slab)\right) \subseteq 
		\bigcap\limits_{\epath \in \paths(\flab,\slab)} 
			\newkill(\epath, \newedge, \cmemflab) 
	\tag{\thelemma.a}\label{lemma.kill.proof.obligation}
\end{align*}
\begin{proof}
Observe that \memkill (Definition~\ref{def:asummflow.semantics}) is more conservative than \newkill 
(Equation~\ref{eq:cmem.newkill})
because it additionally
requires that \newedge should cause a strong update. 
From Definition~\ref{def:asummflow.semantics}, for causing a strong update, \newedge must be defined
along every path and the removable edges must define the same source along
every path. Hence~\ref{lemma.kill.proof.obligation} follows.
\end{proof}
\begin{lemma}
Abstract summary flow function over-approximates the inclusion of information.
\label{lemm:gen.comparision}
\end{lemma}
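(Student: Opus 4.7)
The plan is to establish the superset relation by a two-stage argument: first showing that any concrete \csummflow along a single path is contained edge-wise in the abstract \asummflow, and then showing that edge reduction is monotone with respect to its \gpg argument. Combining these, any edge produced by \newedge composed with some \csummflow(\epath,\flab,\slab) will be producible (or further reducible) when composing \newedge with \asummflow(\flab,\slab).

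For the first stage, I would prove by induction on the number of statements along $\epath$ from \flab to \slab that
\[
\csummflow(\epath,\flab,\slab) \;\subseteq\; \asummflow(\flab,\slab).
\]
The base case, where \flab and \slab are adjacent, is immediate because both reduce to $\flow(\flab,\slab)$. For the inductive step, the concrete update of \csummflow with the new edge \newedge deletes edges in $\newkill(\epath,\newedge,\cmem)$ (i.e.\ it reorients the mapping at the defined location), whereas the abstract update deletes edges in $\memkill(\newedge,\amem,\asummflow)$. Lemma~\ref{lemm:kill.comparision} gives
$\memkill \subseteq \bigcap_{\epath'} \newkill(\epath',\cdot) \subseteq \newkill(\epath,\cdot)$, so the abstract removal never exceeds the concrete removal along the chosen \epath. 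Combined with the inductive hypothesis and the observation that the inclusion step of \asummflow (Definition~\ref{def:asummflow.construction}) unions over \emph{every} predecessor \tlab, every edge surviving in the concrete summary along \epath survives in the abstract summary as well.

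For the second stage, I would argue monotonicity of edge reduction: if $\summflow_1 \subseteq \summflow_2$, every reduced edge in $\newedge \circ \summflow_1$ has a counterpart in $\newedge \circ \summflow_2$. The sets \sscomposition{\newedge}{\summflow} and \tscomposition{\newedge}{\summflow} are each syntactically monotone in \summflow, and so therefore is \mlc when the same case of \slc (Definition~\ref{def:edge.reduction}) applies to both inputs. The main obstacle is the case where the single-level composition for $\summflow_1$ returns only an \sscomp or only a \tscomp result, while for $\summflow_2$ the additional edges activate the third case of \slc, causing $X \bowtie Y$ to be returned in place of $X \cup Y$. To handle this, I would show that the $\bowtie$ branch supplies a strictly more reduced edge for the same source/target pair: the partial reduction produced from $\summflow_1$ then appears in $\summflow_2$'s output either directly (after one more application of \mlc, since \mlc is a fixed-point computation over the entire set of edges) or as an intermediate edge that is subsequently consumed. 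I would argue that the fixed-point nature of \mlc absorbs this asymmetry, so that the final computed set for $\summflow_2$ includes the final computed set for $\summflow_1$.

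Putting the two stages together, for any reduced edge $\rededge \in \newedge \circ \csummflow(\epath,\flab,\slab)$, the cascade of compositions that produced \rededge uses only edges of $\csummflow(\epath,\flab,\slab)$, which by the first stage are all present in \asummflow(\flab,\slab); by the second stage, the same sequence of compositions (or a refinement of it) is enabled in $\newedge \circ \asummflow(\flab,\slab)$, yielding \rededge in that set. Taking the union over all $\epath \in \paths(\flab,\slab)$ gives the desired containment. The hardest part will be formalising the \bowtie-vs-union interaction cleanly; if a pure edge-set superset proves too strong, I would weaken the obligation to the semantic containment already needed for the final clause of Definition~\ref{def:sound.sff} (namely, $\llbracket \asummflow\rrbracket \amem \supseteq \bigcup_\epath \llbracket \csummflow\rrbracket \cmem$), which absorbs degree-of-reduction differences when edges are finally evaluated against memory.
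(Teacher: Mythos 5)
Your proposal reaches the same conclusion as the paper but by a far more explicit route: the paper's entire proof is the single observation that ``the rules of composition are the same for both \asummflow and \csummflow,'' from which it directly asserts the containment of reduced-edge sets. You instead decompose the claim into (i) edge-wise containment $\csummflow(\epath,\flab,\slab)\subseteq\asummflow(\flab,\slab)$, proved by induction using the under-approximation of kills, and (ii) monotonicity of edge reduction $\newedge\circ\summflow$ in its \gpg argument. This buys rigour the paper does not supply, and it surfaces two genuine subtleties that the paper's one-liner glosses over. First, the $\bowtie$-versus-union asymmetry in \slc is real: if along \epath only an \sscomp partner exists but \asummflow additionally supplies a \tscomp partner (contributed by another path), the abstract reduction returns $\sscomposition{\newedge}{\summflow}\bowtie\tscomposition{\newedge}{\summflow}$, whose edges need not literally contain the \sscomp-only results from the concrete reduction, so strict edge-set monotonicity can fail. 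Second, \conclusive{}ness is judged over all paths for \asummflow, so a composition that is \conclusive along \epath may be suppressed abstractly, leaving \newedge in place of \rededge --- again breaking literal set containment while preserving soundness. Your fallback of weakening the obligation to the semantic containment $\llbracket\asummflow\rrbracket\amem\supseteq\bigcup_\epath\llbracket\csummflow\rrbracket\cmem$ is exactly the right repair, since that is all the final soundness theorem consumes; the paper's stated lemma, read as a pure superset of edge sets, is really only true up to this semantic reading. In short, your argument is sound where the paper's is merely asserted, at the cost of having to formalise the $\bowtie$ interaction or retreat to the semantic statement.
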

\begin{proof}
Since the rules of composition are same for both \asummflow and \csummflow, 
it follows from Definition~\ref{def:edge.reduction} that,
\[
\newedge \circ \asummflow(\flab,\slab) \supseteq 
		\bigcup\limits_{\epath \in \paths(\flab,\slab)} 
		\newedge \circ \csummflow(\epath, \flab, \slab)
\]
\end{proof}

\begin{theorem}
(Soundness of \asummflow). Abstract summary flow function \text{$\asummflow(\flab,\slab)$} is a sound approximation of all 
concrete summary flow functions \text{$\csummflow(\epath,\flab,\slab)$}.
\end{theorem}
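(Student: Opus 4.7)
The plan is to discharge the third proof obligation in Definition~\ref{def:sound.sff}, namely
\[
\llbracket \asummflow(\flab,\slab) \rrbracket \amemflab \;\supseteq\; \bigcup_{\epath \in \paths(\flab,\slab)} \llbracket \csummflow(\epath,\flab,\slab) \rrbracket \cmemflab,
\]
by assembling the two preceding lemmas (Lemma~\ref{lemm:kill.comparision} on under-approximated kills and Lemma~\ref{lemm:gen.comparision} on over-approximated gens) with Theorem~\ref{thrm:summflow.equiv}. First, I would invoke Theorem~\ref{thrm:summflow.equiv} to rewrite the right-hand side as $\bigcup_{\epath} \cmemslab$, so that the obligation becomes the standard statement that the abstract memory at \slab over-approximates the path-wise union of concrete memories. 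At the entry point \flab, the corresponding base inclusion $\amemflab \supseteq \bigcup_{\epath} \cmemflab$ is assumed (or established by induction on the call structure for interprocedural soundness).

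Next, I would proceed by induction on the number $k$ of pointer-assignment statements covered by the sub-paths from \flab to \slab, mirroring the recursive construction of \asummflow in Definition~\ref{def:asummflow.construction}. The base case $k=0$ reduces \asummflow to $\asummflow_{id}$ and is immediate. For the inductive step, consider the last edge \newedge added to extend the transformer from some intermediate \tlab to \slab. Expand both sides using the update rewriting
$\llbracket \newedge \rrbracket \cmem = (\cmem - \newkill(\epath,\newedge,\cmem)) \cup \eval(\newedge,\cmem)$ and its abstract counterpart
$\llbracket \newedge, \asummflow \rrbracket \amem = (\amem - \memkill(\newedge,\amem,\asummflow)) \cup \eval(\newedge,\amem)$.
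Then apply Lemma~\ref{lemm:gen.comparision} to conclude that the abstract evaluation of \newedge contains the concrete evaluations along every \epath, and apply Lemma~\ref{lemm:kill.comparision} to conclude that the edges removed abstractly form a subset of the edges removed along every \epath. These two facts together with the inductive hypothesis at \tlab imply the required containment at \slab, since both set-difference and union are monotone in the appropriate directions.

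The main obstacle will be two subtleties in the inductive step. The first is handling \inconclusive compositions: when \text{\cop} cannot be soundly included, \asummflow retains \newedge rather than \rededge, so the edge count and shape of \asummflow do not directly mirror those of any single \csummflow. I would resolve this by arguing that retention of \newedge yields a still-sound (if less reduced) transformer whose evaluation on \amem subsumes that of any path-specific \rededge on the corresponding \cmem, exploiting Lemma~\ref{lemma:edge.eval.multistep}. The second is the strong-update condition hidden inside \memsup: because abstract strong updates require both $\singledef{\newedge}{\amem}$ and \mustedge{\newedge}{\asummflow}, the abstract kill is strictly weaker than a per-path kill unless both a single-pointer target and an all-paths definition are witnessed by the copy and aggregate edges of Section~\ref{sec:may.must.xxp.edges}. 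I would close this gap by a small lemma verifying that whenever \memsup holds, the set $\memkill(\newedge,\amem,\asummflow)$ is contained in $\newkill(\epath,\newedge,\cmem)$ for every \epath, which is exactly the conservative side of Lemma~\ref{lemm:kill.comparision}. With these two points settled, the monotonicity argument above yields the theorem.
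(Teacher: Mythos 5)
Your proposal is correct and follows essentially the same route as the paper: the paper's own proof is a two-line appeal to Lemma~\ref{lemm:kill.comparision} (kills under-approximated) and Lemma~\ref{lemm:gen.comparision} (gens over-approximated), which is exactly the core of your argument. The additional scaffolding you supply --- the induction on the number of statements, the explicit handling of \inconclusive compositions, and the check that \memsup-guarded kills stay inside every per-path \newkill --- fills in details the paper leaves implicit rather than departing from its approach.
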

\begin{align*}
\left\llbracket\, \asummflow(\flab,\slab)\right\rrbracket \amemflab
		\supseteq 
		\bigcup\limits_{\epath \in \paths(\flab,\slab)} \left\llbracket\, \csummflow(\epath,\flab,\slab)\right\rrbracket \cmemflab
\end{align*}
\begin{proof}
It follows because killing of points-to information is under-approximated 
(Lemma~\ref{lemm:kill.comparision})
and generation of points-to information is over-approximated
(Lemma~\ref{lemm:gen.comparision}).
\end{proof}



\mysection{Handling Advanced Features for Points-to Analysis using \gpg{}s}
\label{sec:level_3}

So far we have created the concept of \gpgs for (possibly transitive) pointers to scalars
allocated on stack or in the static area. This section extends the concepts to 
data structures created using C style \emph{struct} or \emph{union} and possibly 
allocated on heap too, apart from stack and static area.

We also show how to handle arrays, pointer arithmetic, and interprocedural analysis in the presence of function pointers.

\subsection{Handling Function Pointers}
\label{sec:handling_fp}

\begin{figure}[t]
\begin{center}
\setlength{\codeLineLength}{34mm}
\renewcommand{\arraystretch}{.9}
\begin{tabular}{c}
\begin{tabular}{cc}
	\begin{tabular}{rc}
	\codeLineOne{1}{0}{void f()}{white} 
	\codeLine{0}{\OB }{white}
	\codeLine{1}{fp = p;}{white}
	\codeLine{1}{x = \&a;}{white}
	\codeLine{1}{g(fp);}{white}
	\codeLine{1}{fp = q;}{white}
	\codeLine{1}{z = \&b;}{white}
	\codeLine{1}{g(fp);}{white}
	\codeLine{1}{z = \&c;}{white}
	\codeLine{1}{g(fp);}{white}
	\codeLine{0}{\CB}{white}
	\codeLine{0}{void g(fp)}{white}
	\codeLine{0}{\OB}{white}
	\codeLine{1}{fp();}{white}
	\codeLine{0}{\CB}{white}

	\end{tabular}
	&
	\begin{tabular}{@{}rc}
	\codeLine{0}{void p()}{white}
	\codeLine{0}{\OB}{white}
	\codeLine{1}{y = x;}{white}
	\codeLine{0}{\CB}{white}

	\codeLine{0}{void q()}{white} 
	\codeLine{0}{\OB }{white}
	\codeLine{1}{y = z;}{white}
	\codeLine{0}{\CB}{white}
	\end{tabular}
\end{tabular}
\end{tabular}
\caption{An example demonstrating the top-down traversal of call graph for handling function pointers.}
\label{fig:fp_eg}
\end{center}
\end{figure}

In the presence of indirect calls (eg. a call through a function pointer in C), the callee procedure is not 
known at compile time. 
In our case, construction of the \gpg of a procedure requires incorporating the effect of the \gpgs of all its callees. 
In the presence of indirect calls, we would not know the callees whose \gpgs should be used at an indirect call site.

\begin{figure}[t]
\centering
\setlength{\tabcolsep}{1.25mm}
\psset{arrowsize=1.5}
\begin{tabular}{|l|c|c|}
\hline
\begin{tabular}{c} Pointer assignment \end{tabular} & \begin{tabular}{c} \gpg edge \end{tabular} & \begin{tabular}{c} Alternative \gpg edge \end{tabular}
\\ \hline \hline
\rule{0em}{1.4em} 
$x = \text{new} \ldots$ & \deh{x}{[*]}{[*]}{h_0} & --- 
\\ \hline 
\rule{0em}{1.4em} 
$x = y.n$ & \deh{x}{[*]}{[*]}{y.n} & \deh{x}{[*]}{[n]}{y}
\\ \hline
\rule{0em}{1.4em} 
$x.n = y$ & \deh{x.n}{[*]}{[*]}{y} & \deh{x}{[n]}{[*]}{y}
\\ \hline
\rule{0em}{1.4em} 
$x = y \rightarrow n$ & \deh{x}{[*]}{[*,n]}{y} & ---
\\ \hline
\rule{0em}{1.4em} 
$x \rightarrow n = y$ & \deh{x}{[*,n]}{[*]}{y} & ---
\\ \hline
\end{tabular}
\caption{\gpg edges with indirection lists (\sindlist) for basic pointer assignments in C for structures and heap. $h_i$ is the heap location at the allocation site $i$. $*$ is the dereference operator.}
\label{fig:basic.gpg.edges.heap}
\end{figure}

If the function pointers are defined locally, their effect can be handled easily because the pointees of function pointers
would be available during \gpg construction. Consider the function pointers that are passed as parameters or global 
function pointers that are defined in the callers. A top-down interprocedural points-to analysis would be able to handle 
such function pointers naturally because the information flows from callers to callees and hence the pointees of 
function pointers would be known at the call sites. However, a bottom-up interprocedural analysis such as ours,
works in two phases and the information flows from 
\begin{itemize}
\item[$\bullet$] the callees to callers when \gpgs are constructed, and from
\item[$\bullet$] the callers to callees when \gpgs are used for computing the points-to information.
\end{itemize}
By default, the function pointer values are available only
in the second phase 
whereas
 they are actually required in the first phase.

A bottom-up approach requires that the summary of a callee procedure should be constructed before its calls in caller procedures
are processed. 
If a procedure $f$ calls procedure $g$,
 this requirement can be satisfied by beginning to construct the \gpg of $f$ before
that of $g$; when a call to $g$ is encountered, the \gpg construction of $f$ can be suspended and $g$ can be processed completely by constructing its \gpg before
resuming the \gpg construction of $f$.
Thus, we can traverse the call graph top-down and \emph{yet} construct bottom-up context independent summary flow functions.
We start the \gpg construction with the \text{\small\sf\em main} procedure and suspend the construction of its \gpg 
$\asummflow_{\text{\sf\em main}}$
when a call is encountered and then analyze the callee first. After the completion of construction of \gpg of the callee,
then the construction of $\asummflow_{\text{\sf\em main}}$ is resumed. Thus, the construction of \gpg of 
callees is completed before the construction of \gpg of their caller. 
In the process, we collect the pointees of function pointers along the way during the top-down traversal.
These values (i.e., only the function pointer values)
 from the calling contexts are used to build the \gpgs.

Observe that a \gpg so constructed is context independent for the rest of the pointers but 
is customized for a specific value of a function pointer
that is passed as a parameter or is defined globally.
In other words, a procedure with an indirect call should have different \gpgs for distinct 
values of function pointer for context-sensitivity. 
This is important because the call chains starting at a 
call through a function pointer in that procedure could be different. 

\newcommand{\fp}{\text{\em fp}\xspace}

\exmpbeg
In the example of Figure~\ref{fig:fp_eg}, we first analyze procedure $f$ as we traverse the call graph top-down and
suspend the construction of its \gpg at the call site at line number 05 to analyze its callee which is procedure
$g$. We construct a customized \gpg for procedure $g$ with $\fp = p$. The pointee information of $x$ is not used for \gpg construction of $g$. In procedure $g$, there is a call through function pointer whose value is $p$ as extracted from the calling context, we now suspend the \gpg construction of $g$ and the \gpg of $p$ is constructed first and its effect is incorporated in $g$ with $\asummflow = \{\de{y}{1,1}{x}\}$. We then resume with the \gpg construction of procedure $f$ by incorporating the effect of procedure $g$ at line number 05 which results in a reduced edge \de{y}{1,0}{a} by performing the required edge compositions.

At the call site at line number 07, procedure $g$ is analyzed again with a different value of $\fp$ and this time procedure $q$ is
the callee which is analyzed and whose effect is incorporated to construct \gpg for procedure $g$ with
$\asummflow = \{\de{y}{1,1}{z}\}$ for $\fp = q$. Note that procedure $g$ has two \gpgs constructed for different
values of function pointer $\fp$ so far encountered. However, procedure $p$ and $q$ has only one \gpg as they do
not have any calls through function pointers. At line number 07, $y$ now points to $b$ as $z$ points to $b$ (because $\asummflow_g
= \{\de{y}{1,1}{z}\}$ for $\fp = q$). 

The third call to $g$ at line number 10 does not require re-analysis of procedure $g$ as \gpg is already
constructed because value of $\fp$ is not changed. So \gpg of procedure $g$ 
$\asummflow_g = \{\de{y}{1,1}{z}\}$ for $\fp = q$ is reused at line number 10. The pointee of $y$ however is now $c$ as the pointee of $z$ has changed. 
\exmpend

\subsection{Handling Structures, Unions, and Heap Data}
\label{subsec:struct_heap}

\begin{figure}[t]
\begin{center}
\setlength{\codeLineLength}{54mm}
\renewcommand{\arraystretch}{.9}
\begin{tabular}{cc}
	\begin{tabular}{rc}
	\codeLineNoNumber{0}{struct node *x, *y;}{white}
	\codeLineNoNumber{0}{struct node z;}{white}
	\codeLineNoNumber{0}{}{white}
	\codeLineOne{1}{0}{struct node\OB}{white} 
	\codeLine{0}{\OB }{white}
	\codeLine{1}{struct node *m, *n;}{white}
	\codeLine{0}{\CB;}{white}
	\codeLineNoNumber{0}{}{white}
	\codeLine{0}{void f()}{white} 
	\codeLine{0}{\OB }{white}
	\codeLine{1}{x = malloc(...);}{white}
	\codeLine{1}{y = x;}{white}
	\codeLine{1}{w = y->n;}{white}
	\codeLine{1}{g();}{white}
	\codeLine{0}{\CB}{white}
	\end{tabular}
	&
	\begin{tabular}{@{}rc}
	\codeLineNoNumber{0}{}{white}
	\codeLineNoNumber{0}{}{white}
	\codeLineNoNumber{0}{}{white}
	\codeLine{0}{void g()}{white}
	\codeLine{0}{\OB}{white}
	\codeLine{1}{while(...) \OB}{white}
	\codeLine{2}{y = x->m;}{white}
	\codeLine{2}{x = y->n;}{white}
	\codeLine{1}{\CB}{white}
	\codeLine{1}{z.m = x;}{white}
	\codeLine{0}{\CB}{white}
	\end{tabular}
\end{tabular}
\caption{An example for modelling structures and heap.}
\label{fig:heap_eg}
\end{center}
\end{figure}

In this section, we describe the construction of \gpg{}s for 
pointers to structures, unions, and heap allocated data.
We use allocation site based abstraction for heap in which all locations allocated at a particular  allocation site 
are over-approximated and are treated alike.  This approximation allows us to handle the unbounded nature of heap as if it 
were bounded. 
However, since the allocation site might not be available during \gpg construction phase (because it could occur in the
callers), the heap accesses within a loop may remain unbounded and we need 
a
summarization technique to bound them. This 
section first introduces the concept of indirection lists (\indlist) for handling structures and heap accesses which is then
followed by an explanation of the summarization technique we have used.

Figure~\ref{fig:basic.gpg.edges.heap} illustrates the \gpg edges corresponding to the basic pointer assignments in C for structures and heap.
The \indlev ``$i,j$'' of an edge \de{x}{i,j}{y} represents $i$ dereferences of $x$ and $j$ 
dereferences of $y$. We can also view the \indlev ``$i,j$'' as lists (also referred to as indirection list or \indlist) containing the dereference
operator ($*$) of length $i$ and $j$.
This representation naturally allows
handling structures and heap field-sensitively by using indirection lists containing field dereferences. With this view,
we can represent the two statements at line numbers 08 and 09 in the example of Figure~\ref{fig:heap_eg} by \gpg edges in the following two ways:
\begin{itemize}
\item[$\bullet$] {\em Field-Sensitively.} \de{y}{[*],[*]}{x} and \de{w}{[*],[*,n]}{y}; field-sensitivity is achieved by enumerating the field dereferences.
\item[$\bullet$] {\em Field-Insensitively.} \de{y}{1,1}{x} and \de{w}{1,2}{y}; no distinction made between any field dereference.\footnote{This
does not matter for the first edge but matters for the second edge.}
\end{itemize}
The dereference 
in the pointer expression
$y\! \rightarrow\! n$ on line 09 is represented by an \indlist $[*,n]$
associated with the pointer variable $y$. 
On the other hand,
the access $z.m$  on line 18 
can be mapped to location by adding the offset of $m$ to the virtual address of $z$ at compile time. Hence, it
can be treated as a separate variable which is represented
by a node $z.m$ with an \indlist $[*]$ in the \gpg. 
We can also represent $z.m$ with a node $z$ and an \indlist $[m]$. 
For our implementation, we chose the former representation for $z.m$.
For structures and heap, we ensure field-sensitivity by maintaining \indlist in terms of field names.
Unions are handled similarly to structures.

Recall that an edge composition \cop involves balancing the \indlev of the pivot in \newedge and \prevedge. 
With \indlist  replacing \indlev, the operations remain similar in spirit although now they become operations on lists rather than
operations on numbers. To motivate the operations on \indlist, let us recall the operations on \indlev as illustrated in the following example.

\exmpbeg
Consider the example in Figure~\ref{fig:heap_eg}. 
Edge composition \cop requires balancing \indlev{\,}s of the pivot (Section~\ref{sec:edge.composition}) which involves 
computing the difference between the \indlev of the pivot in \newedge and \prevedge. This difference is then added to the \indlev of 
the non-pivot node in \newedge or \prevedge.  Recall that an
edge composition is useful (Section~\ref{sec:relevant.useful.ec})
only when the \indlev of the pivot in \newedge is greater than or equal to the \indlev of the pivot in \prevedge. 
Thus, in our example with 
$\prevedge \equiv \de{y}{1,1}{x}$ and $\newedge \equiv \de{w}{1,2}{y}$ with $y$ as pivot, an edge composition is useful because 
\indlev of $y$ in \newedge (which is 2) is greater than \indlev of $y$ in \prevedge (which is 1). The difference (2-1) is added to the \indlev of $x$ (which is 1)
resulting in an reduced edge $\rededge \equiv \de{w}{1,(2-1+1)}{x}$. 
\exmpend

Analogously we can define similar operations for \indlist. An edge composition is useful if 
the \indlist of the pivot in edge \prevedge is a prefix of the \indlist of the pivot in edge \newedge. In our example, the
\indlist of $y$ in \prevedge (which is $[*]$) is a prefix of the indlist of $y$ in \newedge (which is $[*,n]$) and hence the edge
composition is useful. 
The addition of the difference in the \indlev{\,}s of the pivot to the \indlev of one of the other two nodes is 
replaced by an append operation denoted by \text{\#}. 

The operation of computing the difference in the \indlev of the pivot is replaced by the remainder
operation $\rem: \indlist \times \indlist \to \indlist$ which takes two \indlist{\,}s as its arguments where first is a prefix of the 
second and returns the suffix 
of the second \indlist{}
that remains after removing the first \indlist from it.
Given \text{$il_2 = il_1 \;\text{\#}\; il_3$},
\text{$\rem(il_1, il_2) = il_3$}. 
Note that $il_3$ is $\epsilon$ when $il_1 = il_2$. Further $\rem(il_1,il_2)$ is not computed
when $il_1$ is not a prefix of $il_2$.

\exmpbeg
In our example, 
$\rem([*], [*,n])$ returns $[n]$ and this \indlist 
is appended to the \indlist of node $x$ (which is $[*]$) resulting in a new \indlist $[*]\;\text{\#}\; [n] = [*,n]$  and 
a reduced edge \de{w}{[*],[*,n]}{x}.
\exmpend

Under the allocation site based abstraction for heap, line number 07 of procedure $f$
can be viewed as a \gpg edge \de{x}{1,0}{\text{heap$_{07}$}} where \text{heap$_{07}$} is the heap location created at this allocation 
site.  We expect the heap to be bounded by this abstraction but the allocation site may not be available during the \gpg 
construction as is the case in our example where heap is accessed through pointers $x$ and $y$ in a loop in procedure $g$ whereas 
allocation site is available in procedure $f$ at line 07. 

\exmpbeg
The fixed point computation for the loop in procedure $g$ will never terminate as the length of the indirection list keeps on increasing. In the first iteration of the loop, at its exit, the edge composition results into a reduced edge \de{x}{[*],[*,m,n]}{y}. In the next iteration, the reduced edge is now \de{x}{[*],[*,m,n,m,n]}{y} indicating the access pattern of heap. This continues as the length of the indirection list keeps on increasing leading to a non-terminating sequence of computations. Heap access where the allocation site is locally available does not face this problem of non-termination. 
\exmpend

This indicates the need of a
summarization technique. We bound the indirection lists by $k$-limiting technique 
which limits the length of indirection lists upto $k$ dereferences.
All dereferences beyond $k$ are treated as an unbounded number of field insensitive 
dereferences.

Note that an explicit summarization is required only for heap locations and not for stack locations because the \indlist{\,}s can
grow without bound only for heap locations.

\subsection{Using SSA Form for Compact \gpgs}
\label{sec:ssa_form}

Although the Static Single Assignment (SSA) form is not a language feature, it is ubiquitous in any real IR of 
practical programs. In this section we show how we have used the SSA productively to make our analysis more efficient and
construct compact \gpgs.

SSA form makes use-def chains explicit in the  IR because every use has exactly one definition reaching it and every definition
dominates all its uses. 
Thus for every local non-address taken variable, we traverse the SSA chains transitively until we reach 
a statement whose right hand side has an address taken variable, a global variable, or a formal parameter. 
In the process, all definitions involving SSA variables on the left hand side are skipped.

\exmpbeg
Consider the code snippet in its SSA form on the right. The \gpg edge \de{x\_1}{1,0}{a}
\setlength{\intextsep}{-.4mm}%
\setlength{\columnsep}{2mm}%
\begin{wrapfigure}{r}{27.8mm}
\renewcommand{\arraystretch}{.9}%
$
\setlength{\arraycolsep}{3pt}
\begin{array}{|lrcl|}
\hline
\rule{0em}{0.85em}
s_1:& x\_1 &= & \&a;
	\\
s_2:& y &= & x\_1;
	\\ \hline
\end{array}
$
\end{wrapfigure}
corresponding to 
statement $s_1$ is not added to the \gpg. 
Statement $s_2$ defines a global pointer $y$ which is assigned the pointee of $x\_1$ (use of $x\_1$). 
The explicit use of use-def chain helps to identify the pointee of $x\_1$ even though 
there is no corresponding edge in the \gpg. 
SSA resolution leads to an edge \de{y}{1,0}{a} which is the desired result, also indicating the fact that SSA resolution is similar to edge composition.
\exmpend

The use of SSA has the following two advantages:
\begin{itemize}
\item[$\bullet$] The \gpg size is small because local variables are eliminated. 
\item[$\bullet$] No special filtering required for eliminating local variables from the summary flow function of a procedure. 
      These local variables are not in the scope of the callers and hence should be eliminated before a summary flow function 
	is used at its call sites.
\end{itemize}
Both of them aid efficiency.

\subsection{Handling Arrays, Pointer Arithmetic, and Address Escaping Locals}
\label{sec:arr-arith}

An array is treated as a single variable in the following sense: Accessing
a particular element is seen as accessing every possible element and updates are
treated as weak updates. This applies to both the situations: when arrays of pointers are manipulated, as well as when arrays are
accessed through pointers.
Since there is no kill, arrays are maintained flow-insensitively by our analysis.

For pointer arithmetic, we approximate the pointer being defined to point to every
possible location.
All address taken local variables in a procedure
 are treated as global variables because they can the escape the scope of the procedure. However, these variables are not strongly updated because
they could represent multiple locations.

\begin{landscape}
\begin{figure}[t]
\small
\centering
\setlength{\tabcolsep}{1.84pt}
\renewcommand{\arraystretch}{.9}

\begin{tabular}{|l|r|r|r|r|r|r|r
	|r|r|r|r|r|r|r|r|}
\hline
\multicolumn{1}{|c|}{\multirow{4}{*}{Program}}
	& \multirow{4}{*}{ kLoC }
	& \multicolumn{1}{c|}{\multirow{2}{*}{\# of}}
	& \multicolumn{5}{c|}{Time for \gpg\ {\revTwo based approach (in seconds)}}
	& \multicolumn{5}{c|}{Avg. \# of pointees per pointer \rule{0em}{1em}}
	& \multicolumn{3}{c|}{%
		\renewcommand{\arraystretch}{.8}%
			\begin{tabular}{@{}c@{}} 
		Avg. \# of pointees 
					\end{tabular}}
	\\ \cline{4-13}
	\rule{0em}{.9em}%
	&	
	& \multicolumn{1}{c|}{\multirow{2}{*}{pointer}}
	& \multirow{2}{*}{\gpg}
	& \multicolumn{4}{c|}{computing points-to info}
	& \multicolumn{3}{c|}{\gpg}
	& \multicolumn{1}{c|}{GCC}
	& \multicolumn{1}{c|}{LFCPA}
	& \multicolumn{3}{c|}{%
		\renewcommand{\arraystretch}{.8}%
			\begin{tabular}{@{}c@{}} 
	per dereference
					\end{tabular}}
	\\ \cline{5-16}
	\rule{0em}{.9em}%
	&	
	& \multicolumn{1}{c|}{stmts}
	& Constr.
	& \renewcommand{\arraystretch}{.8}%
			\begin{tabular}{@{}c@{}} 
	\gpg \\ NoByp
					\end{tabular}
	& \renewcommand{\arraystretch}{.8}%
			\begin{tabular}{@{}c@{}} 
	\gpg \\ Byp
					\end{tabular}
	& \renewcommand{\arraystretch}{.8}%
			\begin{tabular}{@{}c@{}} 
	\emph{Stmt-ff} \\ NoByp
					\end{tabular}
	& \renewcommand{\arraystretch}{.8}%
			\begin{tabular}{@{}c@{}} 
	\emph{Stmt-ff} \\ Byp
					\end{tabular}
& \renewcommand{\arraystretch}{.8}%
	\rule[-.85em]{0em}{2.25em}%
			\begin{tabular}{@{}c@{}} 
		G/NoByp
		\\
		(per stmt)
					\end{tabular}
& \renewcommand{\arraystretch}{.8}%
			\begin{tabular}{@{}c@{}} 
		G/Byp
		\\
		(per stmt)
					\end{tabular}
& \renewcommand{\arraystretch}{.8}%
			\begin{tabular}{@{}c@{}} 
		L+Arr
		\\
		(per proc)
					\end{tabular}
& \renewcommand{\arraystretch}{.8}%
			\begin{tabular}{@{}c@{}} 
		G+L+Arr
		\\
		(per proc)
					\end{tabular}
& \renewcommand{\arraystretch}{.8}%
			\begin{tabular}{@{}c@{}} 
		G+L+Arr
		\\
		(per stmt)
					\end{tabular}
& \gpg 
& GCC  
&  LFCPA
\\ \hline
\rule[-.1em]{0em}{1em} &
\multicolumn{1}{c}{$A$} & \multicolumn{1}{|c}{$B$} & \multicolumn{1}{|c}{$C$} & \multicolumn{1}{|c}{$D$} & \multicolumn{1}{|c}{$E$} & \multicolumn{1}{|c}{$F$} & \multicolumn{1}{|c}{$G$} & \multicolumn{1}{|c}{$H$} & \multicolumn{1}{|c}{$I$} & \multicolumn{1}{|c}{$J$} & \multicolumn{1}{|c}{$K$} & \multicolumn{1}{|c}{$L$} & \multicolumn{1}{|c}{$M$} & \multicolumn{1}{|c}{$N$} & \multicolumn{1}{|c|}{$O$}  
\\ \hline\hline
\input{revised-new-time-density}
\end{tabular}

\smallskip 
\setlength{\tabcolsep}{1.45pt}


\begin{tabular}{|l|r|r|r|r|r|r|r|r| 		
		 r|r|r|r|r|r|			
		 r|r|r|r|r|r|r|r| 		
                 c|}				
\hline
\multirow{2}{*}{
		\begin{tabular}{@{}c@{}} \\ Program \\ \end{tabular}
		}
	&  \multirow{3}{*}{
		\renewcommand{\arraystretch}{.8}%
		\begin{tabular}{@{}c@{}} \# of \\ call \\ sites \end{tabular}
		}
	&  \multirow{4}{*}{
		\renewcommand{\arraystretch}{.8}%
	  	\begin{tabular}{@{}c@{}} \# of \\ procs.\end{tabular}
		}
	& \multicolumn{4}{c|}{
		{
		\renewcommand{\arraystretch}{.8}%
		\begin{tabular}{@{}c@{}} 
		Proc. count for \\ different buckets of \\ \# of calls 
			\end{tabular}}
		}
 	& \multicolumn{6}{c|}{
		\renewcommand{\arraystretch}{.8}%
		\begin{tabular}{@{}c@{}} 
		\rule{0em}{.9em}
			\# of procs. requiring different \\
			no. of PTFs  based on the  \\ no. of aliasing patterns
		\end{tabular}}
 	&  \multicolumn{6}{c|}{
	 	\multirow{2}{*}{
		\renewcommand{\arraystretch}{.8}%
			\begin{tabular}{@{}c@{}} \# of procs. for different 
					\\
					sizes of \gpg  in terms 
				\\
				of the number of edges
				\end{tabular}}
		}
	& \multicolumn{4}{c|}{
		{
		\renewcommand{\arraystretch}{.8}%
			\begin{tabular}{@{}c@{}} \# of procs. for
					\\ different \% of 
					 context  \\ ind. info. 
					\end{tabular}}
		}
	& 
	 	\multirow{2}{*}{
		\renewcommand{\arraystretch}{.8}%
			\begin{tabular}{@{}c@{}} 
					\#  of 
					\\
					 inconclusive
					\\
					compositions \end{tabular}}
\\ \cline{8-13}
	&
	&
	& \multicolumn{4}{c|}{ (reuse of \gpgs)}
 	& \multicolumn{4}{c|}{\rule{0em}{.9em}Actually observed} 
	& \multicolumn{2}{c|}{Predicted} 
	& \multicolumn{6}{c|}{}
	& \multicolumn{4}{c|}{ (for non-empty \gpgs)}
	& \multicolumn{1}{c|}{}
\\ \cline{4-23}
\rule{0em}{1em}
	& & & 2-5 & 5-10 & 10-20 & 20+
 	& 2-5 & 6-10 & 11-15 & 15+ & 2-5 & 15+
	& 0 & 1-2 & 3-4 & 5-8 & 9-50 & 50+ & $<\!20$ & 20-40 & 40-60 & 60+
	& 
\\ \hline
\rule[-.1em]{0em}{1em} &
\multicolumn{1}{c}{$P$} & \multicolumn{1}{|c}{$Q$} & \multicolumn{4}{|c}{$R$} & \multicolumn{4}{|c}{$S$} & \multicolumn{2}{|c}{$T$} & \multicolumn{6}{|c}{$U$} & \multicolumn{4}{|c}{$V$} & \multicolumn{1}{|c|}{$W$} 
\\ \hline\hline
\input{revised-statistics}
\end{tabular}

\smallskip

\caption{Time, precision, size, and effectiveness measurements for \gpg Based Points-to Analysis. Byp (Bypassing), NoByp (No Bypassing), \emph{Stmt-ff} (Statement-level flow functions), G (Global pointers), L (Local pointers), Arr (Array pointers).}
\label{fig:stats}
\end{figure}

\end{landscape}



\mysection{Implementation and Measurements}
\label{sec:measurements} 

We have implemented \gpg based points-to analysis in GCC 4.7.2 using the 
LTO framework and have carried out measurements on 
SPEC CPU2006 benchmarks on a machine with 16 GB RAM with 8 64-bit Intel i7-4770 CPUs running at 3.40GHz.
Figure~\ref{fig:stats} provides the empirical data.

Our method eliminates local variables
using the SSA form and \gpgs are computed only for global variables. Eventually, the points-to information
for local variables is computed from that of global variables and parameters.
Heap memory is approximated by maintaining indirection lists of field dereferences of length 2 (see Section~\ref{subsec:struct_heap}).
Unlike the conventional approaches~\cite{ptf,yorsh.ipdfa,summ2}, our 
summary flow functions
do not depend on aliasing 
at the call points. 
The actually observed number of aliasing patterns 
(column $S$ in Figure~\ref{fig:stats}) 
suggests that it is undesirable to indiscriminately construct
multiple PTFs for a procedure.

Columns $A$, $B$, $P$, and $Q$ in Figure~\ref{fig:stats}
present the details of the benchmarks.
Column $C$ provides the time required for the first phase of our analysis i.e., computing \gpgs. The computation of points-to information at each program point has four variants 
(using \gpgs or \emph{Stmt-ff} with or without bypassing).
Their time measurements are provided in columns $D$, $E$, $F$, and $G$. Our data indicates that the most efficient method for computing points-to information is to use statement-level flow functions and bypassing (column $G$).

Our analysis computes points-to information flow-sensitively for globals. The following points-to information is stored flow-insensitively: locals (because they are in the SSA form) and
arrays (because their updates are conservative). 
Hence, we have separate columns for globals (columns $H$ and $I$) and \text{locals+arrays} 
(column $J$) for \gpgs. 
GCC-PTA computes points-to information flow-insensitively (column $K$) whereas LFCPA computes it flow-sensitively (column $L$).

The second table provides measurements about the effectiveness of summary flow functions
in terms of
\begin{inparaenum}[\em (a)]
\item compactness of \gpgs,
\item percentage of context independent information, and 
\item reusability. 
Column $U$ shows that  \gpgs for a large number of procedures have 0 edges because they do not manipulate global pointers.
Besides, in six out of nine benchmarks, most procedures with non-empty \gpgs have a significantly high percentage of 
context independent information (column $V$).
Thus a top-down approach may involve redundant computations on multiple visits to a procedure whereas 
a bottom-up approach 
may not need much work for incorporating the effect of a callee's \gpg into that of its callers.
Further, many procedures are called multiple times indicating a high reuse of \gpgs (column $R$). 
\end{inparaenum}



Interestingly, computing points-to information using summary
flow functions seems to take much more time than constructing
summary flow functions. As discussed in Section~\ref{sec:dfv_compute}, computing points-to information at every program 
point within a procedure using the \text{\sf\em BI} of the procedure and the summary flow function (\asummflow) is expensive because of the
cumulative effect of the \asummflow. The time measurements (columns $F$ and $G$) confirm the observation that the
application of statement-level flow functions is much more efficient than the application of \gpgs for computing points-to information at every program point.
These measurements also highlight the gain in efficiency achieved because of the bypassing technique~\cite{hakjoo2,hakjoo1}.
Bypassing technique helps to reduce the size of the \text{\sf\em BI} of a procedure by propagating only the relevant information. 

The effectiveness of bypassing is evident from the time measurements (columns $E$ and $G$) as well as a reduction in the average number of points-to pairs (column $I$).
We have applied the bypassing technique only to the flow-sensitive points-to information.

We have compared our analysis with GCC-PTA and LFCPA~\cite{lfcpa}.
The number of points-to pairs per function for GCC-PTA (column $K$) is large because it is partially flow-sensitive (because of the SSA form) and context-insensitive. The number of points-to pairs per statements is
much smaller for LFCPA (column $L$) because it is liveness-based. However LFCPA  which in our opinion 
represents the state of the art in fully flow- and context-sensitive exhaustive points-to analysis,
does not seem to scale beyond 35 kLoC. 
We have computed the average number of pointees of dereferenced variables 
which is maximum for GCC-PTA (column $N$) and minimum for LFCPA (column $O$) because it is liveness driven.
The points-to information computed by these methods is incomparable because
they employ radically dissimilar features of points-to information such as flow- and context-sensitivity, liveness, and bypassing.


\mysection{Related Work}
\label{sec:related.work}

In this section, we briefly review the literature related to flow- and context-sensitive analyses.
As described earlier in Section~\ref{intro}, a context-sensitive interprocedural analysis may visit the procedures in a program by traversing its call graph top-down or bottom-up. 
A top-down   approach  propagates
the  information  from  callers to  callees~\cite{summ2}.  
 In the process, it 
analyzes   a   procedure   each   time    a   new   data   flow   value
reaches a procedure from some call. Since the information is propagated from callers to callees, all information that may be required for analyzing a procedure is readily available.
A bottom-up approach, on the other hand, avoids  analyzing procedures  multiple  times by  constructing {\em  summary
flow  functions\/} which  are used in  the calling  contexts to incorporate the effect of procedure calls. 
Since the callers' information is not available, analyzing a procedure requires a convenient 
encoding of accesses of variables which are defined in the caller procedures. The effectiveness of a bottom-up approach crucially
depends on the choice of representation of procedure summaries. For some analyses, the choice of representation is not obvious.
In the absence of pointers, procedure summaries for bit-vector frameworks
can be easily represented by \text{\sf\em Gen} and \text{\sf\em Kill} sets whose computation does not require
any information from the calling context~\cite{dfa_book}. In the presence of pointers, the
representation needs to model unknown locations indirectly accessed through pointers that may have been defined in the
callers. 

Section~\ref{sec:motivation} introduced
two broad categories of 
constructing summary flow functions for points-to analysis. 
Some methods using placeholders require aliasing information in the calling contexts and 
construct multiple summary flow functions per procedure~\cite{ptf,summ2}.
Other methods 
do not make any assumptions about the calling contexts~\cite{value.graph,summ1,Shang:2012:ODS:2259016.2259050,purity1,Whaley}
but they construct larger summary flow functions
causing inefficiency in fixed point computation at the intraprocedural
level thereby prohibiting 
flow-sensitivity for scalability. Also, these methods cannot 
 perform strong updates  thereby losing precision.

Among the general frameworks for constructing procedure summaries,
the formalism proposed by
Sharir and Pnueli~\cite{sharir.pnueli} is limited to finite lattices of
data flow values. It was implemented using graph reachability in~\cite{Naeem:2010:PEI:2175462.2175474,graph_reach,reps.ide}. 
A general technique for constructing procedure summaries~\cite{GulwaniTiwari} 
has been applied 
to unary uninterpreted functions and linear arithmetic. However, the program model does not include pointers.

Symbolic procedure summaries~\cite{ptf,yorsh.ipdfa} involve computing 
preconditions and corresponding postconditions (in terms of aliases).
A calling context is matched against a precondition
and the corresponding postcondition gives the result. However, 
the number of calling contexts in a program could be unbounded hence 
constructing summaries for all calling contexts could lose scalability. 
This method requires statement-level transformers to be closed under 
composition; a requirement which is not satisfied by points-to analysis (as mentioned in Section~\ref{sec:motivation}).
We overcome this problem using generalized points-to facts.
Saturn~\cite{Saturn} also creates summaries that are sound but may not be precise across applications because they depend on
context information.

Some approaches use customized summaries and
combine the top-down and bottom-up analyses to construct summaries for only
those calling contexts that occur in a given program~\cite{summ2}.
This choice is controlled by
the number of times a procedure is called. If this number 
exceeds a fixed threshold, a summary is constructed using the information of the calling contexts
that have been recorded for that procedure. A new calling context may 
lead to generating a new precondition and hence a new summary.

\gpgs handle function pointers efficiently and precisely by traversing the call graph top-down and yet construct bottom-up summary flow functions (see Section~\ref{sec:handling_fp}). 
The conventional approaches~\cite{summ1,purity1,Whaley} perform type analysis for identifying the callee procedures for indirect calls through function pointers.
All functions matching the type of a given function pointer are conservatively considered as potential callees thereby over-approximating the call graph
significantly. The PTF approach~\cite{ptf} suspends the summary construction when it encounters an indirect call and traverse the call graph bottom-up until all pointees of the function pointer are discovered. 

Although \gpgs use allocation-site based heap abstraction, they additionally need $k$-limiting summarization as explained in
Section~\ref{subsec:struct_heap}. The approaches~\cite{summ1,purity1,Whaley,ptf} use allocation-site based heap abstraction. Since
they use as many placeholders as required explicating each location in a pointee chain, they do not require $k$-limiting summarization.

\subsection*{On the use of graphs for representing summary flow functions}

Observe that all approaches that we have seen so far use graphs to represent summary flow functions of a procedure. 
The interprocedural analysis via graph reachability~\cite{graph_reach} also represents a flow function 
using a graph. Let $D$ denote the set of data flow values. Then a flow function
\text{$2^D \to 2^D$} is modelled as a set of functions \text{$D \to D$} and is represented
using a graph containing $2\!\cdot\!|D|$ nodes and
at most $\left(|D|+1\right)^2$ edges. Each
edge maps a value in $D$ to a value in $D$; this is very convenient because
the function composition simply reduces to traversing 
a path created by adjacent edges, therefore the term reachability.
Also, the meet operation on data flow values now reduces to the meet on the edges of the graph.

A graph representation is appropriate for a summary flow function only if 
each edge in the graph has its independent effect irrespective of the other edges in the graph.
Graph reachability ensures this by requiring the flow functions to be distributive: If
a function
\text{$2^D \to 2^D$} distributes over a meet operator $\sqcap$ then it can be modelled 
as a set of unary functions \text{$D \to D$}.
However, graph reachability can also model some non-distributive flow functions.
Consider a flow function for a statement \text{$y = x\%4$} for
constant propagation framework. This function does not distribute over $\sqcap$ defined for the usual constant 
propagation lattice~\cite{dfa_book} because \text{$f(10 \sqcap 6) = f(\bot) = \bot$} whereas
\text{$f(10) \sqcap f(6) = 2 \sqcap 2 = 2$}. However, this function
can be represented by an edge in a graph
from  $x$ to $y$. Thus distributivity is a sufficient requirement for graph reachability but is not necessary. The necessary 
condition for graph reachability is that a flow function should be representable in terms of a collection of unary flow functions.

Representing a pointer assignment $*x = y$ requires modelling the pointees of $x$ as well as $y$. With the classical points-to
relations, this function does not remain a unary function. Similarly, a statement $x=*y$ requires modelling
the pointees of pointees of $y$ and this function too does not remain a unary function with classical points-to relations.
It is for this reason that the state of the art uses placeholders to represent unknown locations, 
such as pointees of $x$ and $y$ in this case. Use of place holders allows modelling the functions
for statements $*x = y$ or $x=*y$ in terms of a collection of unary flow functions facilitating the use of graphs in which 
edge can have its own well defined independent effect. \gpgs uses \indlev with the edges
to represent pointer indirections and hence, model the effect of pointer assignments in terms of unary flow functions. 

Graph reachability fails to represent indirect accesses through pointers. 

\mysection{Conclusions and Future Work}
\label{sec:conclusions}

Constructing bounded summary flow functions for flow- and context-sensitive points-to
analysis seems hard because it requires modelling unknown locations accessed
indirectly through pointers---a callee procedure's summary flow function is created
without looking at the statements in the caller procedures. 
Conventionally, they have been modelled using placeholders. However, a 
fundamental problem with the  placeholders is that they explicate the unknown locations
by naming them. This
results in either 
\begin{inparaenum}[\em (a)]
\item a large number of placeholders, or
\item multiple summary flow functions for different aliasing patterns in the calling contexts. 
\end{inparaenum}
We propose the concept of 
generalized points-to graph (\gpg) whose edges track indirection levels and represent 
generalized points-to facts.
A simple arithmetic on indirection levels
allows composing generalized points-to facts to create
new generalized points-to facts with smaller indirection levels;
this reduces them progressively 
to classical points-to facts.
Since unknown locations are left implicit, no information about aliasing patterns in the
calling contexts is required  { allowing us to construct a single
\gpg per procedure. \gpgs are} linearly bounded by the number of variables, { are} flow-sensitive, and { are able to} perform
strong updates { within} calling contexts.
{ Further, \gpgs inherently support bypassing of irrelevant 
points-to information thereby aiding scalability significantly.}

Our measurements on SPEC benchmarks show that \gpgs are small enough to
scale { fully} flow- and context-sensitive { exhaustive} points-to analysis to programs
as large as 158 kLoC { (as compared to 35 kLoC of LFCPA~\cite{lfcpa})}.
{ We expect to scale the method to still larger programs by 
\begin{inparaenum}[\em (a)]
\item using memoisation, and
\item constructing and applying \gpgs incrementally thereby eliminating redundancies within fixed point computations.
\end{inparaenum}}

Observe that a \gpg edge \de{x}{i,j}{y} in \mem also asserts an alias relation between 
\text{$\mem^{i} \{x\}$} and \text{$\mem^{j} \{y\}$} and hence \gpgs generalize both points-to and alias relations.

The concept of \gpg provides a useful 
abstraction of memory involving pointers.
The way matrices represent values as well as transformations, 
\gpgs represent memory as well as memory transformers defined in 
   terms of loading, storing, and copying memory addresses. 
Any analysis
   that is influenced by these operations { may be able to} use \gpgs{} { by} combining them 
   with { the original abstractions of the analysis. We plan to explore this direction in the future.}

In presence of pointers, current analyses use externally supplied points-to
information. Even if this information
is computed context-sensitively, its
use by other analyses is context-insensitive because at the end of the points-to analysis, the points-to information
 is conflated across all contexts at a given program point.
\gpgs on the other hand, allows other analyses to use
points-to information that is valid for each context separately by performing joint analyses.
Observe that joint context-sensitive analyses may be more precise than two separately context-sensitive cascaded analyses. 
We also plan to explore this direction of work in future.

\section*{Acknowledgments.}
The paper has benefited from the feedback of many people; in particular, 
Supratik Chakraborty and Sriram Srinivasan gave excellent suggestions for improving the accessibility of the paper.
Our ideas have also benefited from 
discussions with Amitabha Sanyal, Supratim Biswas, and Venkatesh Chopella.
The seeds of \gpgs were explored in a very different form in the Master's thesis
of Shubhangi Agrawal in 2010.

\nocite{DBLP:conf/sas/NystromKH04}

\bibliographystyle{plain}
\bibliography{toplas-hrgpta}

\end{document}